\title{Coordinated Motion Planning: Multi-Agent Path Finding in a Densely Packed, Bounded Domain}
\titlerunning{Multi-Agent Path Finding in a Densely Packed, Bounded Domain}
\author{Sándor P. Fekete}{Department of Computer Science, TU Braunschweig, Braunschweig, Germany}{s.fekete@tu-bs.de}{https://orcid.org/0000-0002-9062-4241}{}
\author{Ramin Kosfeld}{Department of Computer Science, TU Braunschweig, Braunschweig, Germany}{r.kosfeld@tu-bs.de}{https://orcid.org/0000-0002-1081-2454}{}
\author{Peter Kramer}{Department of Computer Science, TU Braunschweig, Braunschweig, Germany}{p.kramer@tu-bs.de}{https://orcid.org/0000-0001-9635-5890}{}
\author{Jonas Neutzner}{Department of Computer Science, TU Braunschweig, Braunschweig, Germany}{j.neutzner@tu-bs.de}{}{}
\author{Christian Rieck}{Department of Computer Science, TU Braunschweig, Braunschweig, Germany}{rieck@ibr.cs.tu-bs.de}{https://orcid.org/0000-0003-0846-5163}{}
\author{Christian Scheffer}{Department of Electrical Engineering and Computer Science, Bochum University of Applied Sciences, Bochum, Germany}{christian.scheffer@hs-bochum.de}{https://orcid.org/0000-0002-3471-2706}{}
\authorrunning{S. P. Fekete, R. Kosfeld, P. Kramer, J. Neutzner, C. Rieck, and C. Scheffer}
\keywords{multi-agent path finding, coordinated motion planning, bounded stretch, makespan, swarm robotics, reconfigurability, parallel sorting}
\begin{document}

    \maketitle

    \DeclarePairedDelimiter\ceil{\lceil}{\rceil}
\DeclarePairedDelimiter\floor{\lfloor}{\rfloor}

\newcommand{\configurations}{\mathcal{C}}
\newcommand{\BigO}{\mathcal{O}}
\newcommand{\smallo}{\text{\scalebox{.7}{$\mathcal{O}$}}}

\newcommand{\diam}{\ifmmode d\else diameter\xspace\fi}
\newcommand{\applicable}{applicable\xspace}
\newcommand{\bottleneck}{\ifmmode\zeta\else bottleneck\xspace\fi}
\newcommand{\Bottleneck}{\ifmmode\zeta\else Bottleneck\xspace\fi}
\newcommand{\bottleneckfrac}[2][P]{\ifmmode\nicefrac{\bottleneck(#1)}{#2}\else$\nicefrac{\bottleneck(#1)}{#2}$\fi}
\newcommand{\magicConstant}{\tau}

\newcommand{\depth}{\ifmmode\mu\else depth\xspace\fi}
\newcommand{\Depth}{\ifmmode\mu\else Depth\xspace\fi}
\newcommand{\scale}{\ifmmode{c}\else scale\xspace\fi}
\newcommand{\dual}{G}

\newcommand{\Hinterland}{\ifmmode\bottleneckfrac{4}\else Watershed\xspace\fi}
\newcommand{\hinterland}{\ifmmode\bottleneckfrac{4}\else watershed\xspace\fi}
\newcommand{\skeleton}{\ifmmode\lambda\else skeleton\xspace\fi}
\newcommand{\Skeleton}{\ifmmode\lambda\else Skeleton\xspace\fi}
\newcommand{\bottleneckHalfCover}{\square_{2\skeleton}}
\newcommand{\tile}{tile\xspace}
\newcommand{\tiles}{tiles\xspace}

\newcommand{\rotatesort}{\textsc{RotateSort}\xspace}
\newcommand{\bubblesort}{\textsc{Parallel BubbleSort}\xspace}
\newcommand{\solvable}{universally reconfigurable\xspace}
\newcommand{\solvability}{universal reconfigurability\xspace}
    \begin{abstract}
        We study \textsc{Multi-Agent Path Finding} for arrangements of labeled agents in the interior of a simply connected domain:
        Given a unique start and target position for each agent, the goal is to find a sequence of parallel, collision-free agent motions that minimizes the overall time (the~\emph{makespan}) until all agents have reached their respective targets.
        A natural case is that of a simply connected~polygonal domain with axis-parallel boundaries and integer coordinates, i.e., a~\emph{simple polyomino}, which amounts to a simply connected union of lattice unit squares or \emph{cells}.
        We focus on the particularly challenging setting of densely packed agents, i.e., one per cell, which strongly restricts the mobility of agents, and requires intricate coordination of motion.

        We provide a variety of novel results for this problem, including
        (1) a characterization of polyominoes in which a reconfiguration plan is guaranteed to exist;
        (2) a characterization of shape parameters that induce worst-case bounds on the makespan;
        (3) a suite of algorithms to achieve asymptotically worst-case optimal performance with respect to the achievable \emph{stretch} for cases with severely limited maneuverability.
        This corresponds to bounding the ratio between obtained makespan and the lower bound provided by the max-min distance between the start and target position of any agent and our shape parameters.

        Our results extend findings by Demaine et al.~\cite{dfk+-cmprs-18,dfk+-cmprs-19} who investigated the problem for solid rectangular domains, and in the closely related field of \textsc{Permutation Routing}, as presented by Alpert et al.~\cite{ALPERT2022101862} for convex pieces of grid graphs.
    \end{abstract}
    \newpage

    % !TeX root = 00-main.tex
\section{Introduction}
\label{sec:introduction}
Problems of coordinating the motion of a set of objects occur in a wide range of applications, such as warehouses~\cite{wurman2008coordinating}, multi-agent motion planning~\cite{rubenstein2014programmable,sw-sr-08}, and aerial swarm robotics~\cite{cpdsk-saesr-18}.
In \textsc{Multi-Agent Path Finding} (MAPF)~\cite{stern2019multi}, we are given a set of agents, each with an initial and a desired target position within a certain domain.
The task  is to determine a coordinated motion plan: a sequence of parallel, collision-free movements such that the time by which all agents have reached their destinations (the \emph{makespan}) is minimized.

Theoretical aspects of MAPF have enjoyed significant attention.
In the early days of computational geometry, Schwartz and Sharir~\cite{ss-pmpcbpb-83} developed methods for coordinating the motion of disk-shaped objects between obstacles, with runtime polynomial in the complexity of the obstacles, but exponential in the number of disks.
The fundamental difficulty of geometric MAPF was highlighted by Hopcroft et al.~\cite{hss-cmpmio-84,hw-rmompgs-86}, who showed that it is \PSPACE\nobreakdash-complete to decide whether multiple agents can reach a given target configuration.
In contrast, closely related graph-based variants of the MAPF problem permit for linear time algorithms for the same decision problem~\cite{yu2015pebble}.

More recently, Demaine et al.~\cite{bfk+-cmpv-18,dfk+-cmprs-18,dfk+-cmprs-19} have provided methods to compute \emph{constant stretch} solutions for coordinated motion planning in unbounded environments in which agents occupy distinct grid cells.
The stretch of a solution is defined as the ratio between its makespan and a trivial lower bound, the \emph{\diam} $\diam$, which refers to maximum distance between any agent's origin and destination.
Their work therefore obtains collision-free motion schedules that move each agent to its target position in $\BigO(\diam)$ discrete moves, which corresponds to a constant-factor approximation.
However, their methods assume the absence of a environmental boundary that may impose external constraints on the agents' movements.%

\subsection{Our contributions}
\label{subsec:our-contributions}
In this paper, we study \textsc{Multi-Agent Path Finding} for densely packed arrangements of labeled agents that are required to remain within discrete grid domains, i.e., polyominoes.
This is a natural constraint that occurs in many important applications, but provides considerable additional difficulties; in particular, a coordinated motion
plan may no longer exist for domains with narrow bottlenecks.
We provide a variety of novel contributions:
\medskip
\begin{itemize}
	\item {
		We give a full characterization of simple polyominoes $P$ that are \solvable.
		These allow \emph{some} feasible coordinated motion plan for \emph{any} combination of initial and desired target configurations, without regard for the makespan: We prove that this is the case if and only if $P$ has a cover by $2\times 2$ squares with a connected intersection graph.
	}
	\item {
		We model the shape parameters \emph{\bottleneck length}~$\bottleneck(P)$ (which is the minimum length of a cut dividing the region into non-trivial pieces) and \emph{domain \depth}~$\depth(P)$ (which is the maximum distance of any cell from the domain boundary).
		We provide refined upper and lower bounds on the makespan and stretch factor based on these shape parameters.
	}
	\item {
		For some instances, any \applicable schedule may require
		a makespan of~$\Omega(\diam + \nicefrac{\diam^2}{\bottleneck(P)})$.
		We show how to compute schedules of makespan linear in the ratio of domain area and \bottleneck, i.e., $\BigO(\nicefrac{n}{\bottleneck(P)})$.
	}
	\item {
		We characterize \emph{narrow} instances, which feature very limited \depth relative to the diameter~$\diam$, and provide an approach for asymptotically worst-case optimal schedules.
	}
\end{itemize}
\pagebreak

\subsection{Related work}
\label{subsec:related-work}
\subparagraph*{Motion planning.}
\textsc{Multi-Agent Path Finding} is a widely studied problem.
Due to space constraints, we restrict our description to the most closely related work.
For more detailed references, refer to the extensive bibliography in~\cite{dfk+-cmprs-19} and the mentioned surveys~\cite{cpdsk-saesr-18,sw-sr-08,stern2019multi}.

Of fundamental importance to our work are the results by Demaine et al.~\cite{dfk+-cmprs-19}, who achieved reconfiguration with constant stretch for the special case of rectangular domains.
A~key idea is to consider a partition of the rectangle into tiles whose size is linear in \diam~$\diam$.
These tiles can then be reconfigured in parallel. 
First, flow techniques are applied to shift agents into their target tile; afterward, agents are moved to their respective target positions.
Furthermore, they showed that computing the optimal solution is strongly \NP-complete.

Fekete et al.~\cite{unlabeled-connected-journal,labeled-connected-journal}
considered the unconstrained problem on the infinite grid with the additional
condition that the whole arrangement needs to be connected after every
parallel motion.
They considered both the labeled and the unlabeled version of the problem, providing polynomial-time algorithms for computing schedules with constant stretch for configurations of sufficient scale.
They also showed that deciding whether there is a reconfiguration schedule with a makespan of $2$ is already \NP-complete, unlike deciding the same for a makespan of~$1$.

Eiben, Ganian, and Kanj~\cite{EibenGK23} investigated the parameterized complexity of the problem for the variants of minimizing the makespan and minimizing the total travel distance.
They analyzed the problems with respect to two parameters: the number of agents, and the objective target.
Both variants are \FPT\ when parameterized by the number of agents, while minimizing the makespan becomes para-\NP-hard when parameterized by the objective target.

Further related work studies (unlabeled) multi-robot motion planning problems in polygons.
Solovey and Halperin~\cite{SoloveyH16} show that the unlabeled variant is \PSPACE-hard, even for the specific case of unit-square robots moving amidst polygonal obstacles.
Even in simple polygonal domains, a feasible motion-plan for unlabeled unit-disc robots does not always exist, if, e.g., the robots and their targets are positioned too densely.
However, if there is some minimal distance separating start and target positions, Adler et al.~\cite{AdlerBHS15}~show that the problem always has a solution that can be computed efficiently.
Banyassady et al.~\cite{BanyassadyBBBFH22} prove tight separation bounds for this case.
Agarwal~et~al.~\cite{AgarwalGHT23} consider the labeled variant with revolving areas, i.e., empty areas around start and target positions.
They prove that the problem is \APX-hard, even when restricting to weakly-monotone motion plans, i.e., motion plans in which all robots stay within their revolving areas while an active robot moves to its target.
However, they also provide a constant-factor approximation algorithm.

The computational complexity of moving two distinguishable square-shaped robots in a polygonal environment to minimize the sum of traveled distances is still open;
Agarwal~et~al.~\cite{AgarwalHSS24} gave the first polynomial-time ($1+\varepsilon$)-approximation algorithm.

The problem was the subject of the 2021 CG:SHOP Challenge; see~\cite{Challenge22,Shadoks22,Liu22,Yang22} for an overview and a variety of practical computational methods and results.

\subparagraph*{Token swapping and routing via matchings.}
The task in the \textsc{Token Swapping Problem} is to transform two vertex labelings of a graph into one another by exchanging \emph{tokens} between adjacent vertices by sequentially selecting individual edges.
This problem is \NP-complete even for trees~\cite{token-swapping-tree}, and \APX-hard~\cite{token-swapping} in general.
Several approximation algorithms exist for different variants and classes of graphs~\cite{HeathV03,token-swapping,YamanakaDIKKOSS15}.
The \textsc{Permutation Routing} variant allows for parallelization, by selecting disjoint edge sets to perform swaps in parallel~\cite{AlonCG94,BaumslagA91,kawahara2019timecomplexity}.
The \emph{routing number} of a graph describes the maximal number of necessary parallel swaps between any two labelings.
Recently, Alpert et al.~\cite{ALPERT2022101862} presented an upper bound on the routing number of convex pieces of grid graphs, which is very closely related to our setting.

\subsection{Preliminaries}
\label{subsec:preliminaries}
We define the considered motion of \emph{agents} in a restricted environment~(\emph{domain}) as follows.

\subparagraph*{Domain.}
Consider the infinite integer grid graph, in which each $4$-cycle bounds a face of unit area, a \emph{cell}.
Every planar edge cycle in this grid graph bounds a finite set of cells, which induces a domain that we call a \emph{(simple) polyomino}, see~\cref{fig:boundary-cycle}.
We exclusively consider simple polyominoes, i.e., those without holes.
For the sake of readability, we might not state this for each individual polyomino in later sections.
The \emph{area} of a polyomino~$P$ is equal to the number of contained cells $n$.
The bounding edge cycle and its incident cells are therefore called is the \emph{boundary} and \emph{boundary cells} of $P$, respectively.
The dual graph of~$P$, denoted by $\dual(P)=(V,E)$, has a vertex for every cell, two of which are \emph{adjacent} if they share an edge in $P$, as shown in~\cref{fig:dual-graph}.
\begin{figure}[htb]
	\hfil%
	\begin{subfigure}[t]{0.32\textwidth}%
		\centering
		\includegraphics[page=1]{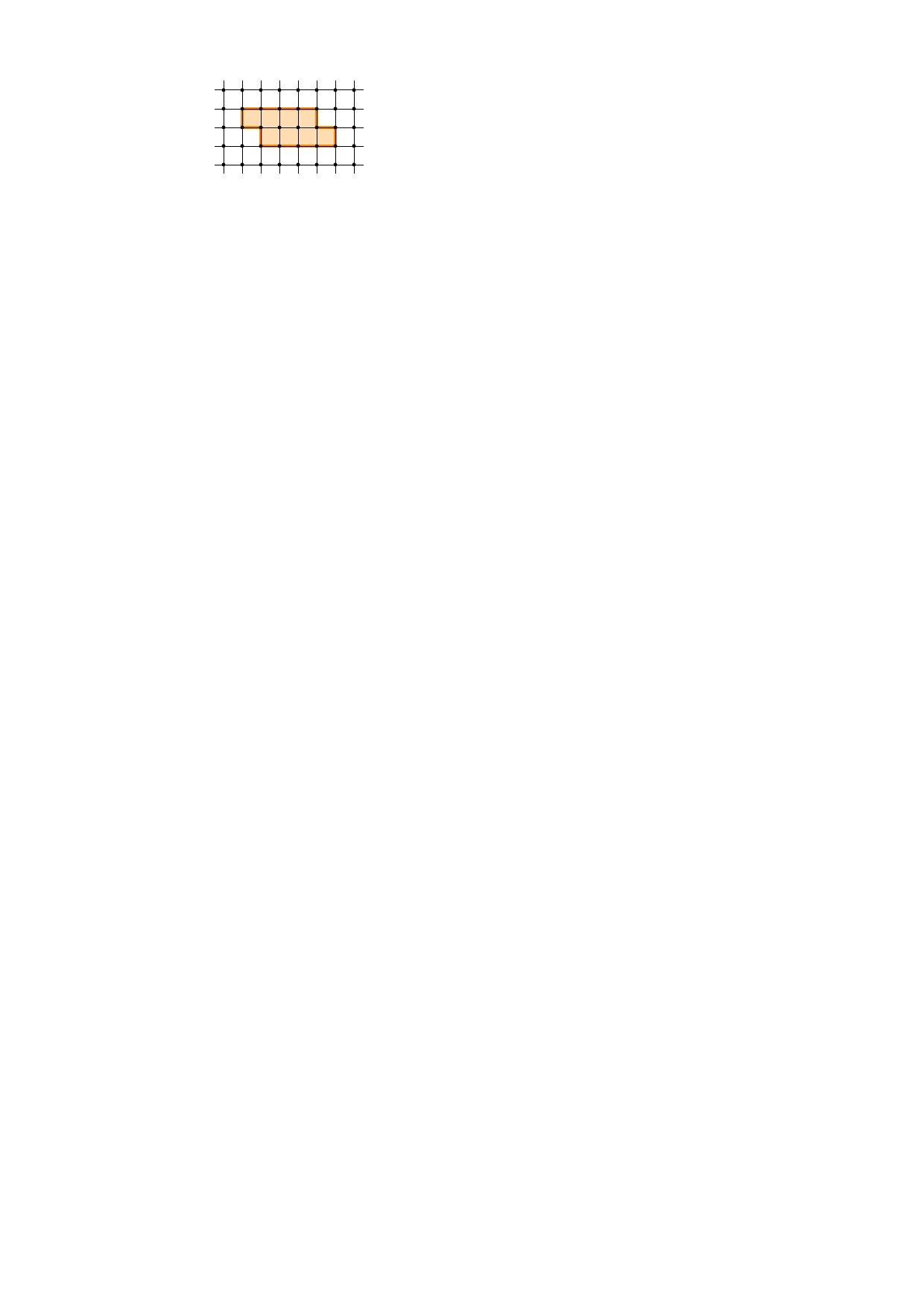}
		\caption{A boundary cycle.}
		\label{fig:boundary-cycle}
	\end{subfigure}
	\hfil%
	\begin{subfigure}[t]{0.32\textwidth}%
		\centering
		\includegraphics[page=2]{figures/preliminaries-domain}
		\caption{A polyomino's dual graph.}
		\label{fig:dual-graph}
	\end{subfigure}
	\hfil%
	\begin{subfigure}[t]{0.32\textwidth}%
		\centering
		\includegraphics[page=3]{figures/preliminaries-domain}
		\caption{A cut through a polyomino.}
		\label{fig:cut}
	\end{subfigure}
	\hfil%
	\caption{
		A polyomino, its dual graph, and a cut.
		Subsequent illustrations will only show the boundary and any relevant cuts, foregoing the underlying integer grid.
	}
	\label{fig:domain}
\end{figure}
The \emph{geodesic distance} between two cells of a polyomino~$P$ corresponds to the length of a shortest path between the corresponding vertices in $\dual(P)$.
A~\emph{cut} through a polyomino $P$ is defined by a planar path between two vertices on the boundary of $P$, as shown in~\cref{fig:cut}.
Its cut set corresponds exactly to those edges of~$\dual(P)$ which cross the path.
Geometrically, this induces two simple subpolyominoes $Q, R$ and write $Q,R\subset P$.
We say that a cut is \emph{trivial} if its endpoints on the boundary of $P$ have a connecting path on the boundary that is not longer than the cut itself.

\subparagraph*{Agents.}
We consider distinguishable \emph{agents} that occupy the cells of polyominoes.
A~\emph{configuration} of a polyomino $P$ with $\dual(P)=(V,E)$ is a bijective mapping $C:V\rightarrow \{1,\dots,n\}$ between cells and agent labels.
We denote the set of all configurations of $P$ as $\configurations(P)$.

In each discrete time step, an agent can either \emph{move}, changing its position $v$ to an adjacent position $w$, or hold its current position.
We denote this by $v\rightarrow w$ or $v\rightarrow v$, respectively.
Two parallel moves $v_1\rightarrow w_1$ and $v_2\rightarrow w_2$ are \emph{collision-free} if $v_1\neq v_2$ and $w_1\neq w_2$.
We assume that a \emph{swap}, i.e., two moves $v_1\rightarrow v_2$ and $v_2\rightarrow v_1$, causes a collision, and is therefore forbidden.
Configurations can be \emph{transformed} by sets of collision-free moves that are performed in parallel.
If a set of moves transforms a configuration $C_1$ into a configuration~$C_2$, this set is also called a \emph{transformation} $C_1\rightarrow C_2$.
For an illustrated example, see~\cref{fig:agents}.
A~\emph{schedule}~with \emph{makespan}~$M\in \mathbb{N}$ is then a sequence of transformations $C_1\rightarrow \dots \rightarrow C_{M+1}$, also denoted by $C_1\rightrightarrows C_{M+1}$.

\begin{figure}[htb]
	\hfil%
	\begin{subfigure}[t]{0.4\textwidth}%
		\centering
		\includegraphics[page=1]{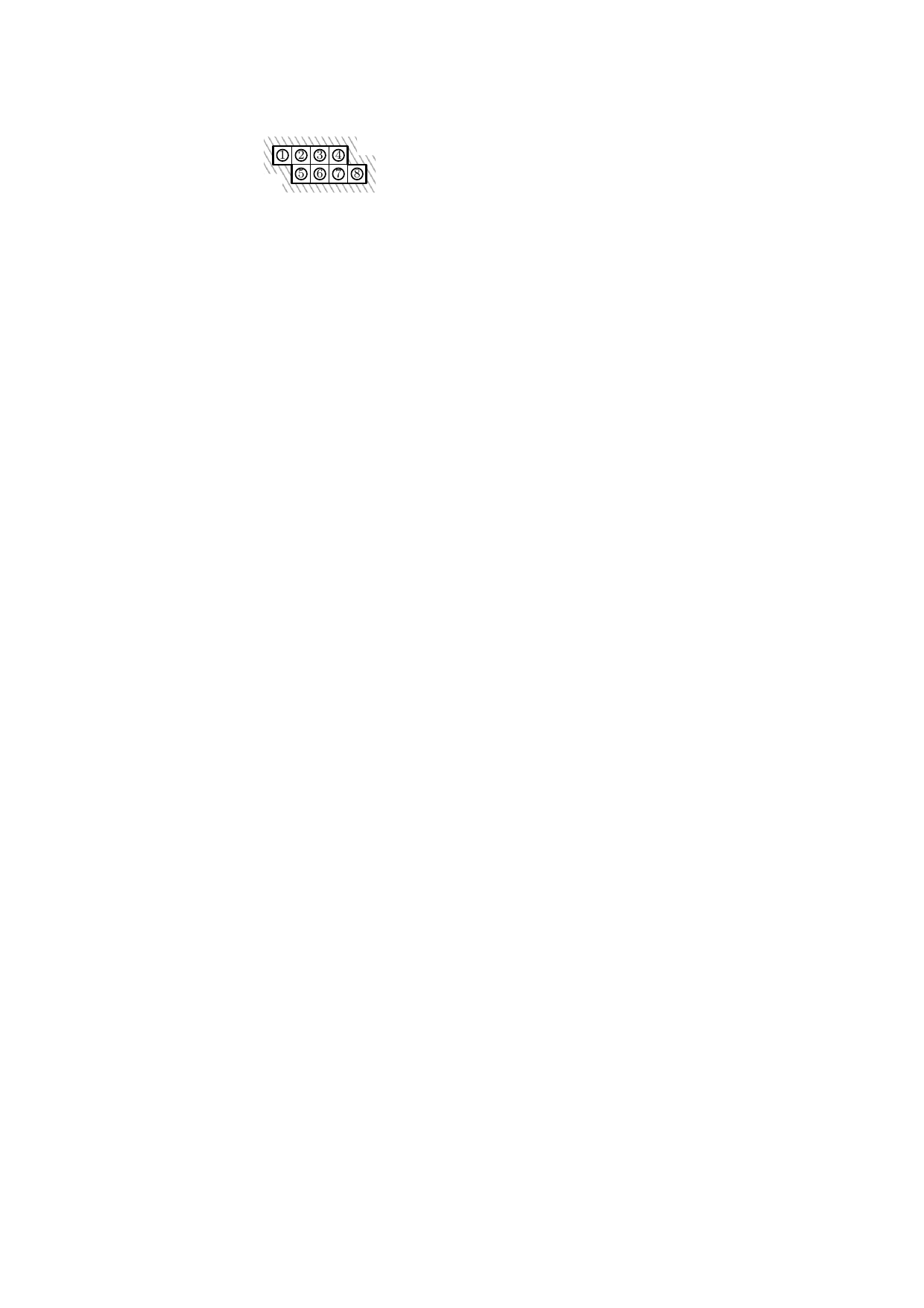}
		\caption{A polyomino and a configuration.}
		\label{fig:configuration}
	\end{subfigure}
	\hfil%
	\begin{subfigure}[t]{0.4\textwidth}%
		\centering
		\includegraphics[page=2]{figures/preliminaries-agents}
		\caption{Four agents move along a cycle.}
		\label{fig:transformation}
	\end{subfigure}
	\hfil%
	\caption{
		An illustration of an configurations and a transformations.
	}
	\label{fig:agents}
\end{figure}
\subparagraph*{Problem statement.}
We consider the \textsc{Multi-Agent Path Finding Problem} for agents in a discrete environment bounded by a simple polyomino.
Thus, an \emph{instance} of the problem is composed of two configurations $C_1,C_2\in\configurations(P)$ of a simple polyomino~$P$.
We say that a schedule is \emph{\applicable} to the instance exactly if it transforms $C_1$ into~$C_2$.
The~\emph{\diam of an instance} is the maximum geodesic distance~$\diam$ between an agents' start and target positions, and the \emph{stretch}~of an \applicable schedule is the ratio between its makespan and the \diam.%

    % !TeX root = 00-main.tex
\section{Reconfigurability}
\label{sec:reconfigurability}

In this section, we provide a characterization of (simple) polyominoes for which any configuration can be transformed into any other.
We say that a polyomino $P$ is \emph{\solvable} if there exists an \applicable schedule for any two configurations $C_1, C_2\in \configurations(P)$.
We prove that this is the case if and only if $P$ has a cover by cycles that have a connected intersection graph, and show how to compute an \applicable schedule of makespan~$\BigO(n)$.
\begin{theorem}
	\label{thm:parallel-bubblesort}
	A polyomino $P$ is \solvable~if and only if it has a cover by $2\times 2$ squares with a connected intersection graph.
	For any $C_1,C_2\in\configurations(P)$ of such a polyomino with area $n$, an \applicable schedule $C_1\rightrightarrows C_2$ of makespan $\BigO(n)$ can be computed efficiently.
\end{theorem}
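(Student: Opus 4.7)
The plan is to split the proof into the two directions of the characterization together with the algorithmic bound, starting with necessity. The key invariant behind necessity is that in a densely packed polyomino every collision-free transformation induces a permutation $\sigma$ of the cells with $\sigma(v)$ adjacent to $v$ in $\dual(P)$ whenever $\sigma(v)\neq v$: since every cell is occupied, an agent leaving $v$ forces another to enter $v$, and chasing this chain must close into a cycle. This permutation therefore decomposes into vertex-disjoint cycles of length at least three (swaps of length two are forbidden), and bipartiteness of $\dual(P)$ pushes the lengths to be even and hence at least four, so only cells lying on some cycle of $\dual(P)$ can ever move. For a simple polyomino, the bounded faces of $\dual(P)$ in its natural planar embedding are exactly the $2\times 2$ subsquares, so every simple cycle of $\dual(P)$ bounds a union of such squares and every cell on a cycle lies in some $2\times 2$.

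The contrapositive now closes the only-if direction: if $P$ has a cell $c$ outside every $2\times 2$, then $c$'s agent is a fixed point of every transformation; and if every cell lies in some $2\times 2$ but the intersection graph of all $2\times 2$ subsquares is disconnected, then $P$ splits into cell-disjoint regions $A$ and $B$ joined only by bridges of $\dual(P)$ that no cycle can traverse, so no agent ever moves between $A$ and $B$. Both cases obstruct \solvability.

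For sufficiency I would proceed constructively. Given a cover $\mathcal{T}$ by $2\times 2$ squares with connected intersection graph $H$, the atomic primitive is the rotation of a single $2\times 2$, which cyclically permutes its four agents in one parallel step. Two squares adjacent in $H$ share at least one cell, and an alternating sequence of their rotations shifts an agent arbitrarily within their union of cells while restoring the remaining agents up to a controlled local shuffle. Iterating along a spanning tree of $H$, these rotations compose to generate the full symmetric group on the $n$ cells of $P$, already establishing \solvability. To obtain makespan $\BigO(n)$, I would employ a \bubblesort-style schedule that routes target agents into place in a BFS order along the spanning tree, pipelining successive placements so that disjoint $2\times 2$'s rotate in parallel at every step; conflicts between overlapping squares sharing a cell are resolved by an edge colouring of $H$ with a constant number of classes, incurring only a constant-factor slowdown.

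The main technical obstacle I anticipate is the simultaneous demonstration that local $2\times 2$ rotations along a spanning tree of $H$ generate the full symmetric group $S_n$ \emph{and} that a pipelined schedule achieves linear rather than naive quadratic depth. I expect this to reduce to a two-square base lemma analysing compositions of rotations between adjacent $2\times 2$'s sharing one or two cells, combined with a parallel-sorting argument on the spanning tree in the spirit of parallel bubble sort on a line of length $n$.
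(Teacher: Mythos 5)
Your architecture coincides with the paper's: necessity via the observation that all motion in a fully packed polyomino proceeds along cycles of $\dual(P)$, which in a simple polyomino are confined to $2\times 2$-covered, square-connected regions; sufficiency via a local two-square swap gadget, spanning-tree permutation routing, and a constant-class parallelization of disjoint swaps. Your face-based phrasing of necessity is a clean and correct variant of the paper's $2$-edge-connectivity argument.

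Three soft spots in your sufficiency direction deserve attention, and each is precisely the point at which the paper inserts a concrete lemma. First, single-square rotations do not generate $S_n$ --- on a lone $2\times 2$ they generate only a cyclic group of order $4$ --- so the entire generation claim rests on your deferred ``two-square base lemma'': that the union of two overlapping $2\times 2$ squares admits an \emph{exact} adjacent transposition, not merely a ``controlled local shuffle''. The paper supplies this as \cref{obs:two-overlapping-squares} via explicit $7$- and $14$-step schedules; without such explicit schedules this step is unsupported. Second, a proper edge colouring of the intersection graph $H$ does not suffice to run gadgets in parallel: two edges of $H$ that share no vertex can still involve squares whose cells overlap, so their gadget regions collide. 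You need same-class edges to have disjoint \emph{regions}; the paper achieves this in \cref{lem:matching-realization} with $36$ positional classes (orientation and coordinates mod $3$), and a bounded-degree conflict-graph colouring would also work, but plain edge colouring of $H$ would not. Third, the makespan bound does not follow from ``pipelined bubble sort in BFS order'' on a general spanning tree without a genuine argument; odd-even transposition sort gives $n$ rounds on a path, but extending this to arbitrary trees is a theorem in its own right. The paper instead invokes the Alon--Chung--Graham bound of at most $3n$ matchings for routing on trees and realizes each matching in $\BigO(1)$ transformations. Your decomposition is the right one, but the proof lives in these three deferred ingredients rather than in the surrounding plan.
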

Due to the cyclic nature of all movement, the edge connectivity of a polyomino's dual graph plays a significant role for \solvability.
We start with a negative~result.

\begin{lemma}
	\label{lem:simple-solvable-if-twofat}
	A polyomino $P$ that does not have a cover by $2\times 2$ squares with a connected intersection graph is not \solvable.
\end{lemma}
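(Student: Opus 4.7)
The plan is to prove the contrapositive: given a polyomino~$P$ with no cover by $2\times 2$ squares with connected intersection graph, I exhibit two configurations with no applicable schedule between them. The key dynamical observation is that, since every cell is occupied and swaps are forbidden, any single transformation must decompose into vertex-disjoint cyclic rotations of length at least~$4$ along cycles of~$\dual(P)$. In particular, no transformation moves an agent across an edge that lies on no cycle of~$\dual(P)$, i.e., across a \emph{bridge}. Over any schedule, the bipartition of agent labels induced by a bridge is therefore invariant, so $C_1$ and any $C_2$ obtained by transposing the labels on the two sides of such a bridge form a non-reconfigurable pair.

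The bridge is produced via the following structural characterization: in a simple polyomino, an edge $uv \in \dual(P)$ is a bridge if and only if it is not contained in any $2\times 2$ subsquare of~$P$. The ``only if'' direction is immediate since any $2\times 2$ square yields a $4$-cycle through~$uv$. For ``if,'' I invoke the fact that, because $P$ is simply connected, every bounded face of the planar embedding of~$\dual(P)$ is a $4$-cycle bounding some $2\times 2$ subsquare of~$P$; hence the cycle space of~$\dual(P)$ is generated by these $4$-cycles, and an edge contained in no such $4$-cycle lies on no cycle at all.

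It remains to produce such an edge whenever the cover condition fails. If some cell $v \in P$ belongs to no $2\times 2$ subsquare, then every edge incident to~$v$ automatically fails the $4$-cycle test and is thus a bridge. Otherwise every cell is covered, but the intersection graph~$H$ of \emph{all} $2\times 2$ subsquares of~$P$ has at least two components $\mathcal{G}_1,\ldots,\mathcal{G}_k$. Their cell-sets $S_i := \bigcup_{T \in \mathcal{G}_i} T$ are pairwise disjoint, since a shared cell would force an edge in~$H$, and jointly cover~$P$. Because $P$ is connected, some edge $uv \in \dual(P)$ crosses components, say $u \in S_i$ and $v \in S_j$ with $i \neq j$; any $2\times 2$ square containing both~$u$ and~$v$ would merge $\mathcal{G}_i$ and~$\mathcal{G}_j$ in~$H$, a contradiction. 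Thus $uv$ is a bridge, completing the reduction.

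The main obstacle is the cycle-space claim in the second paragraph, which crucially exploits the simple connectedness of~$P$: without it, hollow rings would furnish cycles in~$\dual(P)$ that are not spanned by $4$-cycles of~$P$, and the bridge characterization would collapse. Once this planar-topology ingredient is in place, the remainder of the argument is a straightforward combinatorial bookkeeping of $2$-edge-connectivity against the~$2\times 2$ cover.
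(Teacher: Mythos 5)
Your proof is correct, and while it shares the paper's core mechanism --- in a fully packed polyomino every transformation decomposes into disjoint rotations along cycles of $\dual(P)$, so a cut-edge of the dual graph is an insurmountable obstruction --- the way you locate and certify that cut-edge is genuinely different. The paper argues by an ad hoc case analysis on the degree and neighborhood of an uncovered cell, and in the final case simply asserts that the absence of a connected cover yields two adjacent cells sharing no $2\times 2$ square; your argument instead establishes a clean characterization (an edge of $\dual(P)$ is a bridge if and only if it lies in no $2\times 2$ subsquare, because for simply connected $P$ the bounded faces of the plane graph $\dual(P)$ are exactly the unit $4$-cycles of $2\times 2$ subsquares, which therefore generate its cycle space) and then derives the existence of such an edge from a component argument on the intersection graph of \emph{all} $2\times 2$ subsquares. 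This buys you two things the paper's version lacks: a rigorous justification of the step from ``no connected cover'' to ``two adjacent cells with no common square,'' and an explicit invariant (the partition of labels across the bridge) witnessing non-reconfigurability, where the paper leaves the witness configurations implicit. The one point you should spell out in a full write-up is the face claim itself --- that a bounded face of $\dual(P)$ consisting of more than one unit square, or one whose surrounding four cells are not all in $P$, forces a hole in $P$ --- but this is true and routine; your parenthetical about hollow rings correctly identifies exactly where simplicity is used.
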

\begin{proof}
	All valid transformations have robots moving along cycles, taking the position of an adjacent robot that is leaving its cell in the same step.
	We thus propose that $P$ is not \solvable if its dual graph $\dual(P)=(V,E)$ is not $2$-edge-connected.

	Assume that $\dual(P)$ is not $2$-edge-connected.
	We can thus identify a cut-edge $e=\{v,w\}$.
	Clearly, there is no cycle in the dual graph of $P$ that involves the edge $e$, which implies that no robot can move from $v$ to $w$ or vice versa.
	Therefore, $P$ is not \solvable.

	Assume now that $P$ has no cover by $2\times 2$ squares, and consider some cell $v\in V$ which cannot be covered by a $2\times 2$ square in $P$.
	Let $\deg(v)$ be the number of cells adjacent to $v$.
	If $\deg(v)=1$, then $P$ is not $2$-edge-connected and therefore not \solvable.
	If $\deg(v)\geq 2$, we propose that at least two of the adjacent cells share a third adjacent cell:
	If no two adjacent cells to $v$ share a common neighbor $w$, it follows that at least one of the incident edges to $v$ is a cut-edge of the dual graph of $P$, implying that it is again not $2$-edge-connected, so $P$ is not \solvable.

	Finally, assume that $P$ has a cover by $2\times 2$ squares, but none with a connected intersection graph.
	This implies there exist two adjacent cells that cannot be simultaneously covered by the same $2\times 2$ square.
	Consider any two of such adjacent cells $v,w\in V$.
	If $v$ does not have an adjacent cell $x$ that is adjacent to a cell $y$, which itself is adjacent to $v$, then there is no alternate path from $v$ to $w$, unless there exists a hole in $P$.
	Since we consider only simple polyominoes, this means that $\{v,w\}$ is a cut-edge of the dual graph, and $P$ is not \solvable.
	This concludes our proof.
\end{proof}

Because direct swaps of adjacent agents are not possible, an important tool is the ability to ``simulate'' a large number of adjacent swaps in parallel, using a constant number of transformation steps.
Polyominoes that are unions of two $2\times 2$ squares form an important primitive to achieve this.
There exist two classes of such polyominoes; the squares can overlap either in one or two cells.
Clearly, either dual graph can be covered by two $4$-cycles that intersect in at least one vertex.
In~\cref{fig:two-overlapping-squares}, we illustrate schedules that perform adjacent swaps at the intersection of these cycles, implying \solvability of both classes.
In fact, any instance of either class takes at most $7$ or $14$ transformations, respectively.
\begin{figure}[htb]
	\begin{subfigure}[t]{\textwidth}
		\centering%
		\includegraphics{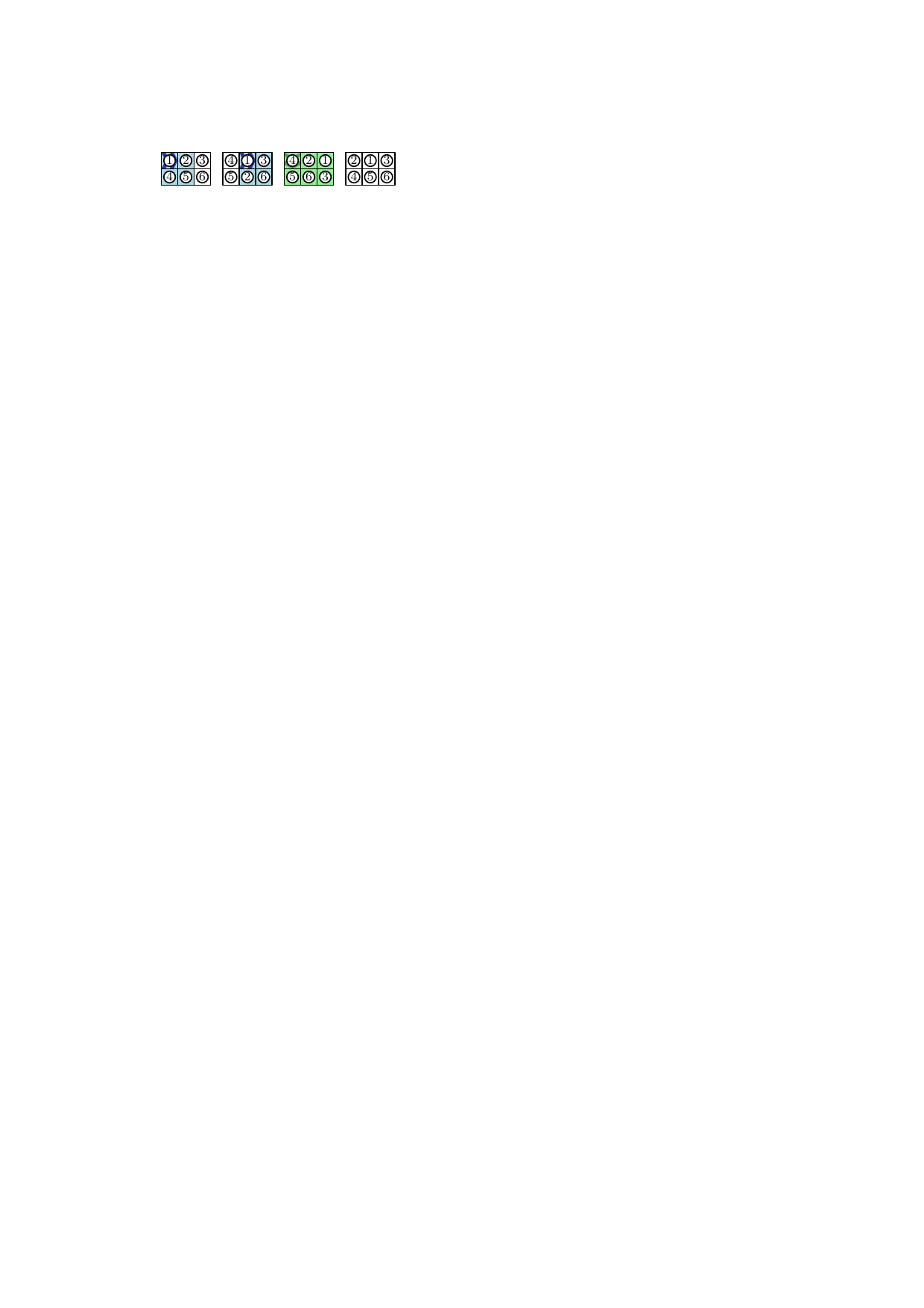}
		\caption{A schedule that swaps the robots labeled as $1$ and $2$.}
		\label{fig:two-overlapping-squares-horizontal}
	\end{subfigure}\vspace*{0.2cm}
	\begin{subfigure}[t]{\textwidth}
		\centering%
		\includegraphics{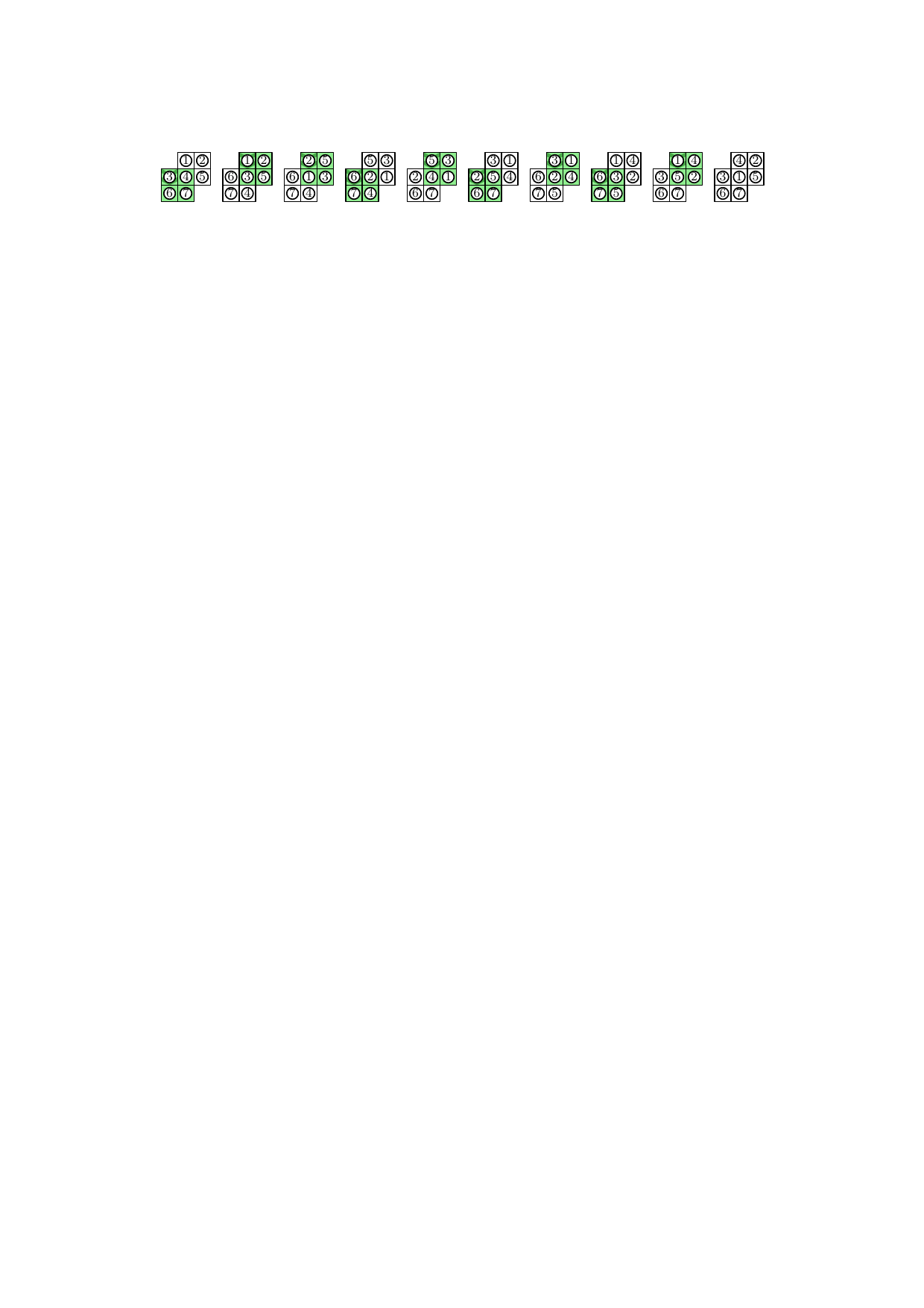}
		\caption{A schedule that swaps the robots labeled as $1$ and $4$.}
		\label{fig:two-overlapping-squares-diagonal}
	\end{subfigure}
	\caption{In polyominoes composed of two $2\times 2$ squares, we can realize swaps in $\BigO(1)$ steps.}
	\label{fig:two-overlapping-squares}
\end{figure}
\begin{observation}
	\label{obs:two-overlapping-squares}
	Polyominoes of two overlapping $2\times 2$ squares are \solvable.
\end{observation}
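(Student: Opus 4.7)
The plan is to reduce \solvability of these small polyominoes to the classical algebraic fact that the transpositions indexed by the edges of a connected graph generate the full symmetric group $S_n$.

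First, I would observe that for either topological class---two $2\times 2$ squares overlapping in two cells (a $2\times 3$ rectangle) or in a single cell (a $7$-cell L-shape)---the dual graph $\dual(P)$ is connected. Each $2\times 2$ block induces a $4$-cycle in $\dual(P)$, and the two blocks share at least one vertex of these cycles. As a side benefit, every edge of $\dual(P)$ lies entirely within one of the two $2\times 2$ blocks, which will be crucial in the next step.

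Next, I would use the schedules depicted in \cref{fig:two-overlapping-squares} to establish the key primitive: for every edge $\{v, w\}$ of $\dual(P)$ there exists a constant-length applicable schedule that swaps the two agents at $v$ and $w$ while leaving every other agent in place. The figures show explicit constructions for a representative horizontal and diagonal edge; the reflective and rotational symmetries of each $2\times 2$ block, together with the mirror symmetry of the overall polyomino, cover every remaining edge.

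Finally, given any two configurations $C_1, C_2 \in \configurations(P)$, some permutation $\pi$ of the agent labels transforms $C_1$ into $C_2$. Because $\dual(P)$ is connected, a standard argument (lifting each transposition $(a, b)$ along a path in $\dual(P)$ between the cells holding $a$ and $b$) decomposes $\pi$ into $\BigO(n^2)$ transpositions of the form $(v, w)$ with $\{v, w\} \in E(\dual(P))$. Concatenating the corresponding constant-length swap schedules yields an applicable schedule $C_1 \rightrightarrows C_2$, which establishes \solvability.

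The hardest part of turning this sketch into a rigorous proof is essentially bookkeeping: one must verify that the gadget schedules of \cref{fig:two-overlapping-squares} really do restore every non-swapped agent to its original cell, and that the claimed symmetries genuinely cover every edge of $\dual(P)$. Given the small size of the two polyominoes in question, this is an exhaustive but routine case check rather than a conceptual difficulty, and a more careful analysis should recover the sharper $7$- and $14$-transformation bounds alluded to in the surrounding text.
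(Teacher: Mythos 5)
Your argument is correct and matches the paper's justification: the paper likewise treats the figures as swap gadgets for adjacent agents and lets the connectivity of the dual graph (covered by two intersecting $4$-cycles) do the rest, leaving the generation of the full symmetric group by adjacent transpositions implicit. Your extra bookkeeping about lifting transpositions along paths and covering all edges via symmetry is just a more explicit write-up of the same route.
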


Using this observation for a primitive local operation, we show the following.

\begin{lemma}
	\label{lem:matching-realization}
	For any matching in the dual graph of a \solvable polyomino, we can compute a schedule of makespan $\BigO(1)$ which swaps the agents of all matched positions.
\end{lemma}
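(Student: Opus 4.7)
The plan is to reduce the realization of the matching $M$ to $O(1)$ parallel rounds of the local swap primitive from \cref{obs:two-overlapping-squares}. For each matched edge $e = \{v, w\}$, since $v$ and $w$ are adjacent cells of the \solvable polyomino $P$, the argument in \cref{lem:simple-solvable-if-twofat} guarantees a $2\times 2$ square $S_e$ of the cover containing both; by the connectedness of the cover's intersection graph, $S_e$ shares at least one cell with another cover square $S'_e$. I define the \emph{swap workspace} $R_e := S_e \cup S'_e$, a polyomino on at most seven cells composed of two overlapping $2\times 2$ squares, so that \cref{obs:two-overlapping-squares} supplies a schedule of constant makespan $c$ that swaps the agents at $v$ and $w$ while restoring every other agent of $R_e$ to its starting cell, and leaving cells outside $R_e$ untouched.

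To schedule all matched swaps simultaneously, I build a conflict graph $H$ on $M$ in which $e \sim_H e'$ iff $R_e \cap R_{e'} \neq \emptyset$. Each cell of $P$ lies in at most four axis-aligned $2\times 2$ grid squares, and since $M$ is a matching, each such square contains at most two matched edges; combined with $|R_e| \leq 7$, this shows that $H$ has maximum degree $O(1)$. A greedy proper coloring of $H$ then partitions $M$ into $k = O(1)$ color classes. Within any class $M_i$, the chosen workspaces are pairwise disjoint, so the corresponding local schedules can be executed simultaneously as a single sequence of collision-free global transformations, and concatenating the $k$ class rounds yields an \applicable schedule of makespan $k c = O(1)$.

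The main obstacle I anticipate is verifying the "swap-and-restore" primitive uniformly for every possible workspace. \cref{obs:two-overlapping-squares} is illustrated only for two particular overlap patterns and swap targets, whereas our $R_e$ may realize any of the finitely many rigid-motion types of two overlapping $2\times 2$ squares (sharing either one or two cells), and the matched pair $\{v, w\}$ may occupy any adjacent positions within $S_e$. Since all such configurations form a finite catalogue closed under the symmetries of the grid, each admits a constant-length primitive that swaps the two designated agents and returns the remaining agents in the workspace to their original cells; the universal constant $c$ absorbs the maximum over this finite list of cases, after which the coloring argument delivers the claimed bound.
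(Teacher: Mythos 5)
Your proof is correct and follows essentially the same route as the paper: assign each matched pair a constant-size workspace formed by two overlapping $2\times 2$ cover squares, use the $\BigO(1)$ local swap-and-restore primitive of \cref{obs:two-overlapping-squares} there, and run the workspaces in constantly many parallel rounds. The only difference is bookkeeping --- the paper obtains the constant number of rounds by explicitly partitioning intersection-graph edges into $36$ classes via orientation and coordinates mod $3$, whereas you use a bounded-degree conflict graph with a greedy coloring; both yield the same bound.
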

\begin{proof}
	Let $I$ refer to the connected intersection graph of a cover of a \solvable polyomino $P$ by $2\times 2$ squares, which can be computed in $\BigO(n)$.
	Due to \cref{lem:simple-solvable-if-twofat}, the vertices of each edge in the dual graph of $P$ share a common $2\times 2$ square in the cover.

	We thus divide the edges $E(I)$ of $I$ into $36$ classes based on each edge's orientation and $xy$-minimal coordinates mod $3$.
	These can be represented by $\{\uparrow, \nearrow, \rightarrow, \searrow\}\times [0,2]\times [0,2]$.

	For any intersection $\{u,v\}\in E(I)$, let $R(\{u,v\})$ now be the union of the vertices covered by the squares $u$ and $v$, which always corresponds exactly to one of the polyominoes outlined in \cref{obs:two-overlapping-squares}.
	Such a region has a bounding box no larger than $3\times 3$, which means that the regions $R(e)$ and $R(f)$ are disjoint for any two edges $e$ and $f$ in a common class, allowing us to apply \rotatesort in parallel to all the regions within one class.

	As there are constantly many classes, we can realize the adjacent swaps induced by a matching of adjacent cells in $\BigO(1)$ transformations.
\end{proof}

Marberg and Gafni~\cite{MarbergG88} propose an algorithm called \rotatesort that sorts a two-dimensional $n\times m$ array within $\BigO(n+m)$ parallel steps.
Demaine et al.~\cite{dfk+-cmprs-19} demonstrate that this algorithm can be applied geometrically, using the local swap mechanism illustrated in \cref{fig:two-overlapping-squares-horizontal}.
A geometric application of \rotatesort is a sequence of sets of pairwise disjoint adjacent swap operations, i.e., sets consisting of pairs of adjacent cells, where swaps can be simulated by circular rotations.
As our setting is not merely restricted to rectangular domains, we extend their approach using~\cref{lem:matching-realization}.
We give a constructive proof of \cref{thm:parallel-bubblesort} in the shape of an algorithm, as follows.

\begin{proof}[Proof of \cref{thm:parallel-bubblesort}]
	Our approach employs methods from \textsc{Permutation Routing}.
	In this setting, the task is to transform two different vertex labelings of a graph into one another by exchanging labels between adjacent vertices in parallel~\cite{AlonCG94}.
	A solution (or routing sequence) consists of a series of matchings, i.e., sets of independent edges, along which tokens are exchanged.
	Such a routing sequence of length at most $3n$ can be computed in almost linear time if the underlying graph is a tree~\cite{AlonCG94}.
	Thus, we consider an arbitrary spanning tree of a polyomino~$P$'s dual graph and compute such a routing sequence.
	Due to~\cref{lem:matching-realization}, each parallel swap operation in the sequence can be realized by a schedule of makespan $\BigO(1)$.
	We conclude that the schedule derived from the routing sequence has makespan $\BigO(n)$.
\end{proof}

    % !TeX root = 00-main.tex
\section{The impact of the domain on the achievable makespan}
\label{sec:geometric-constraints}

Previous work has demonstrated that it is possible to achieve constant stretch for labeled agents in a rectangular domain.
However, in the presence of a non-convex boundary, such stretch factors may not be achievable.
We present~the following worst-case bound.
\begin{proposition}
    \label{prop:linear-stretch-bound}
    For any $\diam\geq 5$, there exist instances of \diam $\diam$ in \solvable polyominoes, such that all \applicable schedules have makespan $\Omega(\diam^2)$.
\end{proposition}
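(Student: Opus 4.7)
The plan is to construct, for each $d \geq 5$, a ``dumbbell''-shaped polyomino with a narrow bottleneck and to exhibit a swap-type instance on it that forces quadratically many cut crossings. For $d$ sufficiently large, the polyomino consists of two $k \times k$ rooms with $k = \Theta(d)$, joined by a $2$-cell wide corridor whose length $\ell$ is tuned so that the diameter of the polyomino is exactly $d$; for $k \geq 3$ this diameter equals $4k + \ell - 5$, so a suitable $\ell \geq 1$ exists. Both rooms and the corridor admit overlapping covers by $2 \times 2$ squares, and these coverings meet at the corridor entrances, so the whole dumbbell has a $2 \times 2$ cover with connected intersection graph and is therefore \solvable by \cref{thm:parallel-bubblesort}. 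The finitely many small values $d \in \{5, 6, 7\}$ missed by this formula can be patched by small ad hoc solvable polyominoes (for instance, a $3 \times 3$ room with a $2 \times 2$ appendage for $d = 5$); for these the claim is immediate because any schedule has makespan at least $d$, and $d = \Theta(1)$ absorbs the constant in the $\Omega(d^2)$ bound.

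Next, I would pick the initial and target configurations to agree on the corridor cells and to swap the contents of the two rooms via the reflection through the corridor's midpoint (the cell $(i, j)$ in the left room is mapped to the cell $(2k + \ell - 1 - i, j)$ in the right room, and vice versa). Since two corner cells mapped to each other realise the polyomino's diameter, the resulting instance has diameter exactly $d$.

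For the lower bound I would use a cut-capacity argument. Let $\gamma$ be the cut severing the corridor in the middle; in the dual graph, $\gamma$ consists of exactly two edges. Because adjacent swaps are forbidden, each edge can be traversed by at most one agent per time step, so at most two agents can cross $\gamma$ per step. Every one of the $k^2$ agents starting in the left room must end in the right room (and symmetrically), so each such agent crosses $\gamma$ an odd number of times, producing at least $2 k^2$ mandatory crossings in total. Dividing by the per-step capacity of $2$ yields a makespan of at least $k^2 = \Omega(d^2)$. The main technical nuisance is making the cut argument airtight---in particular, confirming that opposing simultaneous traversals of the same edge are indeed precluded by the no-swap rule, and that no agent can bypass $\gamma$ (which is immediate since the corridor is the only route between the two rooms).
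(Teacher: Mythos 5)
Your proposal is correct and follows essentially the same approach as the paper: a dumbbell-shaped, \solvable polyomino whose two $\Theta(\diam)\times\Theta(\diam)$ rooms must exchange all their agents through a narrow passage of constant dual-cut capacity, yielding the $\Omega(\diam^2)$ bound by a crossing-count argument. You merely spell out details the paper delegates to its figure (the explicit corridor construction, the diameter tuning, and the small-$\diam$ cases).
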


\begin{proof}
    We illustrate a class of such instances in~\cref{fig:linear-stretch}.
    In this class, agents located on different sides of a narrow passage must trade places.
    \cref{thm:parallel-bubblesort} tells us that these polyominoes are \solvable; however, any movement between the regions pass through the narrow passage at the center, limiting the number of agents exchanged between them to $2$ per transformation.
    As the number of agents scales quadratically with $\diam$, any schedule for this class of instances requires a makespan of~$\Omega(\diam^2)$.
\end{proof}
\begin{figure}[htb]
    \begin{subfigure}[b]{0.4\textwidth}%
        \centering%
        \includegraphics[scale=1.0,page=2]{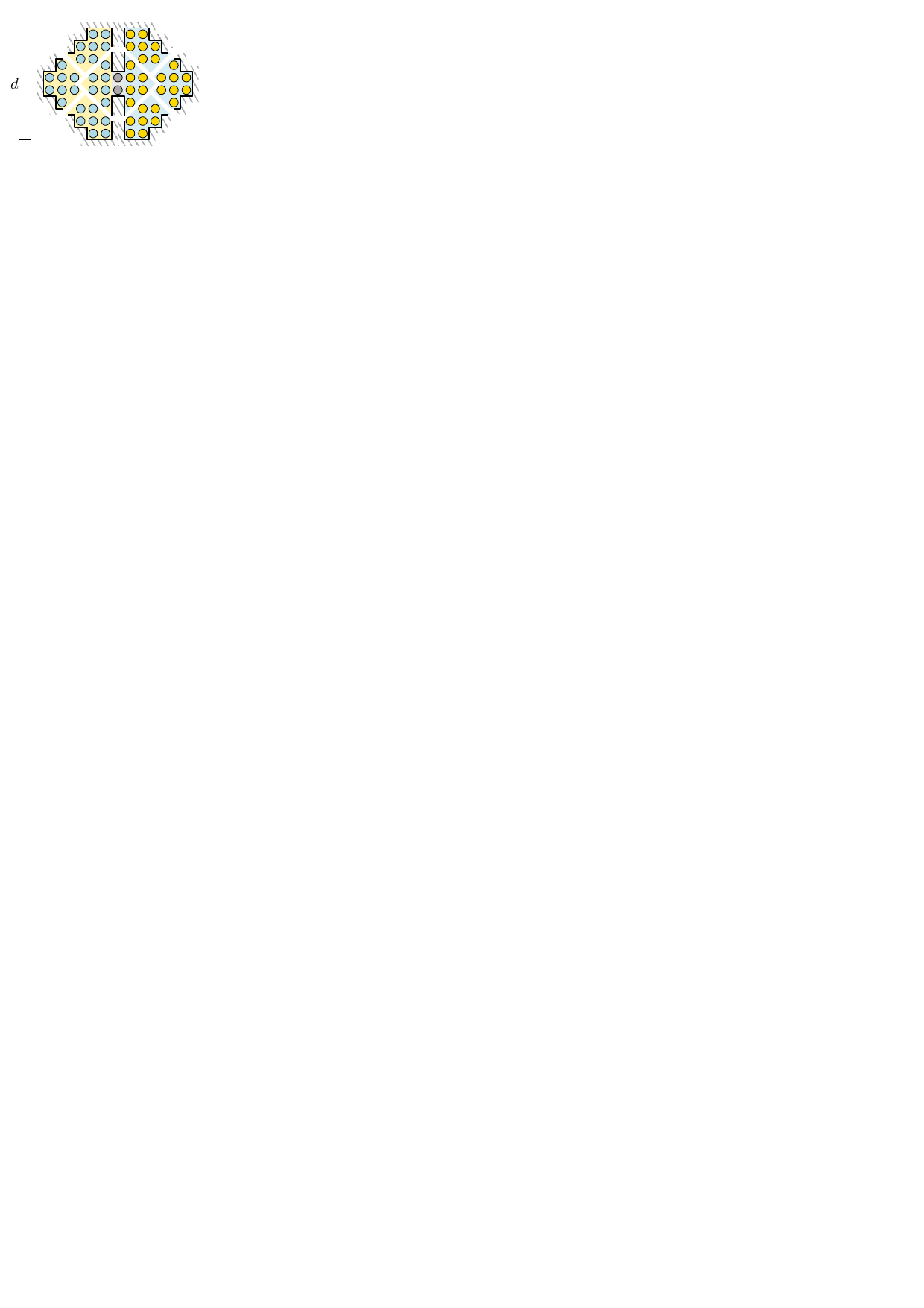}
        \caption{A polyomino with narrow passage.}
        \label{fig:narrow-passage-move}
    \end{subfigure}%
    \begin{subfigure}[b]{0.6\textwidth}%
        \centering%
        \includegraphics[scale=1.0,page=1]{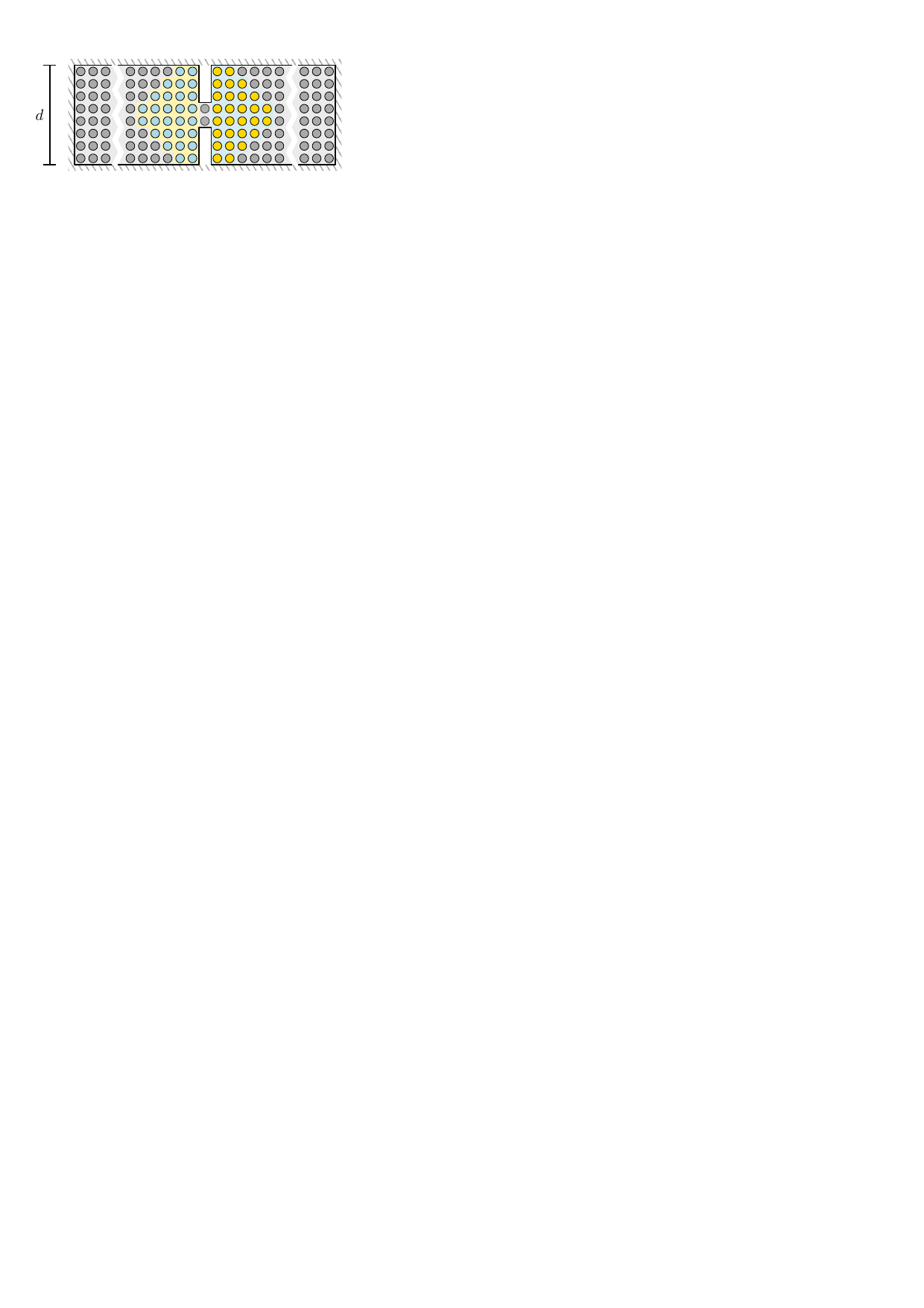}%
        \caption{This class can be extended such that $n > d^2$.}
        \label{fig:narrow-passage}
    \end{subfigure}%
    \caption{
        We illustrate a class of instances which require $\Omega(\diam^2)$ transformations.
    }
    \label{fig:linear-stretch}
\end{figure}

For a more refined characterization of features that affect the achievable makespan and to formulate a precise lower bound, we introduce the following shape parameter for polyominoes.

\medskip
\descriptionlabel{Bottleneck.}
We say that the \emph{\bottleneck} of a polyomino $P$ is the largest integer $\bottleneck(P)$, such that there is no non-trivial cut through $P$ of length less than $\bottleneck(P)$.
This means no interior ``shortcut'' of length less than $\bottleneck(P)$ exists between any two points on the boundary of~$P$.
A~\emph{bottleneck cut} through $P$ is therefore a non-trivial cut of length $\bottleneck(P)$.

\medskip
We now further refine the lower bound presented in~\cref{prop:linear-stretch-bound}, as follows.

\begin{proposition}
    \label{prop:bottleneck-stretch-bound}
    For any $\diam \geq 4$ and $z\in [2,\diam]$, there exists a \solvable polyomino~$P$ with $\bottleneck(P)=z$ that has instances of \diam $\diam$, for which any \applicable schedule has a makespan of $\Omega(\nicefrac{\diam^2}{\bottleneck(P)})$, i.e., a stretch factor of $\Omega(\nicefrac{\diam}{\bottleneck(P)})$.
\end{proposition}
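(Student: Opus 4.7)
The plan is to generalize the narrow-passage construction of Proposition~\ref{prop:linear-stretch-bound} by replacing the width-$2$ bottleneck with a corridor of width $z$. I would build $P$ from two square rooms $R_1, R_2$ of side length $s = \Theta(d)$ (say $s = \lfloor d/3 \rfloor$), joined by a rectangular corridor of width exactly $z$ and constant length, with dimensions tuned so that the geodesic distance between the two far corners of $R_1$ and $R_2$ is exactly $d$. For $z \leq s$ this setup is direct; for the remaining values of $z$ close to $d$, I would reshape the rooms so they still have area $\Omega(d^2)$ while admitting a bottleneck of length $z$, noting that in this regime the target bound $\Omega(d^2/z) = O(d)$ is already implied by the trivial diameter lower bound.

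Next I would verify two structural claims about $P$. First, $P$ is \solvable: for $z \geq 2$, both rooms and the corridor admit $2 \times 2$-square covers whose intersection graphs are connected, and these covers glue along cells shared with the corridor, so~\cref{thm:parallel-bubblesort} applies. Second, $\bottleneck(P) = z$: a transverse slice through the corridor realises a non-trivial cut of length exactly $z$, and every other non-trivial cut is at least this long. The latter reduces to a short case analysis: cuts lying entirely inside a room are either trivial (if they sever only a small corner piece) or span the room and so have length $\Omega(s) \geq z$; cuts that include corridor edges must either cross the corridor fully (contributing at least $z$ edges) or combine a partial corridor crossing with a long room-internal path.

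The lower bound then follows from a transportation argument in the spirit of~\cref{prop:linear-stretch-bound}. I would pick start and target configurations that swap the labels between $R_1$ and $R_2$, with one label placed at a far-corner pair to witness diameter exactly $d$. Every one of the $s^2 = \Omega(d^2)$ labels initially in $R_1$ must cross some edge of the bottleneck cut to reach $R_2$, and symmetrically for $R_2$, giving $\Omega(d^2)$ total cut-edge traversals. In any single transformation, each of the $z$ cut edges carries at most one traversal: same-destination moves violate collision-freeness, and opposing moves along a single edge constitute a forbidden swap. Thus at most $z$ crossings occur per step, forcing the makespan to be $\Omega(d^2/z) = \Omega(\diam^2 / \bottleneck(P))$, and dividing by $\diam$ yields the stretch bound $\Omega(\diam / \bottleneck(P))$.

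The main obstacle will be verifying $\bottleneck(P) = z$ cleanly, in particular ruling out oblique non-trivial cuts that partially traverse the corridor and partially carve out a piece of a room. This is a routine boundary-length comparison but must account for all such hybrid cuts and rely on the rooms being large relative to the corridor width. A secondary concern is making the room and corridor proportions consistent across the full range $z \in [2, \diam]$, which requires some care at the extremes but introduces no new ideas beyond those used in the baseline construction.
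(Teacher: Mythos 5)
Your proposal is correct and follows essentially the same route as the paper: both generalize the narrow-passage construction of \cref{prop:linear-stretch-bound} by widening the passage between the two $\Theta(\diam^2)$-area regions to width $z$, and both derive the bound by observing that only $\BigO(z)$ agents can cross the bottleneck cut per transformation while $\Omega(\diam^2)$ agents must cross it. Your write-up is in fact somewhat more careful than the paper's (explicitly verifying $\bottleneck(P)=z$, \solvability, and the one-traversal-per-cut-edge-per-step bound via the no-swap rule), but it introduces no new ideas beyond the paper's argument.
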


\begin{proof}
    We formulate a generalized version of the instances from~\cref{prop:linear-stretch-bound}.
    By scaling the boundary of the polyomino between the two regions by an arbitrary amount less or equal to~$\diam$, we can create \solvable polyominoes with the targeted \bottleneck value.

    The movement between the two regions must then still be realized over the narrow grey region, limiting the number of robots exchanged between them to $\BigO(\bottleneck(P))$ per transformation.
    As the number of robots that need to traverse the \bottleneck cut scales quadratically with $\diam$, any \applicable schedule for this class of instances requires a makespan of~$\Omega(\nicefrac{\diam^2}{\bottleneck(P)})$.
\end{proof}

To further refine our understanding of the domain's impact on achievable makespans, we now characterize the size of widest passages, i.e., best case maneuverability.
To this end, we consider the maximum shortest distance to the boundary within the given domain, its \emph{\depth}.

\medskip
\descriptionlabel{Depth.}
We say that the \emph{\depth} of a polyomino $P$ is the smallest integer $\depth(P)$, such that every cell in $P$ has geodesic distance at most $\depth(P)$ to the boundary of $P$.
\medskip

The \depth and \bottleneck of a polyomino are very closely related, with \depth implying a bound on the \bottleneck of any (sub-)polyomino such that $\bottleneck(P')\leq2\depth(P)$ for any $P'\subseteq P$.
We take particular notice of the following property of \depth.
\begin{lemma}
    \label{lem:distance-to-non-trivial-cut}
    From any cell in a polyomino $P$, the maximal geodesic distance to a non-trivial geodesic cut of length at most $2\depth(P)$ is also at most $2\depth(P)$.
\end{lemma}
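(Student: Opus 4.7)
The plan is to construct, for any cell $c \in P$, a non-trivial cut $\gamma$ of length at most $2\depth(P)$ that lies within geodesic distance $2\depth(P)$ of $c$. The construction uses the depth property twice: once to $c$ itself to obtain a shortest dual path $\pi$ from $c$ to a boundary cell $b$ with $|\pi| \le \depth(P)$, and once to a neighbor $c'$ of $c$ on the opposite side of the first edge of $\pi$, yielding a shortest dual path $\pi'$ from $c'$ to a boundary cell $b'$ with $|\pi'| \le \depth(P)$. Concatenating $\pi$ reversed, the edge $\{c, c'\}$, and $\pi'$ produces a dual path $\rho$ from $b$ through $c, c'$ to $b'$ of length at most $2\depth(P) + 1$.

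The desired cut $\gamma$ is then the grid path on one chosen side of the \emph{ribbon} $\bigcup_{d \in \rho} d$, starting at a boundary grid vertex adjacent to $b$ and ending at a boundary grid vertex adjacent to $b'$. With careful bookkeeping on the ribbon's outline, the chosen side contributes at most $2\depth(P)$ grid edges, giving $|\gamma| \le 2\depth(P)$. Every cell incident to $\gamma$ lies on $\rho$ or is directly adjacent to it, so its dual distance to $c$ is at most $\depth(P) + 1 \le 2\depth(P)$. Non-triviality follows because any path along $\partial P$ between $\gamma$'s endpoints must detour entirely around the ribbon, incurring length at least $|\rho| > |\gamma|$, so both boundary paths exceed $|\gamma|$.

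The main obstacle will be handling the degenerate cases cleanly: (i) if $c$ itself lies on $\partial P$, the path $\pi$ has length zero, and a cut can be seeded directly from a boundary edge incident to $c$; (ii) if the chosen $c'$ is not in $P$, we pick a different neighbor of $c$; (iii) if $\pi$ and $\pi'$ terminate at the same boundary cell $b = b'$, the ribbon closes into a loop with no well-defined cut endpoints, which we resolve by perturbing the choice of $c'$ or extending $\pi'$ to reach a distinct boundary cell. A more delicate secondary issue is achieving the exact bound $|\gamma| \le 2\depth(P)$ (rather than $2\depth(P) + \mathcal{O}(1)$); this requires a careful local analysis near $b$ and $b'$, absorbing the boundary-adjacent grid edges of the ribbon's outline into $\partial P$ rather than counting them toward $\gamma$.
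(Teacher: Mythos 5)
There is a genuine gap, and it sits exactly where the lemma's content lies: \emph{non-triviality} of the constructed cut. Your justification is that any boundary path between the endpoints of $\gamma$ must ``detour entirely around the ribbon'' and therefore has length at least $|\rho|$. That is false for the boundary arc lying on the same side of the ribbon as $\gamma$: that arc never has to interact with the ribbon and can be as short as $\gamma$ itself. Concretely, let $P$ be a large square and let $c$ be a cell near a corner, say two steps in. Then $\pi$ and $\pi'$ both have length $1$ and terminate on the two boundary edges meeting at that corner; the ribbon hugs the corner, and the grid path along its side cuts off only a handful of corner cells whose connecting boundary arc around the corner is no longer than the cut --- by the paper's definition, this cut is trivial. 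Nothing in your construction forces $\pi$ and $\pi'$ to reach genuinely separated portions of $\partial P$, which is what non-triviality requires. Two secondary problems: a straight ribbon of $2\depth(P)+2$ cells generically has $2\depth(P)+2$ interior grid edges along either side (the side edges of $b$ and $b'$ need not lie on $\partial P$), so $|\gamma|\leq 2\depth(P)$ does not follow from absorbing edges near the endpoints; and the lemma asks for a \emph{geodesic} cut, a property that the flank of a concatenation of two shortest-paths-to-boundary need not have and which you never address.

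The paper's proof avoids all three issues with a different object: a \emph{maximal} axis-aligned square $Q\subseteq P$ containing the given cell. The depth bound forces $Q$ to have side length at most $2\depth(P)$, and maximality forces $\partial P$ to touch two \emph{opposing} sides of $Q$; joining the two touching vertices closest to the center of $Q$ yields a short, geodesic cut, and both boundary arcs of $P$ between its endpoints must circumvent $Q$, which is what makes the cut non-trivial. If you want to rescue the ribbon approach, you would need to replace ``shortest path to the boundary'' by a construction certifying that the two endpoints lie on portions of $\partial P$ that are far apart along the boundary --- which is essentially what the maximal square gives you for free.
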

\begin{proof}
    For any subpolyomino $Q\subseteq P$ that is a square with side length $q$, the \depth of $P$ is at least~$\nicefrac{q}{2}$, as the center cell(s) of $Q$ are closer to the boundary of $P$ than $\nicefrac{q}{2}$.
    This, of course, implies that no such square can have a greater side length than $2\depth(P)$.
    We say that~$Q$ is a maximal square if there is no larger square $Q'$ such that $Q\subseteq Q'\subseteq P$.
    This is the case exactly if at least two opposing boundary sides of~$Q$ contain boundary vertices of $P$.
    \begin{figure}[htb]
        \centering%
        \includegraphics[page=4]{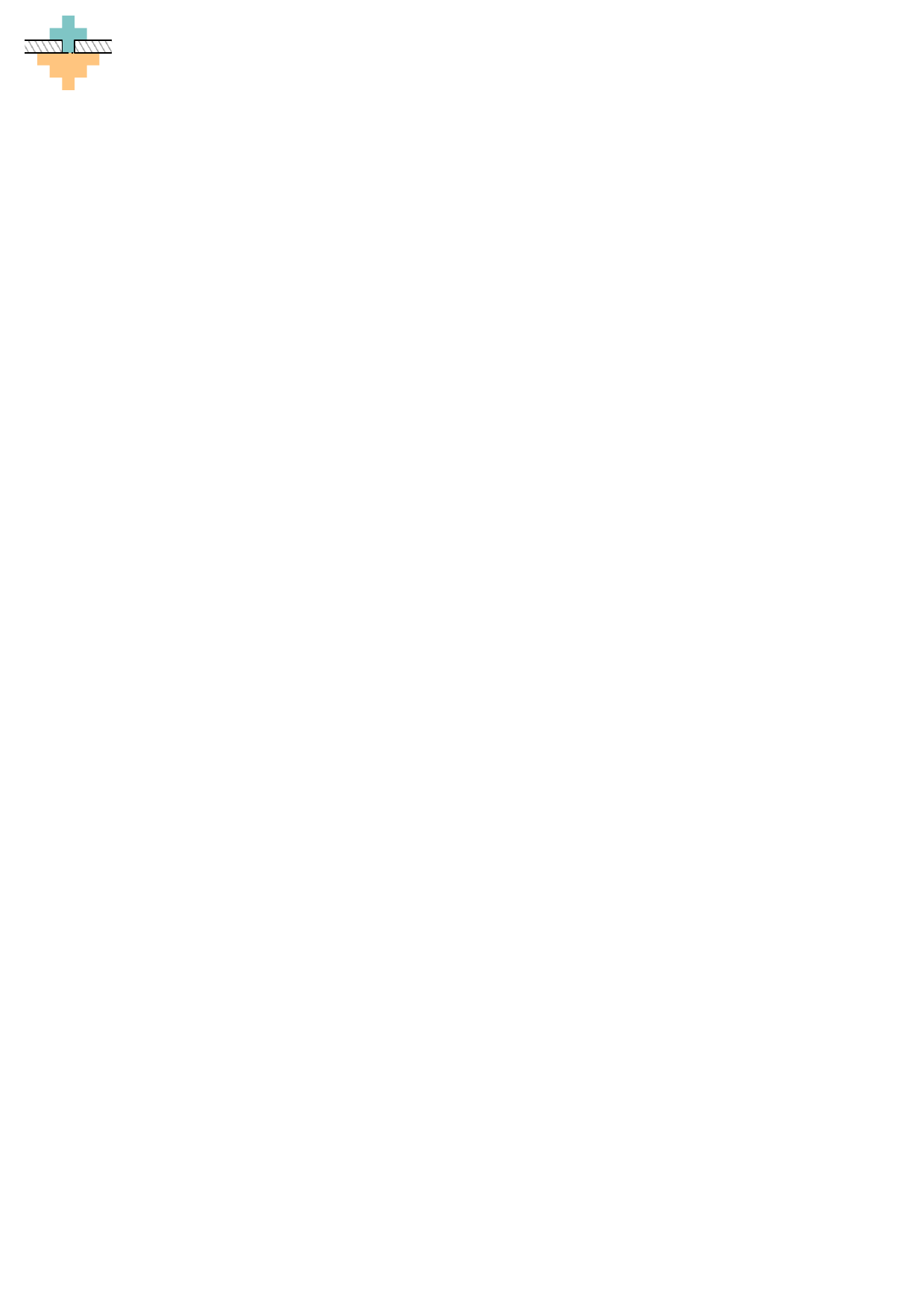}%
        \caption{A maximal square $Q$ with a corresponding geodesic cut.}
        \label{fig:geodesic-cut}
    \end{figure}

    Let~$Q$ be a maximal square and let $u,v$ be vertices on opposing boundary sides of $Q$ with minimal distance to its center, see~\cref{fig:geodesic-cut}.
    Then, there exists a $uv$-path through $Q$ that has length at most $2\depth(P)$.
    We conclude that any maximal square contains a geodesic, non-trivial cut of $P$ that has length at most $2\depth(P)$.
\end{proof}

    % !TeX root = 00-main.tex
\section{Bounded makespan for narrow instances}
\label{sec:upper-bounds}

In this section, we consider algorithms for bounded makespan in specific families of instances.
Our central result is an approach for asymptotically worst-case optimal stretch in \emph{narrow} instances, which we define as follows.
An instance of \diam~$\diam$ in a polyomino $P$ is \emph{narrow}, if and only if $\pi \cdot \diam \geq \depth(P)$ for some constant $\pi\in \mathbb{N}$, i.e., $\depth(P) \in \BigO(\diam)$.
Intuitively, these correspond to instances of large \diam relative to the domain's \depth.

\begin{restatable}{theorem}{theoremBottleneckStretch}
    \label{thm:bottleneck-stretch}
    Given an instance of \diam $\diam$ in a \solvable polyomino~$P$, we can efficiently compute an applicable schedule of makespan $\BigO(\nicefrac{(\diam+\depth(P))^2}{\bottleneck(P)})$.
    This is asymptotically worst-case optimal for narrow instances.
\end{restatable}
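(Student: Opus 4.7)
The plan is to handle the lower and upper bounds separately. The lower bound for narrow instances is immediate from \cref{prop:bottleneck-stretch-bound}: because $\depth(P) \in \BigO(\diam)$ for narrow instances, $(\diam+\depth(P))^2/\bottleneck(P) = \Theta(\diam^2/\bottleneck(P))$, matching the $\Omega(\diam^2/\bottleneck(P))$ bound proven there.

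For the upper bound, I would construct a two-stage algorithm in the spirit of the \tile-flow strategy of Demaine et al. The preparation step is to iteratively apply \cref{lem:distance-to-non-trivial-cut} to decompose the relevant portion of $P$ into a cover by \tiles: \solvable subpolyominoes of side length $\BigO(\depth(P))$, separated by non-trivial geodesic cuts of length $\BigO(\depth(P))$. By construction, each \tile can be sorted in $\BigO(\depth(P)^2)$ parallel transformations via \cref{thm:parallel-bubblesort}, and the \tile-adjacency tree has diameter $\BigO(1+\diam/\depth(P))$.

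In stage one, I would route every agent to the \tile containing its target cell along the \tile-adjacency tree via flow through the separating cuts; local adjacent-cell swaps are realized by \cref{lem:matching-realization}. Each cut supports at most $\BigO(\bottleneck(P))$ simultaneous crossings per transformation, and a straightforward counting argument --- using that any agent travels geodesic distance at most $\diam$ inside a region of ``thickness'' $\BigO(\depth(P))$ --- shows that the total agent-cut crossings sum to $\BigO((\diam+\depth(P))^2)$. Dividing by the sustained throughput $\bottleneck(P)$ per cut yields the targeted makespan $\BigO((\diam+\depth(P))^2/\bottleneck(P))$. Stage two applies \cref{thm:parallel-bubblesort} inside every \tile in parallel and costs an additional $\BigO(\depth(P)^2)$ transformations, which is absorbed by the stage-one bound.

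The main obstacle will be showing that stage one genuinely sustains throughput $\bottleneck(P)$ at every cut throughout its duration --- that is, that the pipeline never stalls. I would address this by reserving an $\BigO(\bottleneck(P))$-thick holding region inside each \tile adjacent to every boundary cut, acting as a reservoir of incoming and outgoing agents, and by scheduling the matchings of \cref{lem:matching-realization} so that each cut is continuously fed. Once this no-stall invariant is established, the upper bound follows, and combined with the lower-bound argument above yields the asymptotic worst-case optimality claim for narrow instances.
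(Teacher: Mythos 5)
Your lower-bound half is fine and matches the paper's reasoning: for narrow instances $\depth(P)\in\BigO(\diam)$, so the $\Omega(\nicefrac{\diam^2}{\bottleneck(P)})$ bound of \cref{prop:bottleneck-stretch-bound} certifies optimality. The upper-bound half has a genuine gap, and it is exactly the gap that the machinery of \cref{subsec:bounded-stretch-based-on-bottleneck} exists to close: you have no primitive that reorders a region of area $m$ in $\BigO(\nicefrac{m}{\bottleneck(P)})$ transformations. Your stage two sorts each \tile of area $\BigO(\depth(P)^2)$ via \cref{thm:parallel-bubblesort}, which costs $\BigO(\depth(P)^2)$, and this is \emph{not} absorbed by the target bound: for $\diam\approx\depth(P)$ and $\bottleneck(P)\approx 2\depth(P)$ the target is $\Theta(\depth(P))$ while stage two alone costs $\Theta(\depth(P)^2)$; even in narrow instances you are off by a factor of $\bottleneck(P)$. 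The same missing primitive undermines stage one: to sustain throughput $\Omega(\bottleneck(P))$ across a cut you must continuously deliver fresh agents to the cut from the \tile interior, i.e., permute the interior at a rate that a $\BigO(n)$-makespan sorter cannot guarantee, so your ``no-stall invariant'' rests on the very capability you lack. The paper supplies this capability as \cref{thm:sorting-in-n-by-bottleneck} and \cref{cor:sorting-watershed-union}, whose proofs require the \skeleton{} and \hinterland{} constructions (\cref{lem:skeleton-exists,lem:hinterland-reconfiguration}), the convex-grid routing bound of Alpert et al.\ (\cref{thm:convex-grid-graph-routing}), and the clique-routing argument of Banerjee and Richards; none of this is replaced by your outline.

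Beyond that, the paper deliberately avoids a flow/pipeline design altogether: it partitions $P$ into patches of area $\BigO((\diam+\depth(P))^2)$ via BFS on the \skeleton{} (with \cref{lem:bfs-cuts-are-short} bounding cut lengths), observes that every agent's target lies in its own patch or an adjacent one, and then sorts the \emph{union} of each patch with its children wholesale in $\BigO(\nicefrac{(\diam+\depth(P))^2}{\bottleneck(P)})$ steps, alternating over the two classes $\mathcal{F}_A$ and $\mathcal{F}_B$ of the bipartite patch tree. This sidesteps both the throughput analysis and the fact that a \solvable{} polyomino need not decompose into disjoint \solvable{} \tiles{} --- your decomposition step glosses over this, whereas the paper's patches are not even required to be connected and need explicit conflict handling where \hinterland{}s overlap. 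If you repair your outline by replacing every invocation of \cref{thm:parallel-bubblesort} with an $\BigO(\nicefrac{m}{\bottleneck(P)})$ sorter and the pipeline with pairwise sorting of adjacent regions, you will have essentially reconstructed the paper's proof.
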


As our proof is fairly involved, we proceed with the special case of \emph{scaled polyominoes} in~\cref{subsec:bounds-for-scaled-polyominoes}, which we extend to arbitrary polyominoes of limited depth in the subsequent~\cref{subsec:bounded-stretch-based-on-bottleneck}.
In each section, we first establish bounds on the makespan relative to a polyomino's area and the corresponding shape parameter.

\subsection{Bounded makespan and stretch based on scale}
\label{subsec:bounds-for-scaled-polyominoes}
We now investigate \emph{scaled polyominoes}, which we define as follows.

\begin{figure}[htb]
    \begin{subfigure}[t]{0.5\textwidth}
        \centering%
        \includegraphics[page=1]{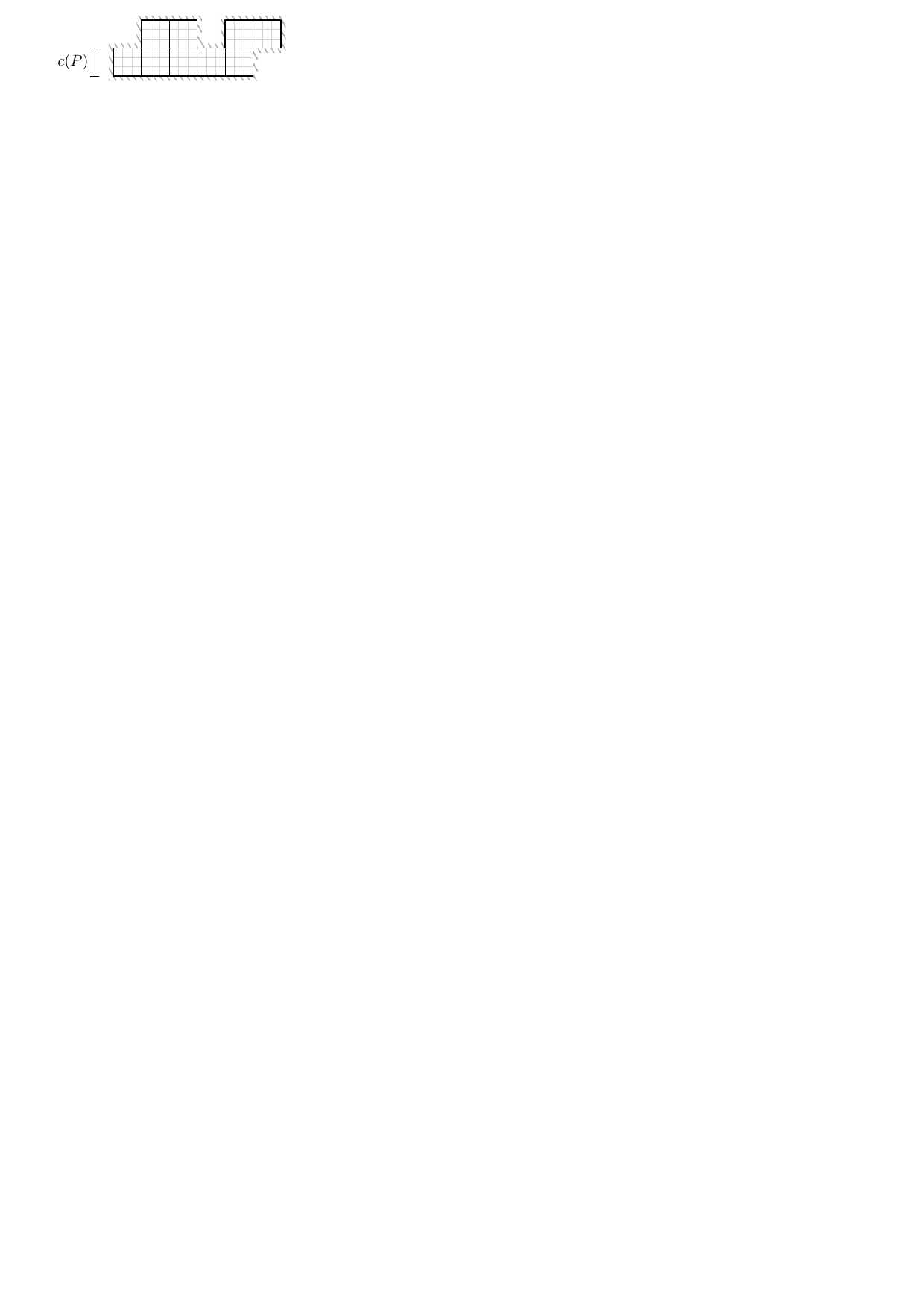}%
        \caption{A $3$-\scale{}d polyomino $P$ and its tiles.}
        \label{fig:scaled-polyomino}
    \end{subfigure}%
    \begin{subfigure}[t]{0.5\textwidth}
        \centering%
        \includegraphics[page=2]{./figures/c-scaled-1.5}%
        \caption{The tile dual graph of $P$.}
        \label{fig:polyomino-tile-dual}
    \end{subfigure}
    \caption{An illustration of a scaled polyomino $P$, its tiles, and their corresponding dual graph.}
\end{figure}

\descriptionlabel{Scaled polyomino.}
For any $\scale\in\mathbb{N}$, we say that a polyomino $P$ is $\scale$-\scale{}d exactly if it is composed of $\scale\times\scale$ squares that are aligned with a corresponding $\scale\times\scale$ integer grid.
We call these grid-aligned squares \emph{tiles}, which have a \emph{dual graph} analogous to that of a polyomino.
Finally, the \emph{\scale} of a polyomino $P$ is the largest integer $\scale(P)$ such that $P$ is $\scale$-\scale{}d.
This additionally represents a very natural lower bound on the \bottleneck, $\bottleneck(P)\geq\scale(P)$.
\medskip

\begin{proposition}
    \label{prop:n-by-scale-sorting}
    For any two configurations of a polyomino $P$ with area $n$ and $\scale(P)\geq3$, we can compute an \applicable schedule of makespan $\BigO(\nicefrac{n}{\scale(P)})$ in polynomial time.
\end{proposition}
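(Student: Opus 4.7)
The plan is to exploit the fact that $P$ decomposes into $N = n/c^2$ disjoint $c \times c$ tiles, where $c := \scale(P) \geq 3$, and whose tile dual graph (see~\cref{fig:polyomino-tile-dual}) is connected. I would proceed in two phases: a tile-level routing phase that brings every agent into its target tile, followed by an intra-tile sorting phase that places each agent at its target cell.

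For the routing phase, I would compute a spanning tree $\tau$ of the tile dual graph and apply a tree-based permutation routing schedule (e.g., the $3N$-round algorithm of Alon, Chung, and Graham~\cite{AlonCG94}) to obtain a sequence of $\BigO(N)$ matching-rounds on~$\tau$. For every matched edge $\{A, B\}$ in a round, I realize a \emph{tile-swap}: any desired permutation of the $2c^2$ agents in the $c \times 2c$ rectangle $A \cup B$, implemented by applying \rotatesort to this rectangle. \rotatesort uses $\BigO(c)$ parallel adjacent-swap rounds, each realized in $\BigO(1)$ real steps via~\cref{lem:matching-realization}. Since matched edges are vertex-disjoint in $\tau$, the $c \times 2c$ regions involved in a round are pairwise disjoint, so their tile-swaps execute in parallel. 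The total cost of the routing phase is thus $\BigO(N) \cdot \BigO(c) = \BigO(n/c)$.

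For the intra-tile sorting phase, each tile now contains the correct multiset of agents, only possibly in the wrong order. Apply \rotatesort within each $c \times c$ tile in parallel; because $c \geq 3$, the $3 \times 3$ simulation windows underlying~\cref{lem:matching-realization} fit entirely inside each tile, so different tiles do not interfere. This phase contributes $\BigO(c)$ real steps. Summing both phases, and using $c \leq \sqrt{n}$, the overall makespan is $\BigO(n/c) + \BigO(c) = \BigO(n/c)$; all subcomputations (spanning tree, routing schedule, local swap patterns) run in polynomial time.

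The main obstacle will be justifying the reduction to $\BigO(N)$ tile-swap rounds when each tile holds $c^2$ agents rather than a single token. The cited permutation-routing bound is stated for one token per node, so either the Alon--Chung--Graham analysis has to be re-examined in the bulk setting, where a matched-edge tile-swap is strictly more powerful than a single-token swap and the underlying potential-function argument should carry through, or the routing has to be recast as a \bubblesort-like sweep along a DFS linearization of $\tau$, in which each of $\BigO(N)$ sweeps corrects at least one tile's content via appropriately chosen tile-swaps.
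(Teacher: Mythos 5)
Your overall architecture (grid-aligned $\scale\times\scale$ tiles, coarse-grained permutation routing, \rotatesort on $\scale\times 2\scale$ rectangles, and $\BigO(1)$-step simulation of adjacent swaps via \cref{lem:matching-realization}) matches the paper's. However, the step you yourself flag as the main obstacle is a genuine gap, and it sits exactly at the quantitative crux of the proposition. The bound of Alon, Chung, and Graham~\cite{AlonCG94} gives $\BigO(N)$ rounds for routing \emph{one} token per vertex of an $N$-vertex tree; you need $\BigO(N)$ rounds of \emph{tile-swaps} when every super-vertex carries $\scale^2$ tokens and a matched edge may arbitrarily redistribute $2\scale^2$ of them. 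Neither of your two proposed repairs is carried out: the ``re-examine the potential function in the bulk setting'' route is not a citation but a new proof obligation (the ACG argument recurses on subtrees and is not obviously capacity-aware), and the \bubblesort-style sweep that ``corrects at least one tile per sweep'' would a priori only give $\BigO(N)$ sweeps of $\BigO(N)$ rounds each, i.e., $\BigO(N^2)$ rounds, which is far too weak. Without one of these being completed, the claimed $\BigO(N)\cdot\BigO(\scale)=\BigO(\nicefrac{n}{\scale})$ makespan for the routing phase is unsupported.

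The paper avoids this issue by never leaving the single-token regime: it defines an auxiliary graph $G_\scale$ on the \emph{cells} of $P$, in which each tile induces a clique $K_{\scale^2}$ and adjacent tiles are joined by complete bipartite edges, and then invokes the theorem of Banerjee and Richards~\cite{BanerjeeR17} for $h$-connected graphs ($rt(G)\in\BigO(rt(G_h)\cdot\nicefrac{n}{h})$ with $h\geq\scale^2-1$ and $G_h$ a tile clique with $rt(K_{\scale^2})=2$) to obtain a routing sequence of $\BigO(\nicefrac{n}{\scale^2})$ matchings directly; each such matching is then realized in $\BigO(\scale)$ transformations by parallel \rotatesort within tiles plus a cover of the tile dual grid graph by four matchings for the inter-tile swaps. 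In effect, the bulk-routing statement you need is precisely what \cite{BanerjeeR17} provides, so your argument can be repaired by replacing the appeal to tree routing on the tile graph with this clique-based construction (or by proving the $\BigO(N)$-round bulk tree-routing lemma yourself); as written, the proof is incomplete.
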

\begin{proof}
    We model our problem as an instance of \textsc{Permutation Routing}, taking note of two significant results regarding the routing number of specific graph classes.
    Recall that the routing number $rt(G)$ of a specific graph~$G$ refers to the maximum number of necessary routing operations to transform one labeling of~$G$ into another.
    For the complete graph ${K}_n$ with $n$ vertices, it was shown by Alon, Chung, and Graham~\cite{AlonCG94} that $rt({K}_n) = 2$.
    Furthermore, we make use of a result by Banerjee and Richards~\cite{BanerjeeR17} which states that
    for an $h$-connected graph $G$ and any connected $h$-vertex induced subgraph $G_h$ of $G$, the routing number $rt(G)$ is in $\BigO(rt(G_h)\cdot \nicefrac{n}{h})$.
    They also describe an algorithm that determines a routing sequence that matches this bound.

    Given a polyomino $P$ and two configurations ${C_1,C_2\in\configurations(P)}$, our goal is to define a secondary graph over the vertices of the dual graph $\dual(P) = (V,E)$ such that a routing sequence over this graph can be transformed into a schedule $C_1\rightrightarrows C_2$ of makespan $\BigO(\nicefrac{n}{\scale(P)})$.

    We define $G_\scale = (V,E_\scale)$ such that $\{u,v\}\in E_\scale$ exactly if the cells $u$ and $v$ are located in the same $\scale(P)\times\scale(P)$ tile, or two adjacent tiles.
    As a result, the cells of each tile in $P$ form a clique, i.e., their induced subgraph is isomorphic to $K_{\scale(P)^2}$.
    Furthermore, the cliques of cells in any two adjacent tiles are connected by a set of complete bipartite edges, so they also form a clique.
    Hence, $G_\scale$ is $h$-connected for $h\geq\scale(P)^2 - 1$ and contains $\nicefrac{n}{\scale(P)^2}$ cliques of order at least $\scale(P)^2$.
    Due to Banerjee and Richards~\cite{BanerjeeR17}, we conclude that $rt(G_\scale)$ is in $\BigO(rt(K_{\scale^2})\cdot \nicefrac{n}{h})=\BigO(\nicefrac{n}{\scale(P)^2})$ and can therefore compute a sequence of $\BigO(\nicefrac{n}{\scale(P)^2})$ matchings to route between any two labelings of $\dual(P)$, which correspond to configurations of $P$.

    It remains to argue that we can realize the swaps induced by any matching in $G_\scale$ by means of $\BigO(\scale(P))$ transformations.
    All pairwise swaps between cells within the same tile can be realized by applying \rotatesort to all tiles in parallel, taking $\BigO(\scale(P))$ transformations.

    We therefore turn our attention to swaps between adjacent tiles.
    Observe that the dual graph of the tiles of $P$ is a minor of $G_\scale$; contracting the vertices in each of the tile-cliques defined above will give us a corresponding grid graph.
    Swaps between adjacent tiles can therefore be realized in four phases by covering this grid graph by matchings, and applying \rotatesort to the union of matched tile pairs in parallel, again taking $\BigO(\scale(P))$ transformations.
    A cover by four matchings can be determined by first splitting the edges of the dual graph into two sets of horizontal and vertical edges, respectively.
    Each of these edge sets then induces a collection of paths in the tiling's dual graph, and can therefore be covered by two matchings.

    We conclude that constantly many phases of parallel applications of \rotatesort suffice to realize any matching in $G_\scale$.
    As $\BigO(\nicefrac{n}{\scale(P)^2})$ matchings can route between any two configurations of $P$, we conclude that this method yields schedules of makespan $\BigO(\nicefrac{n}{\scale(P)})$.
\end{proof}

We now apply this intermediate result to compute schedules of bounded stretch in narrow instances of scaled polyominoes.
Our approach hinges on the ability to divide the instance into subproblems that can be solved in parallel, which corresponds to cutting the polyomino and performing a sequence of preliminary transformations such that each subpolyomino can then be reconfigured locally, obtaining the target configuration.

\medskip
\descriptionlabel{Domain partitions.}
A \emph{partition} of a polyomino $P$ corresponds to a set of disjoint subpolyominoes that cover $P$.
We observe that not every polyomino permits a partition into disjoint \solvable subpolyominoes.
However, any subpolyomino of a \solvable polyomino $P$ can be made \solvable by including cells of geodesic distance at most $2$ in $P$.
\medskip
Given a partition of a polyomino, the critical agents to create independent instances are those in close vicinity to the cuts.
To characterize proximity to cuts, we define the following.

\medskip
\descriptionlabel{Cut expansions.}
The \emph{$k$-expansion} of a cut through a polyomino $P$ corresponds to all cells in $P$ from which there exists a geodesic path of length at most $k$ that crosses the cut.

\begin{lemma}
    \label{lem:cut-expansion-area}
    The $k$-expansion of a cut of length $m$ contains $\BigO(k^2 + mk)$ cells.
\end{lemma}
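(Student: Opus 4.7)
The plan is to show that every cell in the $k$-expansion lies inside a Euclidean tubular neighborhood of the cut of width $\BigO(k)$, then to bound the area of this neighborhood via a simple covering argument. First, I would interpret the cut as a simple polygonal curve $\gamma\subset\mathbb{R}^2$ of Euclidean length $m$, assembled from $m$ unit-length edges of the primal integer grid. If a cell $c$ lies in the $k$-expansion, some geodesic sequence of at most $k$ moves starting at $c$ crosses $\gamma$ at the midpoint of some dual edge $e$. Each move shifts the current cell center by one in the $L^1$-metric, so the center of $c$ has $L^1$-distance (and therefore Euclidean distance) at most $k$ from the midpoint of $e$, which lies on $\gamma$. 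Consequently, the $k$-expansion is contained in the tubular neighborhood $N := \gamma \oplus B(0,\,k+\BigO(1))$, where the additive slack absorbs the $\nicefrac{1}{2}$-offset between cell centers and grid vertices.

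To bound $|N|$, I would sample $\lceil m/k \rceil + 1 = \BigO(\nicefrac{m}{k}+1)$ points along $\gamma$ at arc-length spacing $k$. Every point of $N$ lies within distance $k+\BigO(1)$ of $\gamma$, and hence within distance $\BigO(k)$ of one of these samples. Thus $N$ is covered by $\BigO(\nicefrac{m}{k}+1)$ disks of radius $\BigO(k)$, giving
\[
|N| \;=\; \BigO\!\big((\nicefrac{m}{k}+1)\cdot k^2\big) \;=\; \BigO(mk + k^2).
\]
Since grid cells have unit area and each cell in the $k$-expansion has its center in $N$, the number of such cells is also $\BigO(mk+k^2)$, as claimed.

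The main subtlety is carefully reconciling the three notions of distance in play — geodesic distance in $\dual(P)$, $L^1$-distance between cell centers, and Euclidean distance to the planar curve $\gamma$ — and ensuring the additive slack between them does not accumulate into a multiplicative loss over the length of the cut. Cleanly setting up the containment $k\text{-expansion}\subseteq N$ is the only real obstacle; once the tubular neighborhood is in place, the covering argument bounding $|N|$ is routine.
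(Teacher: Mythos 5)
Your proof is correct, but it takes a genuinely different route from the paper's. The paper argues incrementally over the cut length: it first asserts that a length-$1$ cut has a $k$-expansion of size $\BigO(k^2)$, and then shows that lengthening the cut by one unit adds only $\BigO(k)$ cells, because for two adjacent cells the geodesic balls of radius $k$ around them have symmetric difference $\BigO(k)$ (anything within distance $k-1$ of one is within distance $k$ of the other). Your argument instead passes to the plane: you embed the cut as a curve $\gamma$ of Euclidean length $m$, show that geodesic distance in $\dual(P)$ dominates $L^1$ and hence Euclidean distance between cell centers, so the $k$-expansion sits inside a tubular neighborhood of $\gamma$ of width $k+\BigO(1)$, and then bound that neighborhood's area by covering it with $\BigO(\nicefrac{m}{k}+1)$ disks of radius $\BigO(k)$. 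Both are sound; your version replaces the paper's somewhat informal base case (``clearly $\BigO(k^2)$ cells'') with a clean packing bound and is arguably more self-contained, at the cost of the bookkeeping you flag between graph, $L^1$, and Euclidean metrics. One small tightening: the count of cells is bounded by the \emph{area} of the neighborhood only once you either inflate $N$ by an extra $\nicefrac{\sqrt{2}}{2}$ so that the cells themselves (not just their centers) are contained in it, or observe that unit cells with centers in $N$ are pairwise disjoint and lie in such an inflation; this is absorbed by the $\BigO(1)$ slack you already carry, so it is a presentational fix rather than a gap.
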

\begin{proof}
    Consider a cut of length $1$.
    Clearly,~$\BigO(k^2)$ cells in any polyomino have incident geodesic paths of length~$\leq k$ that cross this cut.
    Furthermore, extending the cut by a constant distance can only ever increase the area by $\BigO(k)$, as illustrated in~\Cref{fig:cut-expansion}.

    \begin{figure}[htb]
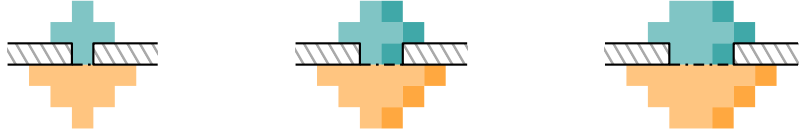

        \hfil%
        \includegraphics[page=1]{./figures/cut-expansion}%
        \hfil%
        \includegraphics[page=2]{./figures/cut-expansion}%
        \hfil%
        \includegraphics[page=3]{./figures/cut-expansion}%
        \hfil%
        \caption{The $3$-expansions of cuts of length $m\in\{1,2,3\}$ and the associated difference in area.}
        \label{fig:cut-expansion}
    \end{figure}

    Now consider two adjacent cells $v,w$ in $P$.
    A cell which has geodesic distance at most~$k-1$ from~$v$ is clearly reachable from~$w$ in at most $k$ steps, and vice versa.
    Therefore, the symmetric difference of cells reachable in at most $k$ steps from either cell has a cardinality in $\BigO(k)$.
    This directly implies that the $k$-expansion of a cut of length $m$ can contain at most $\BigO(k)$ cells more than is possible for a cut of length $m-1$, i.e., $\BigO(k^2 + mk)$ cells in total.
\end{proof}

For the purpose of this paper, we compute domain partitions using breadth-first search across the dual graph of the given polyomino, using the following notation.

\medskip
\descriptionlabel{Breadth-first search.}
For any polyomino $Q$, let $\texttt{BFS}(Q,v,r)$ refer to the subpolyomino of~$Q$ that contains all cells reachable from some cell $v$ in $Q$ by geodesic paths of length at most~$r$.
Further, let $\overline{\texttt{BFS}}(Q,v,r)$ refer to the set of connected components of $Q\setminus\texttt{BFS}(Q,v,r)$.
We define the \emph{wavefront} of $\texttt{BFS}(Q,v,r)$ as the set of cuts through $Q$ that define the components of $\overline{\texttt{BFS}}(Q,v,r)$.
Each connected component (cut) of the wavefront is called a \emph{wavelet}.

\begin{lemma}
    \label{lem:bfs-cuts-are-short}
    For any polyomino $Q$, the wavefront of \emph{$\texttt{BFS}(Q,v,r)$} consists of wavelets of length $\BigO(\depth(Q))$ each, i.e., the wavelet length is independent of the search radius~$r$.
\end{lemma}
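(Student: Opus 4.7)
The plan is to bound the length of any single wavelet $W$ by $\BigO(\depth(Q))$, independent of the search radius $r$, by leveraging the fact that $Q$ is locally covered by maximal squares of side length at most $2\depth(Q)$.

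I would begin by characterising a wavelet $W$ structurally: it is a cut through $Q$ from a boundary point $a$ to a boundary point $b$ that separates $\texttt{BFS}(Q,v,r)$ from exactly one connected component $C$ of the complement $\overline{\texttt{BFS}}(Q,v,r)$. Every cell of the BFS ball incident to $W$ has geodesic distance exactly $r$ from $v$, while every cell of $C$ incident to $W$ has geodesic distance exactly $r+1$ from $v$. The wavelet therefore lies precisely on the interface between the BFS level sets of radii $r$ and $r+1$, and it inherits a natural ordering of its inside-neighbour cells $u_1,\dots,u_\ell$ from the traversal of $W$ between $a$ and $b$.

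Next, I would use the depth bound to confine $W$ geometrically. By the argument in the proof of \cref{lem:distance-to-non-trivial-cut}, $Q$ is covered by maximal squares of side length at most $2\depth(Q)$, each of which contains a non-trivial geodesic cut of the same length. The plan is to argue that if $W$ had length strictly larger than some constant multiple of $\depth(Q)$, one could select two inside-neighbour cells $u_i, u_j$ of $W$ that are far apart along the wavelet yet still both lie at geodesic distance $r$ from $v$. Using the depth-witnessing paths from $u_i$ and $u_j$ to $\partial Q$, each of length at most $\depth(Q)$, one can connect these cells through the interior of the BFS ball via a path significantly shorter than the portion of $W$ between them; concatenating this shortcut with one of the $r$-step BFS paths would then yield a path from $v$ to a cell of $C$ adjacent to $W$ of length strictly less than $r+1$, contradicting its BFS-distance. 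The independence from $r$ falls out because this entire argument relies only on the local depth-bounded structure of $Q$ and not on the absolute distance from $v$.

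The main obstacle will be to make this shortcut argument rigorous. The tricky part is that a depth-witnessing path from a wavelet cell to $\partial Q$ may itself cross $W$ or traverse cells of $C$, in which case the naive comparison of distances breaks down. I expect to resolve this via a case distinction based on whether the depth-witnessing path remains in the BFS ball or crosses into $C$: in the first case the shortcut is immediate, and in the second case one obtains a bound on the length of the sub-wavelet between the two exit points and then iterates. Combined with the uniform side-length bound $2\depth(Q)$ on maximal squares, this yields the desired bound of $\BigO(\depth(Q))$ on the length of $W$.
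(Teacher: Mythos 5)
There is a genuine gap, and it sits at the heart of your argument. Your plan is to derive a contradiction from the existence of two wavelet cells $u_i,u_j$ that are far apart along $W$, by building a short $u_i$--$u_j$ shortcut out of their depth-witnessing paths to the boundary of $Q$. But those two paths end at two \emph{different} boundary points, and in a polyomino of small \depth these can be geodesically very far apart (think of a width-one spiral: every cell is adjacent to the boundary, yet boundary cells are pairwise arbitrarily far apart in $\dual(Q)$). So reaching the boundary from both cells does not connect them, and no short path between $u_i$ and $u_j$ results. Worse, even if such a shortcut existed inside the \texttt{BFS} ball, no contradiction follows: every cell of $C$ adjacent to $W$ genuinely has distance $r+1$ from $v$, and any path from $v$ to it must cross $W$ at a distance-$r$ cell, so it has length at least $r+1$ no matter how short your detour is. A long wavelet is perfectly consistent with correct \texttt{BFS} distances --- the wavefront in a large square already has length $\Theta(r)$ --- so the statement cannot be proved by exhibiting an ``impossible'' distance; it is a statement about how the domain's geometry forces the wavefront to fragment.

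The paper's proof supplies exactly the mechanism your proposal lacks: an induction on the search radius. Using \cref{lem:distance-to-non-trivial-cut}, every $2\depth(Q)$ steps each wavelet must touch, and within another $2\depth(Q)+1$ steps completely traverse, a non-trivial geodesic cut $\Gamma$ of length at most $2\depth(Q)$. The new wavelet $W'$ on the far side of $\Gamma$ is then confined to the $2\depth(Q)$-expansion of $\Gamma$, and a counting argument via \cref{lem:cut-expansion-area} shows that only $\BigO(\depth(Q))$ cells there can lie at distance exactly $2\depth(Q)$ from the old wavelet, so the wavelet's length ``resets'' to $\BigO(\depth(Q))$ each time it passes such a gate. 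Your write-up never uses the short non-trivial cuts as separating gates and has no mechanism that controls the wavelet once it is far from $v$; without that, the claimed independence from $r$ does not follow. I would redirect your effort toward formalizing the traversal-and-reset argument rather than the shortcut contradiction.
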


\begin{proof}
    We argue by induction over the search radius:
    For an arbitrary but fixed cell $v$ and any integer $i\in\mathbb{N}$, each wavelet on the wavefront of $\texttt{BFS}(Q,v,i)$ has length at most $48\depth(Q)$.

    For the base case of $\texttt{BFS}(Q,v,0)$, our induction hypothesis holds true; the wavefront here has length at most $4$.
    Note that the length of a wavefront (which corresponds to the sum of lengths of its wavelets) grows at most linearly in the search radius; it cannot be more than twelve times the search radius.
    Therefore, our hypothesis is also true for any $i\leq 2\depth(Q)$.

    Consider now any $i\geq 2\depth(Q)$ such that the hypothesis is true for all $j\leq i$, i.e., the wavefront of $\texttt{BFS}(Q,v,j)$ does not contain a wavelet that is longer than $48\depth(Q)$ if $j\leq i$.

    Recall~\cref{lem:distance-to-non-trivial-cut} and~\cref{fig:geodesic-cut}:
    Every cell in $Q$ has geodesic distance at most $2\depth(Q)$ to a non-trivial geodesic cut of length at most $2\depth(Q)$.
    This means that there exists some $j\in[i-2\depth(Q),i]$ such that a wavelet~$W$ of $\texttt{BFS}(Q,v,j)$ touches, but does not intersect, a non-trivial geodesic cut $\Gamma$ of length at most $2\depth(Q)$.
    This corresponds to the paths of $W$ and~$\Gamma$ sharing edges, but no two edges of $W$ being separated by $\Gamma$.
    We show that by $j+2\depth(Q)+1$,
    \smallskip
    \begin{description}
        \item[(a)] the corresponding wavelet has completely passed over the cut, resulting in a wavelet~$W'$ that is separated from~$W$ by $\Gamma$, and \item[(b)] $W'$ has length at most~$48\depth(Q)$.
    \end{description}

    \begin{figure}[htb]
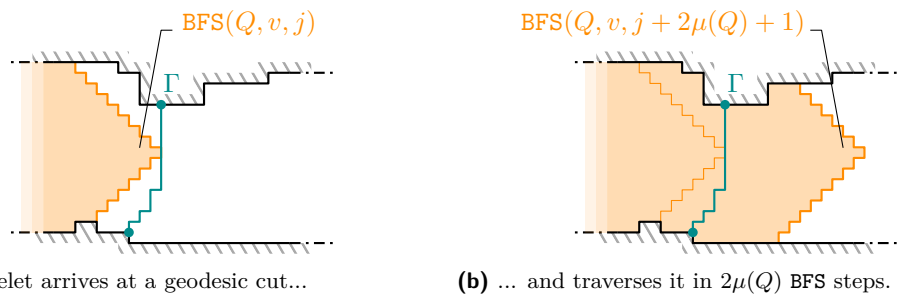

        \begin{subfigure}[t]{0.47\textwidth}%
            \centering%
            \includegraphics[page=5]{cut-expansion}%
            \caption{A wavelet arrives at a geodesic cut...}
            \label{fig:wavelets-are-short-a}
        \end{subfigure}%
        \hfill%
        \begin{subfigure}[t]{0.47\textwidth}%
            \centering%
            \includegraphics[page=6]{cut-expansion}%
            \caption{... and traverses it in $2\depth(Q)$ \texttt{BFS} steps.}
            \label{fig:wavelets-are-short-b}
        \end{subfigure}%
        \caption{An illustration of a wavelet's interaction with short geodesic cuts.}
        \label{fig:wavelets-are-short}
    \end{figure}

    Due to triangle inequality, (a) is trivial:
    $\texttt{BFS}(Q,v,j+1)$ contains at least one cell adjacent to $\Gamma$ on either side.
    As $\Gamma$ has length at most $2\depth(Q)$, the nature of $\texttt{BFS}$ implies that $\texttt{BFS}(Q,v,j+2\depth(Q)+1)$ contains all cells adjacent to $\Gamma$, so $\Gamma$ does not intersect $W'$.

    It remains to argue (b) by bounding the length of the wavelet $W'$ that traversed $\Gamma$.
    As~only $2\depth(Q)$ steps were made from $W$, and $W'$ does not intersect $\Gamma$, $W'$ lies completely in the $2\depth(Q)$-expansion of $\Gamma$.
    Assume that $\Gamma$ separates $Q$ into two polyominoes $Q_1$ and $Q_2$.
    It~now suffices to limit the number of cells in $Q_2$ that have geodesic distance exactly $2\depth(Q)$ to the original wavefront $W$ in $Q_1$.

    We argue on a cell-by-cell basis in $Q_2$.
    Consider a cell $q$ in $Q_2$ that has geodesic distance~$m$ to the cut $\Gamma$, such that
    $q$ has geodesic distance exactly $2\depth(Q)$ to a point on the original wavefront.
    Any geodesic path from $q$ to the closest point $v$ on $W$ crosses $\Gamma$.
    Let $x$ refer to the crossing vertex, as illustrated in~\cref{fig:wavefront-crossing-distance}.
    \begin{figure}[hb]
        \centering
        \includegraphics{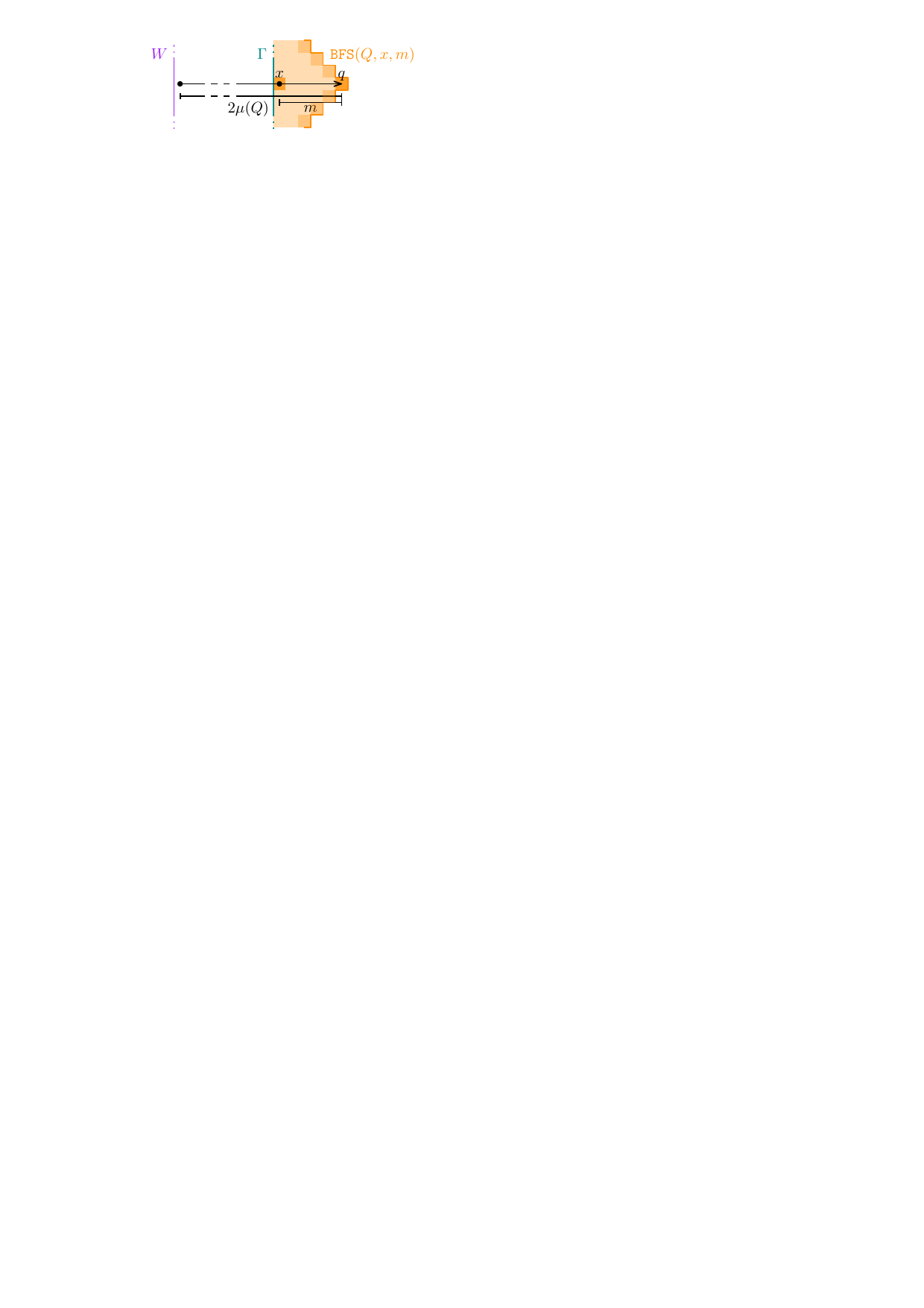}
        \caption{Illustration for~\cref{lem:bfs-cuts-are-short}: A point $q$, separated from $W$ by the cut $\Gamma$.}
        \label{fig:wavefront-crossing-distance}
    \end{figure}

    It follows that all vertices within geodesic distance at most $m-1$ to~$x$ have geodesic distance less than $2\depth(Q)$ to $v$.
    Due to~\cref{lem:cut-expansion-area}, there are only $\BigO(\depth(Q)^2)$ many candidate positions for $q$, of which a number quadratic in $m$ are therefore guaranteed to be closer than~$2\depth(Q)$ to the original wavefront.
    It follows that the number of cells in $Q_2$ with geodesic distance exactly $2\depth(Q)$ to the original wavefront is bounded by~$\BigO(\depth(Q))$, i.e., the wavelet~$W'$ has length~$\BigO(\depth(Q))$.
\end{proof}

Having established all necessary tools, we prove the following statement.
\begin{proposition}
    \label{prop:scale-stretch}
    Given an instance of \diam $\diam$ in a polyomino $P$ with $\scale(P)\geq 3$, we can efficiently compute an applicable schedule with makespan $\BigO(\nicefrac{(\diam+\depth(P))^2}{\scale(P)})$.
    This is asymptotically worst-case optimal for narrow instances.
\end{proposition}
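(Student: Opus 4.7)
The plan is to reduce to \cref{prop:n-by-scale-sorting} by cutting $P$ into subdomains of controlled area and reconfiguring each in parallel. Pick any cell $v\in P$ and iteratively carve out $\texttt{BFS}(P',v',R)$ for $R=\Theta(\diam+\depth(P))$ from the remaining region $P'$. By \cref{lem:bfs-cuts-are-short} every wavelet produced by such a BFS has length $\BigO(\depth(P))$, so \cref{lem:cut-expansion-area} bounds the area of each resulting piece by $\BigO((\diam+\depth(P))^2)$. Thickening each piece by its $\BigO(\depth(P))$-expansion along its bounding wavelets keeps the area asymptotically unchanged, preserves the $\scale(P)$-scaled structure, and makes each piece \solvable (as noted in the domain partitions paragraph), so \cref{prop:n-by-scale-sorting} applies locally at cost $\BigO((\diam+\depth(P))^2/\scale(P))$.

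Because $R>2\diam$, every agent's start and target either lie in the same piece or in two pieces separated by a single wavefront. We therefore run a constant number of global \emph{hand-off} phases: in each phase we simultaneously operate inside the $\BigO(\depth(P))$-expansion of one wavefront (these expansions are pairwise disjoint after a $\BigO(1)$-coloring of wavefronts) and use \cref{prop:n-by-scale-sorting} to route agents that need to cross that wavefront into their correct side. The expansion of a single wavefront of length $\BigO(\depth(P))$ has area $\BigO(\depth(P)^2)$ by \cref{lem:cut-expansion-area}, so each hand-off costs $\BigO(\depth(P)^2/\scale(P))$. After all hand-offs, every agent is in its final piece, and a single parallel invocation of \cref{prop:n-by-scale-sorting} on each piece completes the schedule. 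Summing yields total makespan $\BigO((\diam+\depth(P))^2/\scale(P))$.

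For worst-case optimality on narrow instances, specialize the construction behind \cref{prop:bottleneck-stretch-bound} so that the bottleneck passage has width exactly $\scale(P)$; this is feasible because the lower-bound instances consist of two large regions joined by a passage whose width can be chosen freely. Since $\bottleneck(P)\geq\scale(P)$ in general and we can achieve equality, the bound $\Omega(\diam^2/\bottleneck(P))=\Omega(\diam^2/\scale(P))$ holds. For narrow instances $\depth(P)\in\BigO(\diam)$, hence $(\diam+\depth(P))^2/\scale(P)=\Theta(\diam^2/\scale(P))$, matching the lower bound up to constants.

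The main obstacle I expect is the hand-off step: since adjacent swaps are forbidden, agents crossing a wavefront must be exchanged along cycles that live entirely inside the $\BigO(\depth(P))$-expansion of the wavefront. I need to verify that this expansion is itself a \solvable, $\scale(P)$-scaled subpolyomino on which \cref{prop:n-by-scale-sorting} can legitimately be called, and that the agents involved are exactly those within $\diam$ of the wavefront, so that their number (and the routing cost) stays within the expansion's area. A careful choice of the BFS radius $R$ relative to $\diam$, together with the $\BigO(1)$-coloring of wavefronts so that all active expansions in one phase are disjoint, should make these requirements go through without inflating the makespan beyond the claimed bound.
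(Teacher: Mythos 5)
Your overall architecture (BFS partition into $\scale(P)$-scaled pieces of area $\BigO((\diam+\depth(P))^2)$, then local parallel sorting via \cref{prop:n-by-scale-sorting}) matches the paper's Phase~(I), but your \emph{hand-off} step contains a genuine gap that breaks the argument. You propose to route all agents that must cross a wavefront by working only inside the $\BigO(\depth(P))$-expansion of that wavefront, which has area $\BigO(\depth(P)^2)$ and which you charge $\BigO(\nicefrac{\depth(P)^2}{\scale(P)})$ per phase, with $\BigO(1)$ phases. But the agents that need to cross a wavefront are those whose start lies within geodesic distance $\diam$ of it on the wrong side, and in the narrow regime $\depth(P)\in\smallo(\diam)$ there can be $\Omega(\diam^2)$ such agents (this is exactly the situation in the lower-bound instances of \cref{prop:bottleneck-stretch-bound}). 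These agents do not fit inside a region of area $\BigO(\depth(P)^2)$, and after your hand-offs they would \emph{not} all be in their final piece. Worse, your claimed total crossing cost $\BigO(\nicefrac{\depth(P)^2}{\scale(P)})$ is asymptotically smaller than the unavoidable $\Omega(\nicefrac{\diam^2}{\bottleneck(P)})$ crossing time from \cref{prop:bottleneck-stretch-bound}, so the accounting cannot be repaired by only tightening constants; the region in which the exchange happens must have area $\Theta((\diam+\depth(P))^2)$. You flag this yourself as the step you ``need to verify,'' but being within $\diam$ of the wavefront is not the same as being within $\BigO(\depth(P))$ of it, and the verification fails.

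The paper resolves this differently: it does not localize the exchange to a neighborhood of each cut. Instead, the BFS partition induces a rooted tree $T$ of patches, and the radius is chosen so that every agent's target lies in its own patch, its parent, or a sibling. One then forms, for each patch, the spatial union $F_i$ of that patch with all of its children --- a $\scale(P)$-scaled region of area $\BigO((\diam+\depth(P))^2)$ --- and uses the bipartiteness of $T$ to split the $F_i$ into two families of pairwise disjoint regions. Three rounds of \cref{prop:n-by-scale-sorting} (one per family, plus a final per-patch cleanup), each costing $\BigO(\nicefrac{(\diam+\depth(P))^2}{\scale(P)})$, complete the schedule. Note also that your claim that start and target are ``separated by a single wavefront'' overlooks the sibling case, which the parent-plus-children unions handle automatically. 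Your optimality argument via \cref{prop:bottleneck-stretch-bound} for narrow instances is fine.
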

\begin{proof}
    We consider an instance of \diam~$\diam$ in a simple polyomino $P$.
    We proceed in three phases, which we briefly outline before giving an in-depth description of each.
    \smallskip
    \begin{enumerate}[(I)]
        \item We partition $P$ into  $\scale(P)$-\scale{}d \emph{patches} of area $\BigO(\diam^2)$, using non-trivial cuts of bounded length such that the partition's dual graph is a rooted tree $T$.
        \item We combine parent/child patches according to $T$ into regions with $\scale(P)$ \scale, allowing us to apply~\cref{prop:n-by-scale-sorting} to reorder them in $\BigO(\nicefrac{(\diam+\depth(P))^2}{\scale(P)})$.
        \item Finally, we exploit these combined regions to place all agents at their destination.
    \end{enumerate}

    \descriptionlabel{Phase (I).}
    A step-by-step illustration of Phase (I) can be found in~\cref{fig:determining-patches}.
    For this phase, we consider the polyomino $P'$ induced by the tile dual graph of $P$, recall~\cref{fig:polyomino-tile-dual}.
    This scales the shape parameters and geodesic distance by $\nicefrac{1}{\scale(P)}$:
    \[\hfill\scale({P'})=\nicefrac{\scale(P)}{\scale(P)}=1,\qquad \depth({P'})\approxeq\nicefrac{\depth(P)}{\scale(P)},\qquad\bottleneck({P'})\approxeq\nicefrac{\bottleneck(P)}{\scale(P)}.\hfill\]

    Let $\delta = \nicefrac{3\diam}{\scale(P)}$.
    We subdivide ${P'}$ using a recursive breadth-first-search approach and argue by induction.
    Given a boundary cell $v_{0}$ in ${P'}$, we determine a patch ${P'}_0\subseteq {P'}$ based on ${\texttt{BFS}({P'}, v_0, {\delta})\subseteq {P'}}$.
    We say that the components of $\overline{\texttt{BFS}}({P'}, v_0, {\delta})$ are either \emph{small} or \emph{large};
    a~component $R$ is small exactly if $R\subset\texttt{BFS}({P'}, v_0, 2{\delta})$, and large otherwise, see~\cref{fig:gamma-expansion-b}.

    We define ${P'}_0$ as the union of the initial \texttt{BFS} and the small components of its complement, meaning that for $\overline{\texttt{BFS}}({P'}, v_0, {\delta})$ with large components $R_1,\ldots, R_\ell$, ${P'}_0$ takes the shape
    \[\hfill{P'}_0\coloneqq\quad {P'}\setminus (R_1\cup\ldots\cup R_\ell).\hfill\]
    Due to~\cref{lem:bfs-cuts-are-short}, the cut ${\Gamma}_i$ that separates a component $R_i$ of $\overline{\texttt{BFS}}({P'}, v_0, {\delta})$ from ${P'}_0$ has length~$\BigO(\depth({P'}))$.
    By definition, the geodesic distance from each cell in $R_i$ to $v_0$ is at least~${\delta}$.

    We now iteratively subdivide each component of ${P'}\setminus{P'}_0$ by simply increasing the maximal depth of our \texttt{BFS} from $v_0$ by another ${\delta}$ units and again considering large and small components of the corresponding subdivision separately, as illustrated in~\cref{fig:gamma-expansion-c,fig:gamma-expansion-d}.

    To obtain a partition of $P$, we map each patch ${P'}_i$ to the \tiles in $P$ that its cells correspond to.
    Since ${P}$ is a simple polyomino, the dual graph of our patches forms a tree $T$ rooted at~${P}_0$.

    Consider any patch ${P'}_i$ and recall that, due to~\cref{lem:bfs-cuts-are-short}, all cuts induced by \texttt{BFS} have individual length $\BigO(\depth({P'}))$.
    Tracing along the boundaries of tiles, we conclude that the corresponding cuts in $P$ have individual length $\BigO(\depth({P'})\scale(P)))=\BigO(\depth(P))$.
    Due to triangle inequality, it follows that any two cells in each patch $P_i$ have geodesic distance $\BigO(\diam+\depth(P))$.
    From this, we conclude that the area of $P_i$ is bounded by $\BigO((\diam+\depth(P))^2)$.
    It directly follows that for any patch $P_j$ with hop distance~${k\in\mathbb{N}^+}$ to $P_i$ in~$T$, the geodesic distance between two cells in $P_i$ and $P_j$ is bounded by $\BigO(k(\diam+\depth(P)))$.
    The union of patches in a subtree $T'$ of $T$ with height $k$ therefore has area $\BigO((k(\diam+\depth(P)))^2)$.

    \begin{figure}[htb]
        \begin{subfigure}[t]{0.47\textwidth}%
            \centering%
            \includegraphics[page=1]{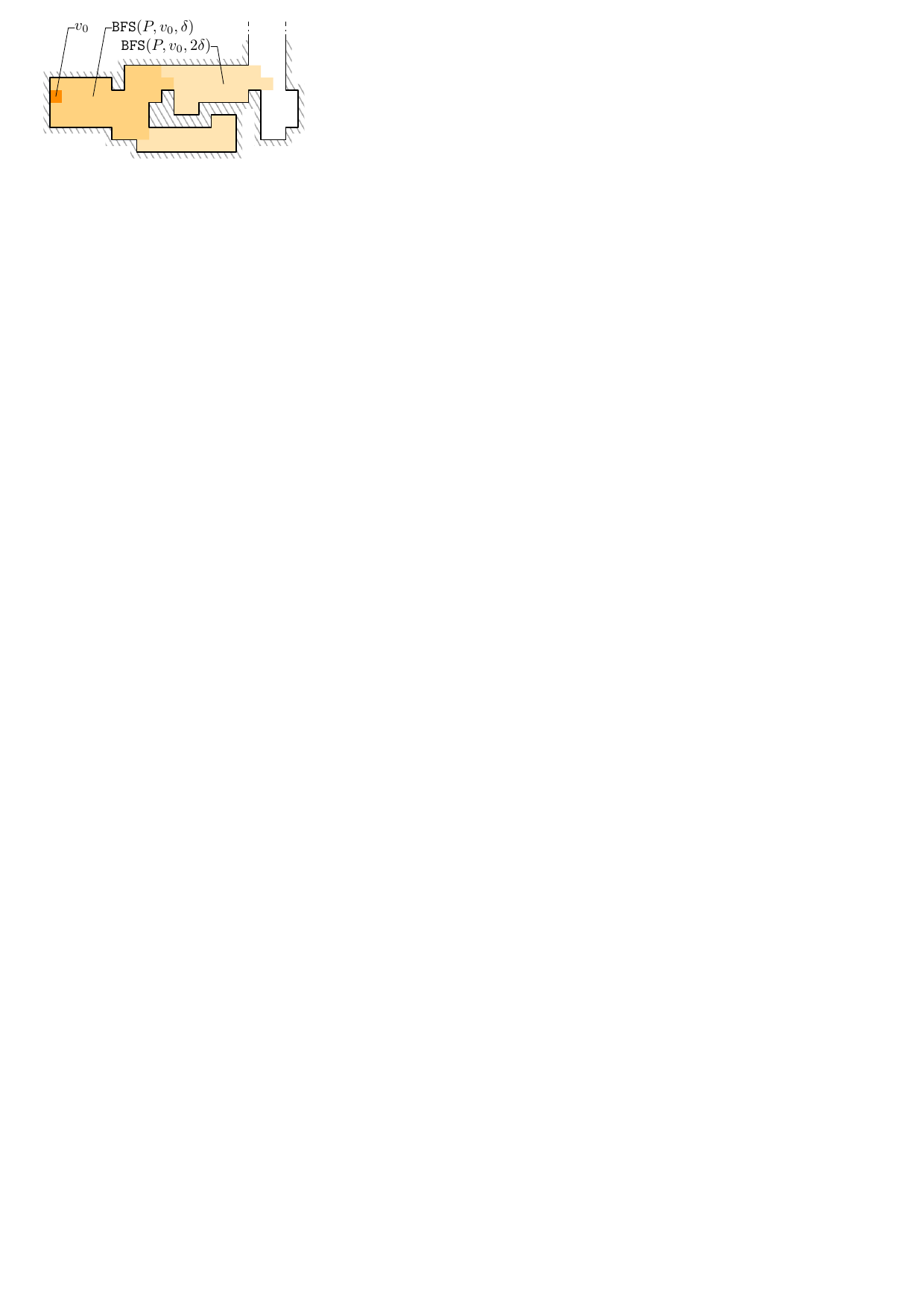}%
            \caption{We determine $v_0$ and compute $\texttt{BFS}({P'}, v_0, \delta)$.}
            \label{fig:gamma-expansion-a}
        \end{subfigure}%
        \hfil%
        \begin{subfigure}[t]{0.47\textwidth}%
            \centering%
            \includegraphics[page=2]{./figures/bfs-cuts}%
            \caption{$\overline{\texttt{BFS}}({P'}, v_0, \delta)$ has large and small components.}
            \label{fig:gamma-expansion-b}
        \end{subfigure}%
        \par\vspace{1em}%
        \begin{subfigure}[t]{0.47\textwidth}%
            \centering%
            \includegraphics[page=3]{./figures/bfs-cuts}%
            \caption{The patch ${P'}_0$ is the union of $\texttt{BFS}({P'}, v_0, \delta)$ and the small components of~$\overline{\texttt{BFS}}({P'}, v_0, \delta)$.}
            \label{fig:gamma-expansion-c}
        \end{subfigure}%
        \hfil%
        \begin{subfigure}[t]{0.47\textwidth}%
            \centering%
            \includegraphics[page=4]{./figures/bfs-cuts}%
            \caption{We continue the breadth-first-search in $P\setminus P_0$.}
            \label{fig:gamma-expansion-d}
        \end{subfigure}%
        \caption{Phase (I): We divide $P'$ into patches of area $\BigO((\delta+\depth(P'))^2)$.}
        \label{fig:determining-patches}
    \end{figure}

    \descriptionlabel{Phase (II).}
    We use this partition of $P$ into patches to subdivide the instance into disjoint tasks that can be solved in parallel; recall that our target makespan is $\BigO(\nicefrac{(\diam+\depth(P))^2}{\bottleneck(P)})$.
    The patches are spatially disjoint and all have \scale at least $\scale(P)$, as well as area $\BigO((\diam+\depth(P))^2)$.
    \cref{prop:n-by-scale-sorting} therefore implies that the patches can be locally reconfigured in parallel, by schedules of makespan $\BigO(\nicefrac{(\diam+\depth(P))^2}{\bottleneck(P)})$.
    In order to solve the original instance, it therefore remains to make each patch a subproblem that can be solved independently.

    We argue that we can efficiently move robots into their target patches.
    In~Phase (I), we gave an upper bound of $\BigO(k(\diam+\depth(P)))$ on the geodesic distance between cells in patches that have hop distance at most $k\in\mathbb{N}^+$ in $T$.
    We now provide a lower bound:
    The~geodesic distance between cells in patches that are not in a parent-child or sibling relationship in $T$ is at least $\diam$, as the distance between cells in any patch and its ``grandparent'' patch according to~$T$ is at least $\diam$ by construction, see Phase (I).

    It follows that the target cell of each robot is either in the same patch as its initial cell, or in a parent or sibling thereof.
    To realize the movement of agents between patches, we thus simply form the spatial union $F_i$ of each patch $P_i$ and its children according to $T$.
    Each of the resulting subpolyominoes $F_{i}$ has area $\BigO((\diam+\depth(P))^2)$.
    As $T$ is bipartite, we can split them into two sets $\mathcal{F}_{A}$ and $\mathcal{F}_{B}$, each comprised of pairwise spatially disjoint subpolyominoes.

    \descriptionlabel{Phase (III).}
    It remains to show that we can efficiently exchange agents between patches.
    Note that, as~$P$ is simple, the number of agents that need to cross any cut in either direction is equal to that for the opposite direction.

    By construction, every pair of patches that needs to exchange agents between one another is fully contained in some $F_i\in(\mathcal{F}_{A}\cup\mathcal{F}_{B})$.
    We proceed in three iterations:
    By applying~\cref{prop:n-by-scale-sorting} to each of the patches in $\mathcal{F}_{A}$ in parallel, we swap agents across cuts by swapping them with agents moving in the opposite direction.
    We repeat this process for $\mathcal{F}_{B}$ and finally perform a parallel reconfiguration of the individual patches, which allows us to place every robot in its target cell.
    Each iteration takes $\BigO(\nicefrac{(\diam+\depth(P))^2}{\bottleneck(P)})$ transformations.
\end{proof}

% !TeX root = 00-main.tex
\subsection{Bounded makespan and stretch based on bottleneck}
\label{subsec:bounded-stretch-based-on-bottleneck}

Finally, this section concerns itself with the transfer of results from~\cref{subsec:bounds-for-scaled-polyominoes} to arbitrary polyominoes.
This requires highly intricate local mechanisms; we start by establishing additional tools and observations.

\medskip
\descriptionlabel{\Skeleton.} A \emph{\skeleton} of a polyomino $P$ is a connected, $\skeleton$-scaled subpolyomino $S\subseteq P$ with $\skeleton = \floor{\nicefrac{\bottleneck(P)}{4}}$, as illustrated in~\cref{fig:skeleton-example}.
Such a \skeleton{} can easily be determined as the union of all $\skeleton\times\skeleton$ squares in $P$ that are aligned with the same $\skeleton\times\skeleton$ integer grid.

\medskip
\descriptionlabel{\Hinterland.}
The \emph{\hinterland} of a \skeleton \tile $t$ corresponds to the union of all $2\skeleton\times 2\skeleton$ squares in $P$ that fully contain $t$, see~\cref{fig:hinterland-example}.
We will show that at least one such square always exists, and their union forms a convex polyomino with \bottleneck at least $\lambda$.

\begin{figure}[htb]
    \begin{subfigure}[t]{0.5\textwidth}%
        \centering%
        \includegraphics[scale=1,page=6]{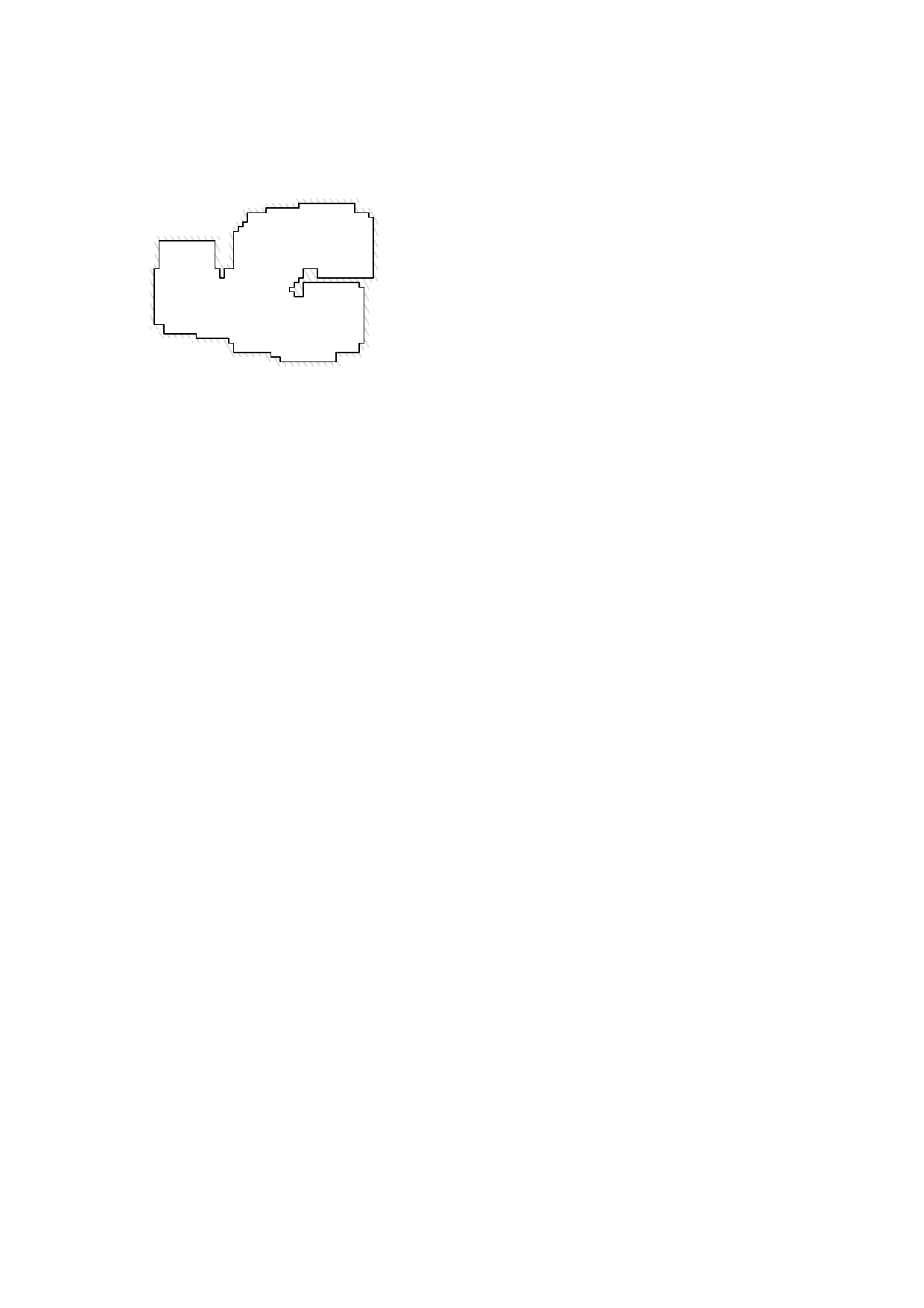}%
        \caption{A polyomino $P$ and its \skeleton $S$ (cyan).}
        \label{fig:skeleton-example}
    \end{subfigure}%
    \hfil%
    \begin{subfigure}[t]{0.5\textwidth}%
        \centering%
        \includegraphics[page=1]{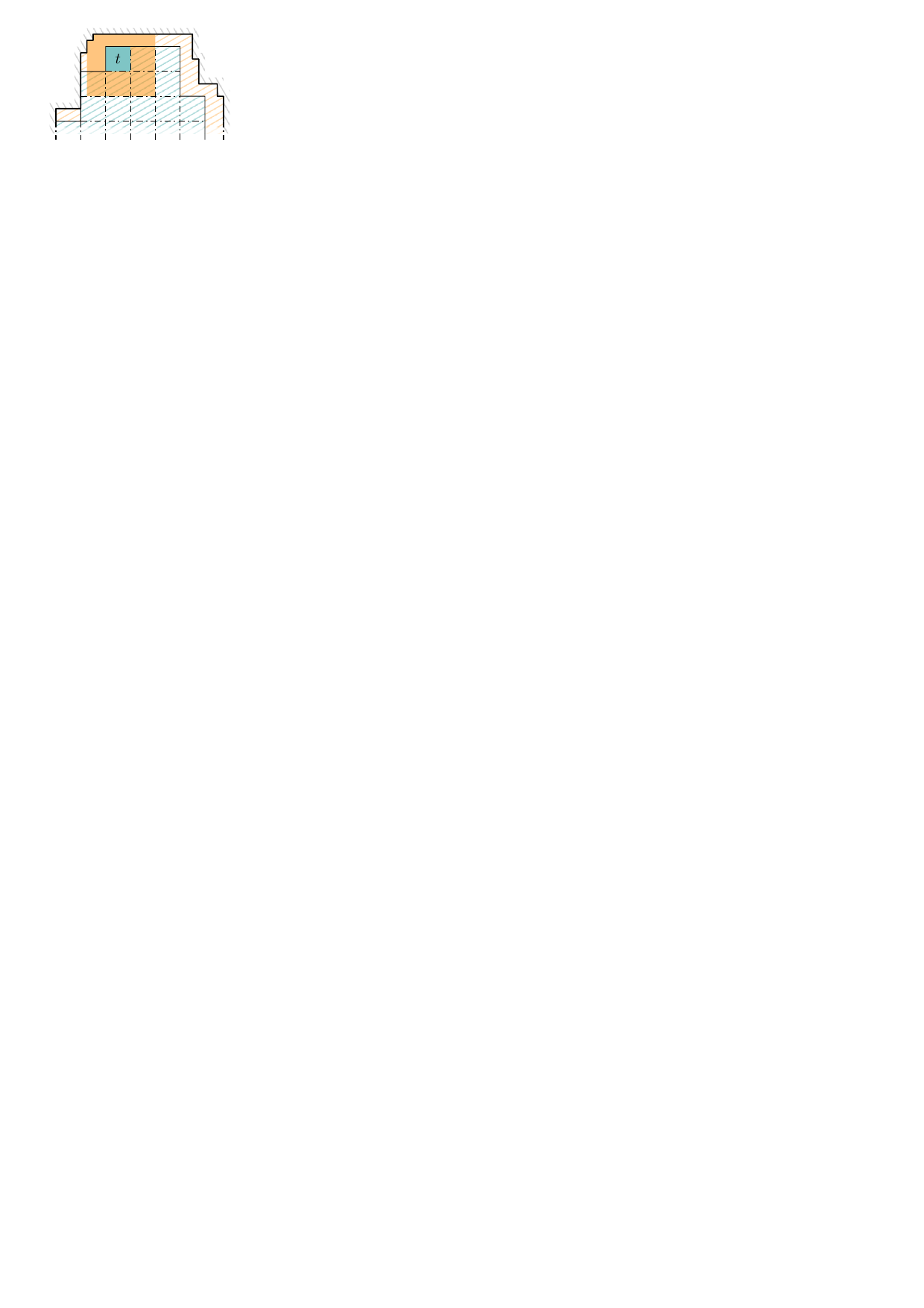}%
        \caption{An illustration of a \skeleton \tile $t$'s \hinterland.}
        \label{fig:hinterland-example}
    \end{subfigure}%
    \hfil%
    \centering%
    \caption{The central tools used in our proof of~\cref{thm:sorting-in-n-by-bottleneck}.}
    \label{fig:bottleneck-sorting-tools}
    \hfil%
\end{figure}
To swap agents from $P\setminus S$ into the \skeleton~$S$, we exploit the \hinterland{}s of its \tiles.
We first prove the existence of a \skeleton $S\subseteq P$ as above, and that its \hinterland{}s fully cover~$P$.
This allows us to apply techniques from~\cref{subsec:bounds-for-scaled-polyominoes} to arbitrary polyominoes.

It is critical that $S$ forms a connected subpolyomino of $P$.
The following can be observed by sliding a square of size  $\bottleneckfrac{2}\times\bottleneckfrac{2}$ along the boundary of $P$:

\begin{observation}
    \label{obs:bottleneck-square-cover}
    We observe that a polyomino $P$ has a maximal cover $\bottleneckHalfCover$ by squares of size $\bottleneckfrac{2}\times\bottleneckfrac{2}$, the center cells of which induce a connected subpolyomino.
\end{observation}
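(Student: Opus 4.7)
My plan is to derive both assertions—that $\bottleneckHalfCover$ covers every cell of $P$ and that its center cells induce a connected subpolyomino—from a single sliding argument that converts any failure into a non-trivial cut through $P$ of length strictly less than $\bottleneck(P)$, contradicting the definition of the \bottleneck{}.

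For coverage, I would fix a cell $c \in P$ and consider the inclusion-maximal axis-aligned rectangle $R \subseteq P$ with $c \in R$. If both sides of $R$ are at least $\bottleneck(P)/2$, then some $\bottleneckfrac{2}\times\bottleneckfrac{2}$ square in $R$ contains $c$ and we are done. Otherwise, the shorter side of $R$ has length strictly less than $\bottleneck(P)/2$; combining it with the boundary cells that block further extensions of $R$ in that direction (which exist by maximality of $R$) yields a non-trivial cut of length less than $\bottleneck(P)$ through $P$, contradicting the \bottleneck{} assumption.

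For connectivity, let $C$ denote the set of center cells of squares in $\bottleneckHalfCover$. Two squares whose centers are orthogonally adjacent differ by a unit translation, so it suffices to connect any two elements of $C$ by a sequence of unit translations that preserve containment in $P$. I would sweep a $\bottleneckfrac{2}\times\bottleneckfrac{2}$ square around $\partial P$, flush with the boundary where possible and pivoting around corners elsewhere. Coverage supplies a valid tangent placement at every boundary cell, and consecutive unit steps along $\partial P$ correspond to unit shifts of the square; since $P$ is simply connected, this sweep produces a connected cyclic trace $L \subseteq C$. Any interior valid placement can then be slid one unit at a time toward $\partial P$ until its center joins $L$, since a failed unit slide would once again expose a non-trivial cut of length less than $\bottleneck(P)$.

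The main obstacle lies in the sliding behavior near reflex corners of $\partial P$, where the tangent square must pivot rather than simply translate. Here one must verify that each pivot admits a valid intermediate placement whose center is orthogonally adjacent to the previous one. The \bottleneck{} hypothesis is essential at this step: it precludes two reflex features from lying close enough to jam the sweep, so every pivot can be completed by unit shifts without the square ever leaving $P$.
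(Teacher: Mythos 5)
Your high-level plan---deriving both coverage and connectivity from the non-existence of short non-trivial cuts, and realizing connectivity by sliding a $\bottleneckfrac{2}\times\bottleneckfrac{2}$ square along the boundary---matches the paper's intent, which justifies this observation with exactly that one-sentence sliding remark. However, both halves of your execution contain genuine gaps. For coverage, the maximal-rectangle argument does not produce a short cut: maximality of $R$ in the short direction only guarantees that \emph{some} cell beyond the top row and \emph{some} cell beyond the bottom row lie outside $P$, and these two blocking cells may sit at opposite ends of $R$, so the path joining them through $R$ has length up to the short side \emph{plus} the long side, which is not bounded by $\bottleneck(P)$. Even if you repair this by growing a maximal \emph{square} (as the paper does in \cref{lem:distance-to-non-trivial-cut}), you still owe an argument that the resulting cut is \emph{non-trivial}: a cut of length less than $\bottleneck(P)$ is perfectly consistent with the definition of the bottleneck whenever its endpoints are joined by an equally short boundary path, which is exactly what happens near convex corners and staircase features, so no contradiction is reached there.

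For connectivity, the step you yourself flag as ``the main obstacle'' is essentially the entire content of the observation, and the reason you offer for why it goes through is false as stated: the bottleneck does \emph{not} preclude reflex features from lying close together. A staircase boundary has reflex corners at pairwise distance $2$ while every nearby cut is trivial, so $\bottleneck(P)$ can be arbitrarily large; the sweep does in fact survive a staircase (the square translates diagonally by alternating unit steps), but not for the reason your argument supplies. Similarly, ``a failed unit slide exposes a non-trivial cut of length less than $\bottleneck(P)$'' is an assertion, not a derivation---a failed slide only certifies that the square is tangent to the boundary somewhere, and for the boundary trace $L$ you must additionally show that two tangent placements meeting at a reflex corner can be joined by unit shifts that sweep through a quadrant of $P$ near that corner, deducing a short \emph{non-trivial} cut from any obstruction. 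As written, the proposal restates the claim at precisely the points where it needs proof.
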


%\begin{figure}[htb]
%    \hfil%
%    \begin{subfigure}[t]{0.5\textwidth}%
%        \centering%
%        \includegraphics[page=1]{skeleton-overview}%
%        \caption{A simple polyomino $P$.}
%        \label{fig:skeleton-intro-P}
%    \end{subfigure}%
%    \hfil%
%    \begin{subfigure}[t]{0.5\textwidth}%
%        \centering%
%        \includegraphics[page=2]{skeleton-overview}%
%        \caption{$P$ alongside its skeleton $S$.}
%        \label{fig:skeleton-intro-S}
%    \end{subfigure}%
%    \hfil%
%    \caption{We identify the \skeleton of a polyomino, to which we locally assign all remaining cells. }
%    \label{fig:skeleton-intro}
%\end{figure}
We use this to prove that a connected \skeleton exists and can be determined efficiently.

\begin{lemma}
    \label{lem:skeleton-exists}
    For any polyomino $P$ with $\bottleneck(P)\geq8$, we can identify a \skeleton $S \subseteq P$ in polynomial time.
    The \skeleton $S$ is $\skeleton$-scaled for $\skeleton = \floor{\bottleneckfrac{4}}$, and every $2\skeleton\times 2\skeleton$ square inside~$P$ contains at least one of its scaled tiles.
\end{lemma}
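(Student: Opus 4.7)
The plan is to set $\skeleton = \lfloor \bottleneck(P)/4 \rfloor$ (which is at least $2$ since $\bottleneck(P) \geq 8$), fix an arbitrary alignment of the $\skeleton \times \skeleton$ integer grid, and let $S$ be the union of all $\skeleton$-grid-aligned $\skeleton \times \skeleton$ tiles that are fully contained in $P$. By construction $S$ is $\skeleton$-scaled, and $S$ can be determined in polynomial time by iterating over the $\BigO(n/\skeleton^2)$ candidate tiles inside the bounding box of $P$ and checking each for containment in $\BigO(\skeleton^2)$ time. What remains is to verify the two geometric guarantees: (a) every $2\skeleton \times 2\skeleton$ square in $P$ contains a $\skeleton$-tile of $S$, and (b) $S$ is connected, so that $S$ qualifies as a \skeleton.

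Property (a) follows from a one-dimensional pigeonhole argument applied to each coordinate independently: among any $\skeleton+1$ consecutive integers one is a multiple of $\skeleton$, so any integer interval of length $2\skeleton$ contains a $\skeleton$-grid-aligned sub-interval of length $\skeleton$. Taking the product across the two coordinates of any $2\skeleton \times 2\skeleton$ square $R \subseteq P$ yields a $\skeleton$-tile fully inside $R$, which therefore lies in $S$.

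The main obstacle is establishing the connectivity of $S$; for this, I will lean on the cover $\bottleneckHalfCover$ from \cref{obs:bottleneck-square-cover}, whose squares have side $\bottleneck(P)/2 \geq 2\skeleton$ and whose center cells induce a connected sub-polyomino of $P$. For any cover square $Q$, the $\skeleton$-tiles of $S$ that lie in $Q$ form a contiguous rectangular sub-grid of the $\skeleton$-grid, and are thus connected in the tile-adjacency graph of $S$. For two cover squares whose centers are adjacent cells, their overlap has dimensions at least $(\bottleneck(P)/2 - 1) \times \bottleneck(P)/2$, which is still wide enough in every direction to host a $\skeleton$-grid-aligned $\skeleton$-tile (the pigeonhole argument goes through as long as the shorter side has length at least $2\skeleton-1$, and $\bottleneck(P)/2 - 1 \geq 2\skeleton - 1$ follows from $\bottleneck(P) \geq 4\skeleton$). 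This shared tile lies in $S$ and bridges the two blocks, so following the connected centers of $\bottleneckHalfCover$ yields a path in the tile-adjacency graph of $S$ between every pair of tiles in $\bigcup_{Q} (S \cap Q)$.

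The subtle step that I expect to demand the most care is verifying $S = \bigcup_{Q \in \bottleneckHalfCover}(S \cap Q)$, i.e., that every $\skeleton$-tile of $S$ is actually contained in at least one cover square; without this, the chain of overlaps above need not reach every tile of $S$. I plan to argue this by exploiting the maximality of $\bottleneckHalfCover$: for any $\skeleton$-tile $T \subseteq P$, a sliding-window argument analogous to the one underlying \cref{obs:bottleneck-square-cover} places a $(\bottleneck(P)/2)$-square inside $P$ that contains $T$, which by maximality must belong to $\bottleneckHalfCover$. If this extension is delicate near the boundary of $P$, a fallback is to adaptively augment $\bottleneckHalfCover$ with additional $(\bottleneck(P)/2)$-squares that contain each remaining boundary tile of $S$ while verifying that center-cell connectivity is preserved.
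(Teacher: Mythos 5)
Your proposal is correct and follows essentially the same route as the paper: the same construction of $S$ as the union of grid-aligned $\skeleton$-tiles contained in $P$, the same reliance on \cref{obs:bottleneck-square-cover} to chain connectivity through the squares of $\bottleneckHalfCover$ (your overlap-rectangle bridge is a minor variant of the paper's inductive tile-addition argument), and an explicit pigeonhole for the $2\skeleton\times 2\skeleton$ containment property that the paper leaves implicit. The step you flag as delicate --- that every $\skeleton$-tile of $S$ lies in some square of $\bottleneckHalfCover$ --- is exactly the step the paper closes by expanding each cover square into its maximal cover by $\skeleton\times\skeleton$ squares and invoking simplicity of $P$, so your sketch is at a comparable level of rigor there.
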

\begin{proof}
    We identify the subpolyomino $S$ by cutting $P$ into \tiles along a grid of size~$\skeleton\times\skeleton$, such that $S$ is the union of all such \tiles that are fully contained in $P$.
    Let $\bottleneckHalfCover$ refer to the maximal cover of $P$ by $2\skeleton\times 2\skeleton$ squares.

    We first show that any square of size $\skeleton \times \skeleton$ inside $P$ must be fully contained in an element of~$\bottleneckHalfCover$.
    Inside any square of size $2\skeleton\times 2\skeleton$, there exist exactly $(\skeleton+1)^2$ unique $\skeleton \times \skeleton$ squares, which form a maximal cover of the square.
    By replacing each square of $\bottleneckHalfCover$ with the corresponding cover, we therefore obtain a cover of $P$ by $\skeleton \times \skeleton$ squares.
    As $P$ is a simple polyomino, this corresponds to all such squares inside~$P$, which implies that every~$\skeleton\times\skeleton$ square in $S$ is contained in a $\bottleneckfrac {2}\times\bottleneckfrac {2}
    $ square in $P$.

    It remains to show that $S$ is a connected polyomino.
    We show this by incrementally selecting from $\bottleneckHalfCover$, while simultaneously marking $\skeleton\times\skeleton$ \tiles induced by the grid that are contained in the selected squares, and therefore also in $S\subset P$.

    Due to~\cref{obs:bottleneck-square-cover}, the center cells of the squares in $\bottleneckHalfCover$ form a connected polyomino.
    Therefore, after starting our construction from an arbitrary square, iteratively adding squares that are offset by one cell either in horizontal or vertical direction, relative to a square of~$\bottleneckHalfCover$ that is already part of the construction, will eventually yield all of $\bottleneckHalfCover$.

    We argue by induction.
    Let $P_i$ refer to the spatial union of previously selected $2\skeleton$-squares from $\bottleneckHalfCover$, and let $S_i$ refer to the set of grid-aligned $\skeleton$-squares in $P_i$.

    For $i=1$, both $P_1$ and $S_1$ form connected polyominoes, so the assumption holds.

    For $P_i \neq P$, there exists a $2\skeleton$-square $Q$ in $P$ that has not been selected yet, but has a center cell adjacent to one of those that define $P_i$.
    We form $P_{i+1}\coloneqq P_i\cup Q$ and simultaneously expand $S_i$ with all $\skeleton$-squares in $Q$ that are aligned with the $\skeleton$ grid, forming $S_{i+1}$.
    As every \skeleton tile is contained in at least one element of $\bottleneckHalfCover$, $P_i=P$ implies that $S_i=S$.

    It remains to argue that if $S_i$ is connected, the same will be true for $S_{i+1}$:
    If no squares are added in step $i$, this is true.
    Therefore, assume that at least one $\skeleton\times\skeleton$ square $D$ is added to $S_{i+1}$.
    This means that $D$ is not fully contained in $P_i$.
    Let~$Q$ now refer to the $2\skeleton$-square added to $P_i$ to form $P_{i+1}$, such that its center cell is adjacent some $2\skeleton$-square $Q'$ in $P_i$.
    As~$D$ is not contained in $Q'$, we conclude it touches the boundary of $Q$.
    Because the side length of~$Q$ is at least twice as long as the side length of~$D$, a second square~$D'$ next to $D$ fits into $Q$.
    As $D'$ is already contained in $Q'$ and therefore in $S_{i}$, we conclude that $S_{i+1}$ is connected.
\end{proof}

As an important secondary result, we conclude that the \hinterland{}s fully cover the polyomino:
Every $\bottleneckfrac{2}\times\bottleneckfrac{4}$ square inside $P$ contains a \skeleton \tile, which in turn has as its \hinterland the union of all such squares that contain it.
\begin{corollary}
    \label{cor:hinterland-cover}
    A polyomino $P$ is fully covered by the union of its \skeleton \tiles' \hinterland{}s.
\end{corollary}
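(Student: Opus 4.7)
The plan is to show, for an arbitrary cell $c\in P$, that there exists a \skeleton \tile whose \hinterland contains $c$; ranging over all cells then yields the coverage statement.

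First I would invoke \cref{obs:bottleneck-square-cover} to pick from $\bottleneckHalfCover$ a $\bottleneckfrac{2}\times\bottleneckfrac{2}$ square $Q'\subseteq P$ that contains $c$. Since $2\skeleton = 2\floor{\bottleneckfrac{4}}\leq\bottleneckfrac{2}$, a $2\skeleton\times 2\skeleton$ sub-square $Q\subseteq Q'\subseteq P$ can be positioned so as to still contain $c$; this is a routine shift inside the $\bottleneckfrac{2}\times\bottleneckfrac{2}$ cell grid of $Q'$ and I would not belabour it.

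Next I would apply \cref{lem:skeleton-exists} to $Q$: since $Q$ is a $2\skeleton\times 2\skeleton$ square inside $P$, it contains at least one scaled \tile $t$ of the \skeleton $S$. By definition, the \hinterland of $t$ is the union of all $2\skeleton\times 2\skeleton$ squares in $P$ that fully contain $t$, and $Q$ is precisely such a square; hence $Q$ is contained in the \hinterland of $t$, and so is $c\in Q$.

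I do not anticipate a genuine obstacle here: the corollary is really a packaging of \cref{obs:bottleneck-square-cover}, which supplies the enclosing large square, together with \cref{lem:skeleton-exists}, which produces the \skeleton \tile inside it. The only mildly technical point is the inequality $2\skeleton\leq\bottleneckfrac{2}$, which lets us fit a correctly sized $2\skeleton\times 2\skeleton$ square inside $Q'$ around $c$.
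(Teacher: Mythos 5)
Your proposal is correct and follows essentially the same route as the paper, which justifies the corollary in one sentence by combining the square cover from \cref{obs:bottleneck-square-cover} with the guarantee of \cref{lem:skeleton-exists} that every $2\skeleton\times 2\skeleton$ square in $P$ contains a \skeleton \tile, whose \hinterland then contains that square. You merely make explicit the minor bookkeeping step (fitting a $2\skeleton\times 2\skeleton$ square around the given cell inside the covering square, using $2\skeleton\leq\bottleneckfrac{2}$) that the paper leaves implicit.
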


Finally, we demonstrate a method for efficient reconfiguration of an arbitrary \skeleton \tile's \hinterland in order to swap robots into and out of the \skeleton.
\begin{lemma}
    \label{lem:hinterland-reconfiguration}
    Given two configurations of a \skeleton \tile's \hinterland in a \solvable polyomino, we can efficiently compute an \applicable schedule of makespan~$\BigO(\skeleton)$.
\end{lemma}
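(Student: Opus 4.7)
The hinterland $H$ of $t$ is, by definition, a union of $2\skeleton \times 2\skeleton$ squares of $P$ that all contain $t$, so it fits inside a $3\skeleton \times 3\skeleton$ bounding box with $t$ at its center and has area $\BigO(\skeleton^2)$. The naive bound from \cref{thm:parallel-bubblesort} would only yield $\BigO(\skeleton^2)$, so I will set up a more refined permutation-routing problem in the spirit of \cref{prop:n-by-scale-sorting} that exploits the fact that every cell of $H$ shares a $2\skeleton$-square with $t$.

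Concretely, I define an auxiliary graph $G$ on $V(H)$ in which two cells are adjacent iff they lie in a common $2\skeleton \times 2\skeleton$ square of $P$ that contains $t$. By the definition of the hinterland, every cell of $H$ is $G$-adjacent to every cell of $t$, and the cells of $t$ pairwise share the $2\skeleton$-square whose existence is guaranteed by \cref{lem:skeleton-exists}; hence $t$ induces the clique $K_{\skeleton^2}$ in $G$. A short argument shows $G$ is at least $\skeleton^2$-connected: after removing any set of fewer than $\skeleton^2$ vertices, at least one cell of $t$ survives and remains $G$-adjacent to every other remaining cell of $H$. The Banerjee--Richards bound with $h = \skeleton^2$ and $G_h = K_{\skeleton^2}$ (whose routing number is $2$) then yields $rt(G) \in \BigO(|V(H)|/\skeleton^2) = \BigO(1)$, so a constant number of matchings on $G$ suffice to permute the agents of $H$ arbitrarily.

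Each such matching is realized by a schedule on $P$ of makespan $\BigO(\skeleton)$ as follows. Each matching edge lies in some $2\skeleton$-square of $P$ containing $t$; exploiting the convexity of $H$, I cover the cells involved in the matching by $\BigO(1)$ such squares, organized by position relative to $t$ (roughly by the quadrant of the bounding box into which they extend). Each resulting sub-matching lives inside a single $2\skeleton \times 2\skeleton$ square, where it can be executed via one application of \rotatesort (or of \cref{prop:n-by-scale-sorting} with $\scale = \skeleton$) in $\BigO(\skeleton)$ transformations. Because all covering squares share the tile $t$ and therefore interfere on it, I process them sequentially; with $\BigO(1)$ squares per matching and $\BigO(1)$ matchings overall, the total makespan is $\BigO(\skeleton)$.

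The principal obstacle is precisely this last step: parallel sorts inside two overlapping $2\skeleton$-squares would clash on $t$, so serialization across the $\BigO(1)$ covering squares is essential. A secondary but still delicate point is arguing that $H$ truly admits such a $\BigO(1)$-square cover even when none of the four grid-aligned corner $2\skeleton$-squares of the bounding box happen to lie in $P$; this is exactly where the convexity of $H$ claimed after the \hinterland definition, together with the simplicity of $P$, is used.
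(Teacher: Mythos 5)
Your reduction to \textsc{Permutation Routing} on the auxiliary graph $G$ is sound as far as it goes: every cell of $H$ does share a $2\skeleton\times2\skeleton$ square with all of $t$, so $t$ induces a $K_{\skeleton^2}$ that certifies $\skeleton^2$-connectivity, and Banerjee--Richards gives $rt(G)\in\BigO(1)$. The gap is in the step you yourself call ``secondary'': realizing one matching of $G$ in $\BigO(\skeleton)$ steps. Your plan requires covering all matched cells (in the worst case, all of $H$) by $\BigO(1)$ many $2\skeleton$-squares that each contain $t$. Such a cover need not exist. Take $P$ to locally look like the region $x+y\le c$ with a staircase boundary; then the $2\skeleton$-squares containing $t$ are exactly those with offsets $(i,j)$, $i+j\le\skeleton$, and the cells on the anti-diagonal fringe of $H$ (the top-right corners of the squares with $i+j=\skeleton$) are each covered by only \emph{one} such square. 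So $\Theta(\skeleton)$ squares are needed, and this region is perfectly orthoconvex --- the convexity of the \hinterland, which you lean on, does not rule it out. Since all of these squares contain $t$, no two can be sorted in parallel, so your serialization costs $\Theta(\skeleton)\cdot\BigO(\skeleton)=\BigO(\skeleton^2)$, which is only the trivial bound.

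The paper avoids this entirely by routing with \emph{adjacent} swaps rather than long-range ones: it observes that $H$ is a connected orthoconvex grid piece of width and height $\BigO(\skeleton)$, applies the Alpert et al.\ bound (\cref{thm:convex-grid-graph-routing}) to get a routing sequence of $\BigO(\skeleton)$ matchings in the \emph{dual graph} of $H$, and realizes each such matching in $\BigO(1)$ transformations via \cref{lem:matching-realization}. In other words, the $\BigO(\skeleton)$ factor is spent on the number of matchings (each cheap), not on the cost of realizing a constant number of expensive ones. If you want to salvage your route, you would need a way to realize a matching of $G$ in $\BigO(\skeleton)$ steps that does not funnel $\Omega(\skeleton)$ sequential sorts through the shared \tile $t$; absent that, switch to the convex-grid-piece routing theorem as the paper does.
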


To prove this, we make use of the following result obtained by Alpert et al.~\cite{ALPERT2022101862} for the routing number of convex grid graphs, where $w(P)$ and $h(P)$ refer to the width and height of $P$, respectively.
\begin{theorem}[Alpert et al.~\cite{ALPERT2022101862}]
    \label{thm:convex-grid-graph-routing}
    Let $P$ be a connected convex grid piece.
    Then the routing number of $P$ satisfies the bound $rt(P)\leq C(w(P) + h(P))$ for some universal constant $C$.
\end{theorem}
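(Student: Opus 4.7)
The plan is to combine Theorem~\ref{thm:convex-grid-graph-routing} with Lemma~\ref{lem:matching-realization} via the \textsc{Permutation Routing} framework, echoing the strategy used in the proof of Proposition~\ref{prop:n-by-scale-sorting}. First, I would verify that the \hinterland~$H$ of a \skeleton \tile $t$ is a connected convex grid piece whose bounding box has width and height $\BigO(\skeleton)$. Each $2\skeleton\times 2\skeleton$ square that constitutes $H$ fully contains $t$ and therefore extends at most $\skeleton$ cells beyond $t$ in any of the four directions, so $H$ fits into a $3\skeleton\times 3\skeleton$ bounding box. Connectivity is immediate because every constituent square contains $t$. For convexity (in the row/column sense required by Alpert et al.), I would observe that within any row or column of the bounding box, every constituent square that meets that row or column contributes an interval that contains the projection of $t$ onto that row or column; the intervals therefore pairwise overlap, and their union is contiguous.

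Given the convexity and $\BigO(\skeleton)$ side lengths, Theorem~\ref{thm:convex-grid-graph-routing} directly yields $rt(H)\leq C(w(H)+h(H))=\BigO(\skeleton)$, and the accompanying algorithm produces a concrete routing sequence of $\BigO(\skeleton)$ matchings in the dual graph of $H$ that transforms the source labeling into the target labeling. Each such matching can then be realized in $\BigO(1)$ transformations via Lemma~\ref{lem:matching-realization}, provided $H$ is itself \solvable. Since $\bottleneck(P)\geq 8$ implies $\skeleton\geq 2$, the central tile $t$ alone contains a grid of overlapping $2\times 2$ squares, and every cell of $H$ lies in some $2\skeleton\times 2\skeleton$ constituent that also contains $t$; extending the $2\times 2$ cover of $t$ outward through each constituent therefore produces a cover of $H$ by $2\times 2$ squares with a connected intersection graph, so Lemma~\ref{lem:simple-solvable-if-twofat} certifies \solvability of $H$.

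The main obstacle will be ensuring that the matching-realization primitive stays local to~$H$, so that agents outside~$H$ (which belong to other subproblems in the enclosing algorithm of Section~\ref{subsec:bounded-stretch-based-on-bottleneck}) are not disturbed. This is precisely what the $2\times 2$ cover argument above buys: the proof of Lemma~\ref{lem:matching-realization} only ever operates inside $3\times 3$ bounding boxes around pairs of covering $2\times 2$ squares, so if the cover lies entirely within $H$, so does every auxiliary rotation performed by the realization. Multiplying the $\BigO(\skeleton)$ routing matchings by the $\BigO(1)$ transformations per matching yields the claimed makespan of $\BigO(\skeleton)$, and every step of the construction — computing the $2\times 2$ cover, running the Alpert et al.\ routing algorithm on the $\BigO(\skeleton^2)$ vertices of $H$, and realizing each matching — is polynomial.
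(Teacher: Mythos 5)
Your proposal does not prove the statement in question. The statement is Theorem~\ref{thm:convex-grid-graph-routing} itself --- the claim that \emph{every} connected convex grid piece $P$ has routing number $rt(P)\leq C(w(P)+h(P))$. Your very first sentence invokes Theorem~\ref{thm:convex-grid-graph-routing} as a black box (``combine Theorem~\ref{thm:convex-grid-graph-routing} with Lemma~\ref{lem:matching-realization}\dots''), and the second paragraph again ``applies'' it to the hinterland $H$. You cannot establish a theorem by assuming it; what you have actually written is a proof of Lemma~\ref{lem:hinterland-reconfiguration}, the downstream application in which the routing bound of Alpert et al.\ is converted into an $\BigO(\skeleton)$-makespan schedule for a \skeleton \tile's \hinterland. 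Indeed, your argument closely mirrors the paper's proof of that lemma (convexity and the $3\skeleton\times 3\skeleton$ bounding box of $H$, \solvability, realization of each routing matching in $\BigO(1)$ steps via Lemma~\ref{lem:matching-realization}), and as a proof of \emph{that} lemma it would be essentially correct --- your observations about locality of the $2\times 2$ cover and about $\skeleton\geq 2$ are sensible additions. But none of this addresses the routing-number bound for general convex grid pieces.

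A genuine proof of the stated theorem would have to argue directly about permutation routing on an arbitrary connected convex grid piece --- for instance, by a column-sort/row-sort decomposition adapted to rows and columns of varying lengths and offsets, showing that a constant number of alternating row and column routing rounds (each of length $\BigO(w(P))$ or $\BigO(h(P))$ by the routing number of a path) suffices to realize any permutation. The paper itself does not supply such a proof; it imports the result verbatim from Alpert et al.~\cite{ALPERT2022101862}. So the correct response to this prompt is either to reproduce (a sketch of) their argument or to note explicitly that the statement is cited rather than proved --- not to derive a consequence of it.
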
%

\begin{proof}[Proof of~\cref{lem:hinterland-reconfiguration}]
    Consider a $\skeleton\times\skeleton$ \skeleton \tile $t\subset P$ of a \solvable polyomino $P$, and let $H$ refer to its \hinterland.
    Recall that we assume $\bottleneck(P)\geq 8$, so $\skeleton=\floor{\nicefrac{\bottleneck(P)}{4}}\geq 2$.
    As~$H$ is the union of all $2\skeleton\times2\skeleton$ squares in $P$ that contain $t$, it is thus \solvable and orthoconvex.
    We apply~\cref{thm:convex-grid-graph-routing}:
    Alpert et al.~\cite{ALPERT2022101862} presented a constructive proof in the form of an algorithm, which we can use to compute a routing sequence of length $\BigO(w(H) + h(H))=\BigO(\skeleton)$ between any two configurations of $H$, based on its dual graph.
    Such a routing sequence corresponds to a series of matchings in the dual graph of $H$ that exchange tokens of adjacent vertices.
    Sequentially realizing these matchings by swapping adjacent agents as outlined in~\cref{lem:matching-realization}, we can arbitrarily reorder $H$ in $\BigO(\skeleton)$ transformations.
\end{proof}

This provides us with all necessary tools to prove the following generalization of~\cref{prop:n-by-scale-sorting}, which, in turn, is a central tool for our proof of~\cref{thm:bottleneck-stretch}.

\begin{theorem}
    \label{thm:sorting-in-n-by-bottleneck}
    For any two configurations of a \solvable polyomino $P$ of area $n$, we can compute an \applicable schedule of makespan $\BigO(\nicefrac{n}{\bottleneck(P)})$ in polynomial time.
\end{theorem}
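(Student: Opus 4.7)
The plan is to generalize the approach of \cref{prop:n-by-scale-sorting} from $\scale$-scaled polyominoes to arbitrary \solvable polyominoes, substituting the \skeleton $S \subseteq P$ from \cref{lem:skeleton-exists} for the role of a scaled domain, with $\skeleton = \floor{\nicefrac{\bottleneck(P)}{4}}$ playing the role of $\scale$. We may assume $\bottleneck(P) \geq 8$ (so $\skeleton \geq 2$), since otherwise \cref{thm:parallel-bubblesort} already yields a schedule of makespan $\BigO(n) = \BigO(\nicefrac{n}{\bottleneck(P)})$.

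First, I would partition the cells of $P$ into \emph{extended tiles} $E_1, \dots, E_k$, one per \skeleton \tile $t_i$, such that each $E_i$ contains $t_i$ and is fully contained in the watershed of $t_i$. By \cref{cor:hinterland-cover}, every cell of $P$ lies in at least one such watershed, so we can construct this partition greedily. Each extended tile has size $\BigO(\skeleton^2)$, so there are $\BigO(\nicefrac{n}{\skeleton^2})$ of them, and their adjacency structure inherited from the skeleton tile dual graph of $S$ is connected.

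Second, I would construct a routing graph $G_{\bottleneck}$ on the cells of $P$, mirroring the role of $G_\scale$ from \cref{prop:n-by-scale-sorting}: two cells are adjacent in $G_{\bottleneck}$ iff they lie in a common extended tile or in two adjacent extended tiles. Each extended tile then forms a clique of order $\Theta(\skeleton^2)$, and adjacent extended tiles are joined by a complete bipartite subgraph, making $G_{\bottleneck}$ at least $(\skeleton^2-1)$-connected and providing a connected $\skeleton^2$-vertex induced subgraph. The theorem of Banerjee and Richards, combined with $rt(K_{\skeleton^2}) = 2$, then yields $rt(G_{\bottleneck}) = \BigO(\nicefrac{n}{\skeleton^2})$ together with an efficient algorithm for computing the corresponding sequence of matchings.

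Finally, I would realize each such matching by actual parallel agent motions. Every swap in $G_{\bottleneck}$ lies either entirely inside a single watershed or inside the union of the watersheds of two adjacent \skeleton \tiles, a region whose bounding box has side length $\BigO(\skeleton)$. Partitioning the matching edges into $\BigO(1)$ color classes based on the skeleton tile positions modulo a small constant (in the spirit of \cref{lem:matching-realization}) ensures that same-colored regions are pairwise spatially disjoint, so all of them can be reconfigured in parallel via \cref{lem:hinterland-reconfiguration}. Each matching is thus realized in $\BigO(\skeleton)$ transformations, and multiplying by the $\BigO(\nicefrac{n}{\skeleton^2})$ matchings yields the advertised makespan of $\BigO(\nicefrac{n}{\bottleneck(P)})$. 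The main obstacle will be swaps that straddle two adjacent watersheds: \cref{lem:hinterland-reconfiguration} only directly reconfigures a single watershed, so I would need to show that the union of two adjacent watersheds is contained in an orthoconvex region of width and height $\BigO(\skeleton)$, allowing another application of \cref{thm:convex-grid-graph-routing}; this convexity claim demands a careful case analysis of the local boundary geometry around the shared interface of $t_i$ and $t_j$.
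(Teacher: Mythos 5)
Your proposal matches the paper's proof essentially step for step: the same grouping of cells into disjoint $\Theta(\skeleton^2)$-size sets via \skeleton{} \tiles and their \hinterland{}s, the same clique-based routing graph fed into the Banerjee--Richards bound with $rt(K_{\skeleton^2})=2$, and the same realization of each matching by constantly many rounds of parallel \hinterland{} reconfiguration, for a total of $\BigO(\nicefrac{n}{\bottleneck(P)^2})\cdot\BigO(\skeleton)$ transformations. The obstacle you flag at the end --- handling swaps that straddle two adjacent \hinterland{}s --- is precisely the step the paper also relies on (it sorts the \hinterland{}s of matched \tile{} pairs ``in unison'' after covering the \skeleton{} \tile{} dual graph by four matchings), so your plan is faithful to the published argument.
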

\begin{proof}
    We again argue based on \textsc{Permutation Routing}.
    Given a polyomino $P$ with dual graph $\dual(P) = (V,E)$ and  ${C_1,C_2\in\configurations(P)}$, we define a graph $G_\bottleneck$, such that a routing sequence over $G_\bottleneck$ can be transformed into a schedule $C_1\rightrightarrows C_2$ of makespan $\BigO(\nicefrac{n}{\bottleneck(P)})$.

    To start, we compute a \skeleton $S$ of $P$.
    Then, we group the cells of $P$ into disjoint sets of size $\Theta(\skeleton^2)$ by first grouping those in $S$ based on which \skeleton \tile contains them, see~\cref{fig:skeleton-tiles}, and then adding each remaining cell to the group of one \tile that has a \hinterland which contains it as shown in~\cref{fig:skeleton-grouping}.

    \begin{figure}[htb]
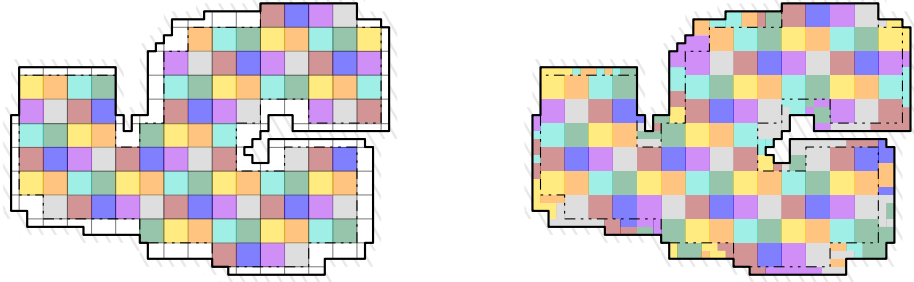

        \hfil%
        \begin{subfigure}[t]{0.47\textwidth}%
            \centering%
            \includegraphics[page=3]{skeleton-overview}%
            \caption{Grouping cells of $P$ based on \skeleton \tiles.}
            \label{fig:skeleton-tiles}
        \end{subfigure}%
        \hfil%
        \begin{subfigure}[t]{0.47\textwidth}%
            \centering%
            \includegraphics[page=4]{skeleton-overview}%
            \caption{Grouping cells of $P$ based on \hinterland{}s.}
            \label{fig:skeleton-grouping}
        \end{subfigure}%
        \hfil%
        \caption{We identify the \skeleton of a polyomino, to which we locally assign all remaining cells. }
        \label{fig:skeleton-overview}
    \end{figure}

    We now define $G_\bottleneck = (V,E_\bottleneck)$ such that $\{u,v\}\in E_\bottleneck$ exactly if the cells $u$ and $v$ are located in the same group as defined above, or their groups belong to adjacent \skeleton \tiles.
    As a result, the cells of each group form a clique of order at least $\skeleton^2$.
    Furthermore, the cliques that correspond to adjacent \skeleton \tiles are connected by a set of complete bipartite edges.

    Hence, $G_\bottleneck$ is $h$-connected for $h\geq\skeleton^2 - 1$ and contains $\nicefrac{n}{\skeleton^2}$ cliques of order at least $\skeleton^2$.
    Due to Banerjee and Richards~\cite{BanerjeeR17}, we conclude that $rt(G_\bottleneck)$ is in $\BigO(\nicefrac{n}{\bottleneck(P)^2})$ and can therefore compute a sequence of $\BigO(\nicefrac{n}{\bottleneck(P)^2})$ matchings to route between any two labelings of $\dual(P)$.

    It remains to argue that we can realize all swaps induced by any matching in $G_\bottleneck$ by means of $\BigO(\bottleneck(P))$ transformations.
    We do so in two phases.
    Recall that the \hinterland of each tile has a bounding box no greater than $3\skeleton\times 3\skeleton$, so the same applies to our previously defined clique groups.
    We can therefore sort $\Omega(n)$ of these groups in parallel by applying~\cref{lem:hinterland-reconfiguration}; this takes a total of $\BigO(\skeleton)=\BigO(\bottleneck(P))$ transformations.
    This immediately enables us to perform any swap between cells that are located in the same group, and we turn our attention to swaps between adjacent groups instead.

    Observe that the dual graph of the \skeleton \tiles of $S$ is a minor of $G_\bottleneck$; contracting the vertices in each of the tile cliques will give us a corresponding grid graph.
    Swaps between adjacent tile cliques can therefore be realized in four phases, by covering this grid graph by matchings and sorting the \hinterland{}s of matched groups in unison, thereby exchanging agents between the two.
    As $\BigO(\nicefrac{n}{\bottleneck(P)^2})$ matchings can route between any two configurations of $P$, we conclude that this method yields schedules of makespan $\BigO(\nicefrac{n}{\bottleneck(P)})$.
\end{proof}

The arguments of \Cref{thm:sorting-in-n-by-bottleneck} can also be applied in reverse:
The union of a connected set of \skeleton \tiles and their \hinterland{}s has a large bottleneck, so the following applies.
\begin{corollary}
    \label{cor:sorting-watershed-union}
    A subpolyomino $P'\subset P$ with area $m$ that consists of the \hinterland{}s of a connected set of \skeleton \tiles in a polyomino $P$ can be reordered in $\BigO(\nicefrac{m}{\bottleneck(P)})$ transformations.
\end{corollary}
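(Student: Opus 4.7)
The plan is to mirror the proof of~\cref{thm:sorting-in-n-by-bottleneck} essentially verbatim on the subpolyomino $P'$, viewing the prescribed connected set $T$ of \skeleton \tiles of $P$ as the \skeleton of $P'$ and reusing the \hinterland{}s of the tiles in $T$ unchanged. The key observation that unlocks this is that each such \hinterland is, by definition, a union of $2\skeleton\times 2\skeleton$ squares in $P$, and all of these squares are included in $P'$ by construction. Consequently,~\cref{lem:hinterland-reconfiguration} still reconfigures any single \hinterland within $P'$ in $\BigO(\skeleton)$ transformations, and no separate analysis of $\bottleneck(P')$ is required.

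First, I would partition the cells of $P'$ into $\Theta(m/\skeleton^2)$ groups of size $\Theta(\skeleton^2)$, one per tile of $T$, by assigning each cell to some tile whose \hinterland contains it; that this covers all of $P'$ is immediate from the construction of $P'$ (and would otherwise follow from~\cref{cor:hinterland-cover}). Then, I would build the auxiliary graph $G_\bottleneck$ on the cells of $P'$ exactly as in the theorem: cells in a common group form a clique, and groups belonging to adjacent tiles of $T$ are joined by all possible bipartite edges. Since $T$ is connected, $G_\bottleneck$ is connected, $h$-connected for some $h\geq\skeleton^2-1$, and contains $\Theta(m/\skeleton^2)$ cliques of order at least $\skeleton^2$. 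Applying the Banerjee--Richards bound~\cite{BanerjeeR17} as before yields a routing sequence of length $\BigO(m/\bottleneck(P)^2)$ between any two configurations of $P'$.

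Next, I would realize each matching in this sequence in $\BigO(\bottleneck(P))$ parallel transformations, exactly as in~\cref{thm:sorting-in-n-by-bottleneck}: intra-group swaps through simultaneous applications of~\cref{lem:hinterland-reconfiguration} to every \hinterland, and inter-group swaps in four phases obtained by covering the tile dual graph of $T$ with four matchings (two each of horizontal and vertical edges) and jointly sorting each matched pair of adjacent \hinterland{}s. Multiplying the $\BigO(m/\bottleneck(P)^2)$ matchings by the $\BigO(\bottleneck(P))$ cost per matching yields the claimed makespan $\BigO(m/\bottleneck(P))$.

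The only non-routine point is confirming that nothing in the local arguments secretly depends on a global property of $P$ that is lost when restricting to $P'$; in particular, I expect the main obstacle to amount to the bookkeeping check that the cliques, matchings, and \hinterland reconfigurations all remain well-defined and non-interfering inside $P'$. This reduces to the fact that every $2\skeleton\times 2\skeleton$ square used by~\cref{lem:hinterland-reconfiguration} is one of the squares whose union forms $P'$, so the entire construction goes through with purely local modifications.
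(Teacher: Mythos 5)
Your proposal is correct and matches the paper's intended argument: the paper justifies this corollary with only the one-line remark that ``the arguments of \cref{thm:sorting-in-n-by-bottleneck} can also be applied in reverse,'' and your proof fills in exactly that — re-running the skeleton/hinterland routing construction inside $P'$ with the given connected tile set as skeleton, which works because every $2\skeleton\times 2\skeleton$ square needed by \cref{lem:hinterland-reconfiguration} is by definition part of $P'$. Your inlined version even has the minor advantage of not needing to separately verify that $\bottleneck(P')=\Omega(\bottleneck(P))$, which the paper asserts without proof.
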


\theoremBottleneckStretch*
\begin{proof}
    Assume that $\scale(P)<\bottleneck(P)$.
    We briefly outline the three phases of our approach:
    \smallskip
    \begin{enumerate}[(I)]
        \item {
            We compute a \skeleton $S\subset P$ which we split into patches $S_i$ according to Phase~(I) of~\cref{prop:scale-stretch}.
            To each of these patches, we add cells of its \skeleton \tile's \hinterland{}s.
            This partitions $P$ into patches $P_i$, each with a \skeleton patch~$S_i\subset S$ such that $S_i\subseteq P_i$.
        }
        \item {
            We use the rooted dual tree $T$ of the \skeleton patches $S_i$ and,
            for each patch $S_i$ with children $S_\ell, \ldots, S_{\ell+k}$ according to $T$, we combine the patches $P_i$, $P_\ell, \ldots, P_{\ell+k}$ to a (not necessarily connected) region $F_i$ that can be reordered in $\BigO(\nicefrac{(\diam+\depth(P))^2}{\bottleneck(P)})$.
        }
        \item Finally, we exploit these combined regions to place all agents at their destination.
    \end{enumerate}

    \descriptionlabel{Phase (I).}
    We proceed by computing a \skeleton $S\subset P$ of \scale $\skeleton$, which we partition into $\skeleton$-\scale{}d patches using \texttt{BFS}, analogously to Phase (I) of~\cref{prop:scale-stretch}.
    Since~${S}$ is a simple polyomino, the dual graph of our patches $S_i$ forms a tree $T$ rooted at~${S}_0$.

    \begin{figure}[htb]
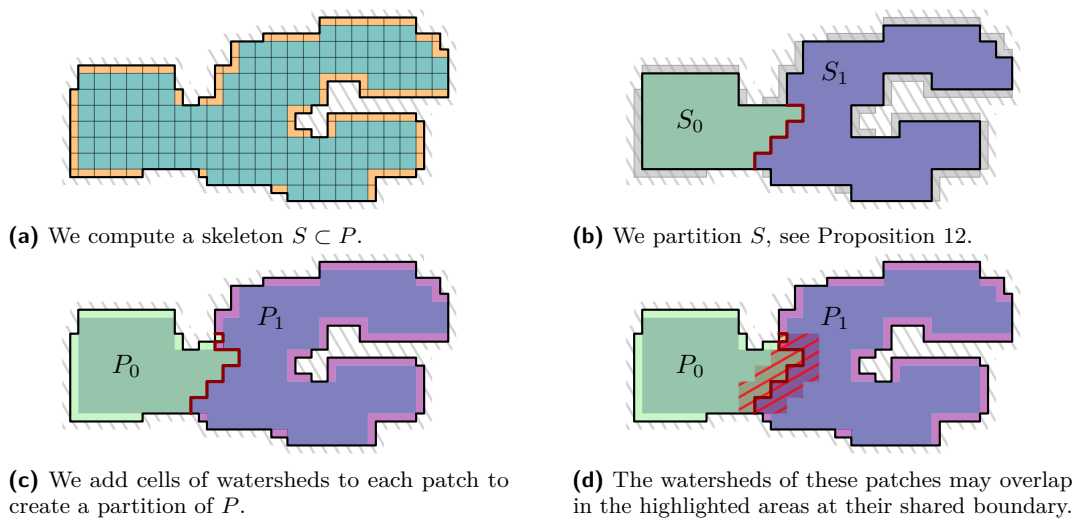

        \begin{subfigure}[t]{0.47\textwidth}%
            \centering%
            \includegraphics[page=6]{skeleton-overview}%
            \caption{We compute a skeleton $S\subset P$.}
            \label{fig:partition-based-on-skeleton-a}
        \end{subfigure}%
        \hfill%
        \begin{subfigure}[t]{0.47\textwidth}%
            \centering%
            \includegraphics[page=7]{skeleton-overview}%
            \caption{We partition $S$, see~\cref{prop:scale-stretch}.}
            \label{fig:partition-based-on-skeleton-b}
        \end{subfigure}\par%
        \begin{subfigure}[t]{0.47\textwidth}%
            \centering%
            \includegraphics[page=8]{skeleton-overview}%
            \caption{We add cells of \hinterland{}s to each patch to create a partition of $P$.}
            \label{fig:partition-based-on-skeleton-c}
        \end{subfigure}%
        \hfill%
        \begin{subfigure}[t]{0.47\textwidth}%
            \centering%
            \includegraphics[page=9]{skeleton-overview}%
            \caption{The \hinterland{}s of these patches may overlap in the highlighted areas at their shared boundary.}
            \label{fig:partition-based-on-skeleton-d}
        \end{subfigure}%
        \caption{We identify the \skeleton of a polyomino, to which we locally assign all remaining cells. }
        \label{fig:partition-approach}
    \end{figure}

    We extend this partition of $S$ to a partition of $P$ by the following process.
    For each patch~$S_i$, we create a patch $P_i$ such that $S_i\subseteq P_i \nsubseteq S$ by greedily assigning cells of~$P\setminus S$ that are located in the watershed of a \skeleton \tile in $S_i$ to $P_i$, as depicted in~\cref{fig:partition-approach}.
    Such a watershed exists for each cell due to~\cref{cor:hinterland-cover}, so the patches $P_i$ split $P$ into disjoint, albeit not necessarily connected, regions.
    For reference's sake, we consider the dual tree $T$ of the patches $S_i$ even in the context of the patches of $P$.
    Recall that there exists a bijection between patches of $S$ and $P$, as $S_i\subseteq P_j$ exactly if $i=j$, and $S_i\cap S_j = \emptyset$ otherwise.

    \descriptionlabel{Phase (II).}
    In order to use this partition of $P$ to achieve bounded stretch, we require the ability to reconfigure the patches in parallel.
    Consider any patch $P_i$ and the corresponding \skeleton patch $S_i$.
    If the entire \hinterland of each \skeleton \tile in $S_i$ were included in $P_i$, we could reconfigure it in $\BigO(\nicefrac{(\diam+\depth(P))^2}{\bottleneck(P)})$ transformations due to~\cref{cor:sorting-watershed-union}.
    However, the \hinterland{}s of $S_i$ can overlap with those of its parent, children, or sibling patches in $T$.
    This makes reconfiguration more involved than in~\cref{prop:scale-stretch}.

    We can, however, easily identify the relevant ``conflict'' zones as cells in \hinterland{}s of \skeleton \tiles in hop distance at most two to a \skeleton \tile in a different patch, see~\cref{fig:partition-based-on-skeleton-d}.
    To deal with these, we define a region $F_i$ for each patch $P_i$, based on its corresponding \skeleton patch $S_i$ and the dual tree $T$.

    Let $F^S_i\subseteq S$ refer to the union of~$S_i$ with its child patches in $S$ according to $T$.
    To~obtain~$F_i$, we include the \hinterland{}s of all \skeleton \tiles in $F^S_i$ that have hop distance at least $2$ to a \skeleton \tile of $S\setminus F^S_i$ in the \tile dual graph of $S$.
    This process is illustrated in~\cref{fig:combining-patches}.

    \begin{figure}[htb]
        \begin{subfigure}[t]{0.3\textwidth}%
            \centering%
            \includegraphics[page=4]{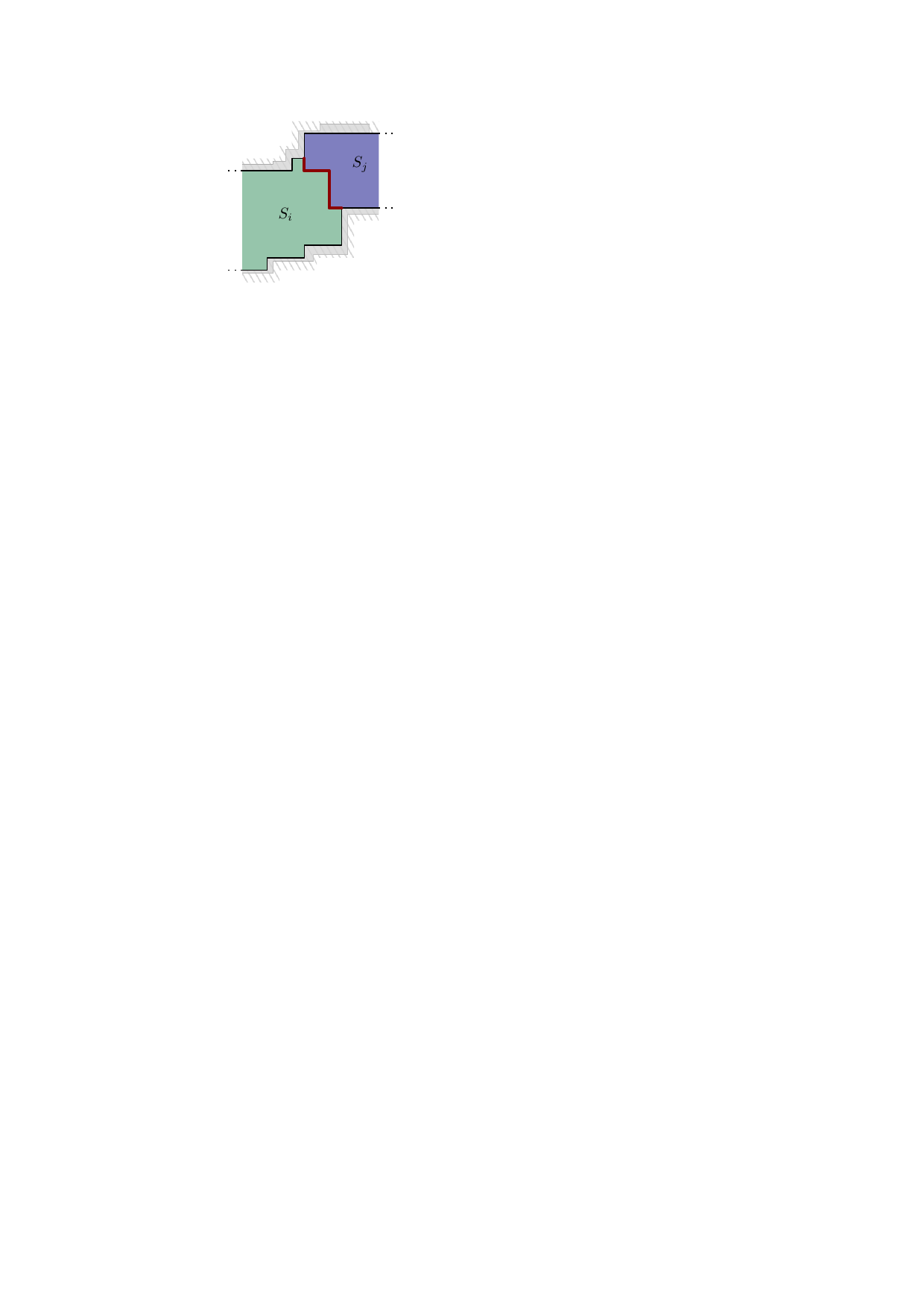}%
            \caption{Part of the dual tree $T$ of \skeleton patches in $S\subseteq P$.}
            \label{fig:combining-patches-tree}
        \end{subfigure}%
        \hfill%
        \begin{subfigure}[t]{0.3\textwidth}%
            \centering%
            \includegraphics[page=2]{skeleton-hinterland-merge}%
            \caption{We exclude \tiles that could induce conflicts, dashed here.}
            \label{fig:combining-patches-a}
        \end{subfigure}%
        \hfill%
        \begin{subfigure}[t]{0.3\textwidth}%
            \centering%
            \includegraphics[page=3]{skeleton-hinterland-merge}%
            \caption{The remaining \skeleton \tiles' \hinterland{}s form $F_h$ and $F_j$.}
            \label{fig:combining-patches-b}
        \end{subfigure}%
        \caption{The creation of regions $F_i$ and $F_h$: Local conflict avoidance based on the dual tree $T$.}
        \label{fig:combining-patches}
    \end{figure}

    This region $F_i\subset P$ is therefore a connected subpolyomino of $P$ with area $\BigO((\diam+\depth(P))^2)$ and \bottleneck at least $\skeleton = \floor{\bottleneckfrac{4}}$, so it can be reconfigured in $\BigO(\nicefrac{(\diam+\depth(P))^2}{\bottleneck(P)})$.

    For two patches $S_i$ and $S_j$ such that $S_i$ is the ``grandparent'' of $S_j$ according to $T$, the regions $F_i$ and $F_j$ are spatially disjoint.
    As $T$ is bipartite, we can divide the regions $F_i$ into two sets $\mathcal{F}_{A}$ and $\mathcal{F}_{B}$, each comprised of pairwise spatially disjoint subpolyominoes.

    \descriptionlabel{Phase (III).}
    It remains to show that we can efficiently exchange agents between patches.
    By construction, every pair of patches that needs to exchange agents between one another is fully contained in some $F_i\in(\mathcal{F}_{A}\cup\mathcal{F}_{B})$.
    We proceed in three iterations:
    By applying \cref{thm:sorting-in-n-by-bottleneck} to each of the patches in $\mathcal{F}_{A}$ in parallel, we swap agents across cuts by swapping them with agents moving in the opposite direction.
    We repeat this process for $\mathcal{F}_{B}$ and finally perform a parallel reconfiguration of the individual patches, which allows us to place every robot in its target cell.
    Each iteration takes $\BigO(\nicefrac{(\diam+\depth(P))^2}{\bottleneck(P)})$ transformations.
\end{proof}

    % !TeX root = 00-main.tex
\section{Conclusions and future work}

We provide a number of novel contributions for \textsc{Multi-Agent Path Finding} in simple polyominoes.
We establish a characterization for the existence of reconfiguration schedules, based on different shape parameters of the bounding polyomino.
Furthermore, we establish algorithmic methods that achieve worst-case optimal stretch for any instance in which the polyomino's \bottleneck does not exceed the instance's diameter by more than a constant factor.
There are a variety of directions in which these insights should give rise to further generalizations and applications.

\medskip
\descriptionlabel{Non-simple polyominoes.}
Our results regarding \solvability are directly applicable to non-simple polyominoes.
As noted in~\cref{sec:reconfigurability}, the geometric characterization for simple polyominoes is formed as a special case based on the dual graph of a polyomino.

For any non-simple polyomino that is either $2$-scaled or $2$-square-connected,~\cref{thm:parallel-bubblesort} and~\cref{prop:n-by-scale-sorting} are also directly applicable.
The same is not true for~\cref{thm:sorting-in-n-by-bottleneck}, as our definition of the \bottleneck based on cuts does not work in this case.
However, with a separate definition that accounts for the minimal distance between inner and outer boundaries, \cref{thm:sorting-in-n-by-bottleneck} may be applicable.

\medskip
\descriptionlabel{Permutation routing.}
Our results can be generalized to solid grid graph routing, which is a generalization of the findings of Alpert et al.~\cite{ALPERT2022101862}.
We provided results regarding bounded stretch for this setting, therefore tackling a special case of their Open Question 2.

\medskip
\descriptionlabel{Further questions.}
Our work is orthogonal to that of Demaine et al.~\cite{dfk+-cmprs-18,dfk+-cmprs-19}:
Their setting considered domains of large \depth in conjunction with large \bottleneck, i.e., the case that $\depth(P)\in\Omega(\diam)$ and $\bottleneck(P)\in\Omega(\diam)$.
We establish asymptotically worst-case optimal results for narrow domains, which implies that $\depth(P)\in\BigO(\diam)$ and $\bottleneck(P)\in\BigO(\diam)$.
In~particular, instances where $\bottleneck(P)\in\smallo(\diam)$ while $\depth(P)\in\omega(\diam)$, i.e., instances in which the gap between \bottleneck and \depth is unbounded relative to $\diam$, remain a challenge even for simple domains.
We conjecture that this question for simple domains is equivalent to that of bounded stretch for non-simple domains with limited \depth;
considering an instance of large \depth, we can create an analogous non-simple instance that features regularly distributed, small holes based on some grid graph.
This may motivate research into the special case of instances in which the \diam is less or equal to the circumference of the smallest hole in the domain.

All these questions remain for future work.

    \bibliography{references}
\end{document}